\newtheorem{theorem}{Theorem}
\newtheorem{proposition}{Proposition}
\newtheorem{lemma}[proposition]{Lemma}
\theoremstyle{remark}
\newcommand{\vocab}{\textit}
\renewcommand{\AA}{\mathcal{A}}
\newcommand{\CC}{\mathcal{C}}
\newcommand{\FF}{\mathcal{
C}}
\newcommand{\EE}{\mathbb{E}}
\newcommand{\RR
}{\mathbb{R}}
\newcommand{\DD}{\mathcal{D}}
\renewcommand{\epsilon}{\varepsilon}
\newcommand{\simiid}{\overset{\mathrm{iid}}{\sim}}
\newcommand{\eqdist}{\stackrel{d}{=}}
\DeclareMathOperator{\Var}{Var}
\newcommand{\Comments}{1}
\newcommand{\mynote}[3]{\ifnum\Comments=1\textcolor{#1}{#2: #3}\fi}
\newcommand\blfootnote[1]{%
  \begingroup
  \renewcommand\thefootnote{}\footnote{#1}%
  \addtocounter{footnote}{-1}%
  \endgroup
}
\title{Wisdom and Foolishness of Noisy Matching Markets}
\author{Kenny Peng\\ Nikhil Garg\\\{klp98,ngarg\}@cornell.edu\blfootnote{We thank Sophie Greenwood and Jon Kleinberg for helpful discussion and feedback.}}
\date{}
\begin{document}

\maketitle

\begin{abstract}
    We consider a many-to-one matching market where colleges share true preferences over students but make decisions using only independent noisy rankings. Each student has a \textit{true value} $v$, but each college $c$ ranks the student according to an independently drawn \textit{estimated value} $v + X_c$ for $X_c\sim \DD.$ We ask a basic question about the resulting stable matching: How noisy is the set of matched students? Two striking effects can occur in large markets (i.e., with a continuum of students and a large number of colleges). When $\DD$ is light-tailed, noise is fully attenuated: only the highest-value students are matched. When $\DD$ is long-tailed, noise is fully amplified: students are matched uniformly at random. These results hold for any distribution of student preferences over colleges, and extend to when only subsets of colleges agree on true student valuations instead of the entire market. More broadly, our framework provides a tractable approach to analyze implications of imperfect preference formation in large markets.
\end{abstract}

\section{Introduction}
In two-sided matching---such as between firms and workers, hospitals and residents, or colleges and students---noise is inevitable. Firms evaluate job applicants using limited and imperfect information from resumes and interviews. Students, when deciding which school to attend, may have only anecdotal or coarse knowledge. Given this noise, do the ``correct'' matches still form? 

This article addresses this question, in a setting in which (subsets of) colleges share true preferences based on a measure we refer to as student quality. In a noise-free setting, the highest quality students match with their most preferred colleges. However, suppose each college makes offers based on independent noisy estimates of student quality. Do the highest-quality students still match? In essence, we are interested in how ``local'' noise introduced in the evaluation processes of individual colleges is aggregated by the market to produce the ``global'' outcome. Perhaps nothing remarkable happens---individual colleges make noisy decisions, and the set of matched students is ``just as noisy.'' But one can also imagine two types of market-level effects that are more interesting: an \textit{attenuating} effect, in which noise cancels out, or an \textit{amplifying} effect, in which noise compounds. We show that in large markets the most extreme versions of both effects can occur.

In the basic model we analyze, students each have a true value $v\in \RR$. Each college $c$ ranks students according to an estimated value $v + X_c$, where $X_c$ is drawn i.i.d. from some distribution $\DD$ (i.e., colleges form preference lists according to a random utility model). Given these noisy college preferences and \textit{any} distribution of student preferences, we analyze the probability that a student with true value $v$ is matched in the corresponding stable matching. When there is a continuum of students and the number of colleges grows large, we find that:
\begin{enumerate}
    \item If $\DD$ is light-tailed (i.e., when the maximum order statistic concentrates), the probability a student matches approaches a step function; all students with true value above a certain cutoff are guaranteed to match, and all students with value below the cutoff are guaranteed not to match, as in the noiseless case. Noise is fully attenuated.
    \item If $\DD$ is long-tailed (i.e., when the hazard rate vanishes), the probability a student matches approaches the same constant independent of their true value. Noise is fully amplified.
\end{enumerate}
These results---\Cref{thm:attenuating,thm:amplifying}, respectively---reveal that matching markets can both strongly attenuate and amplify noise, suggesting that either a ``wisdom of crowds'' or ``foolishness of crowds'' can occur in matching markets. (For noise distributions falling in between the two regimes above---such as the exponential and Gumbel distribution---it seems that a more moderate effect in either direction can occur; finding a precise characterization is an interesting open question.) 
Of course, at the most granular level, match outcomes will always remain noisy: a high-quality student may not receive an offer from a specific individual college; however, \Cref{thm:attenuating} says that there are regimes where the student is guaranteed to match to \textit{some} college.

\Cref{thm:attenuating,thm:amplifying} reflect and suggest an underlying insight: that whether a student matches depends on their \textit{highest} estimated values---for a student, what really matters is that at least one college views them highly. As an analogy, imagine that two students take tests. The two students have true skill levels $v$ and $v'$, but on a given test, the students receive scores $v + X$ and $v + X'$ where $X$ and $X'$ are i.i.d. drawn noise from some distribution $\DD$. The students take many tests. For someone who is trying to determine which student has a higher true skill level, the natural approach is to evaluate the students based on their average performance---an approach that will certainly work due to the law of large numbers (as long as $\DD$ has finite mean). The ``approach'' taken by a matching market, however, more closely resembles evaluating students based on their \textit{best} performance. This approach sometimes works very well (when the maximum of many samples from a distribution converges) but can also work very poorly (when the tail of the distribution is heavy), revealing our results' two regimes.
\begin{figure}
    \centering
    \begin{subfigure}{0.32\textwidth}
        \centering
        \includegraphics[width=\linewidth]{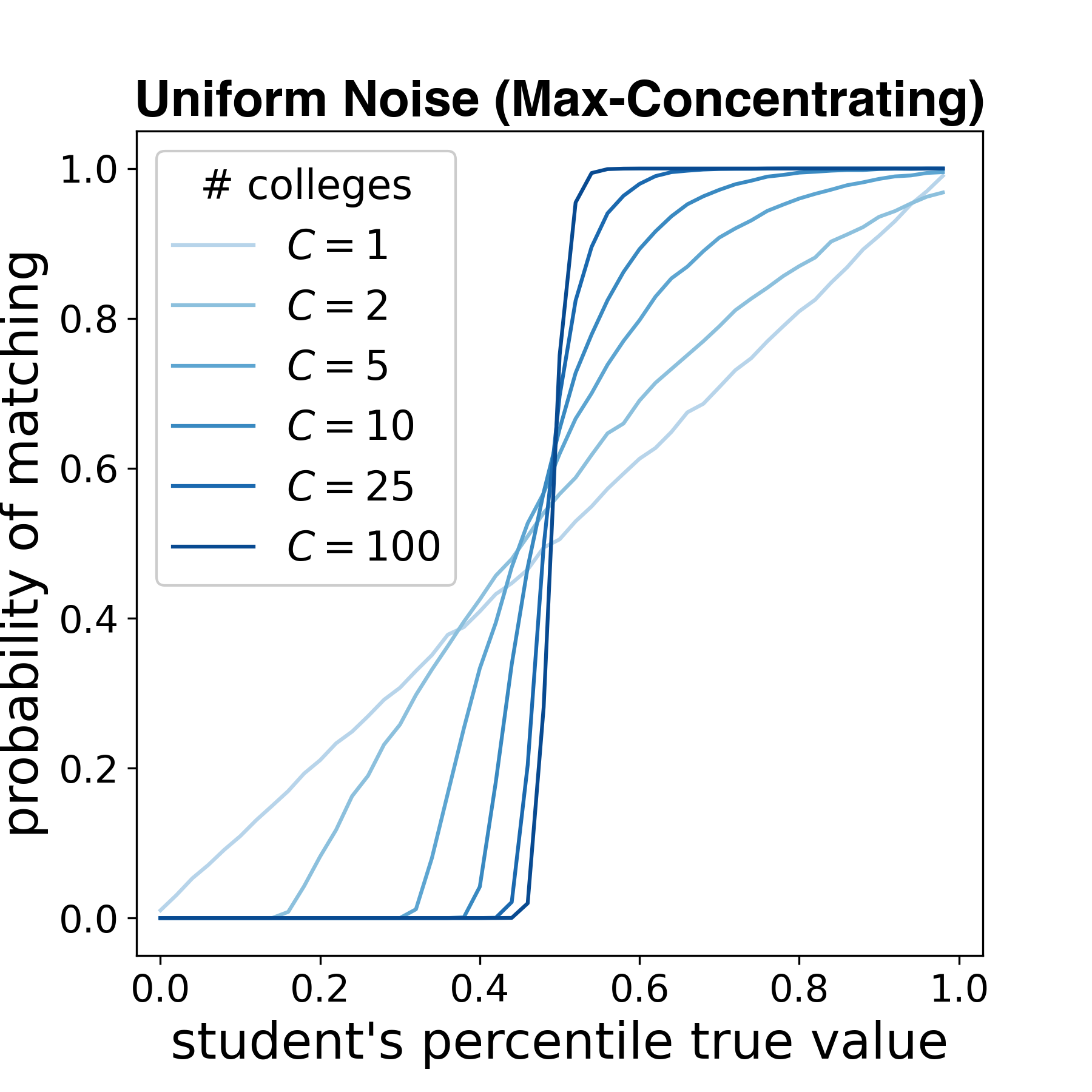}
        \caption{\centering Noise Attenuation as $C\rightarrow \infty$ (Wisdom of Crowds)}
        \label{fig:sub1}
    \end{subfigure}
    \begin{subfigure}{0.32\textwidth}
        \centering
        \includegraphics[width=\linewidth]{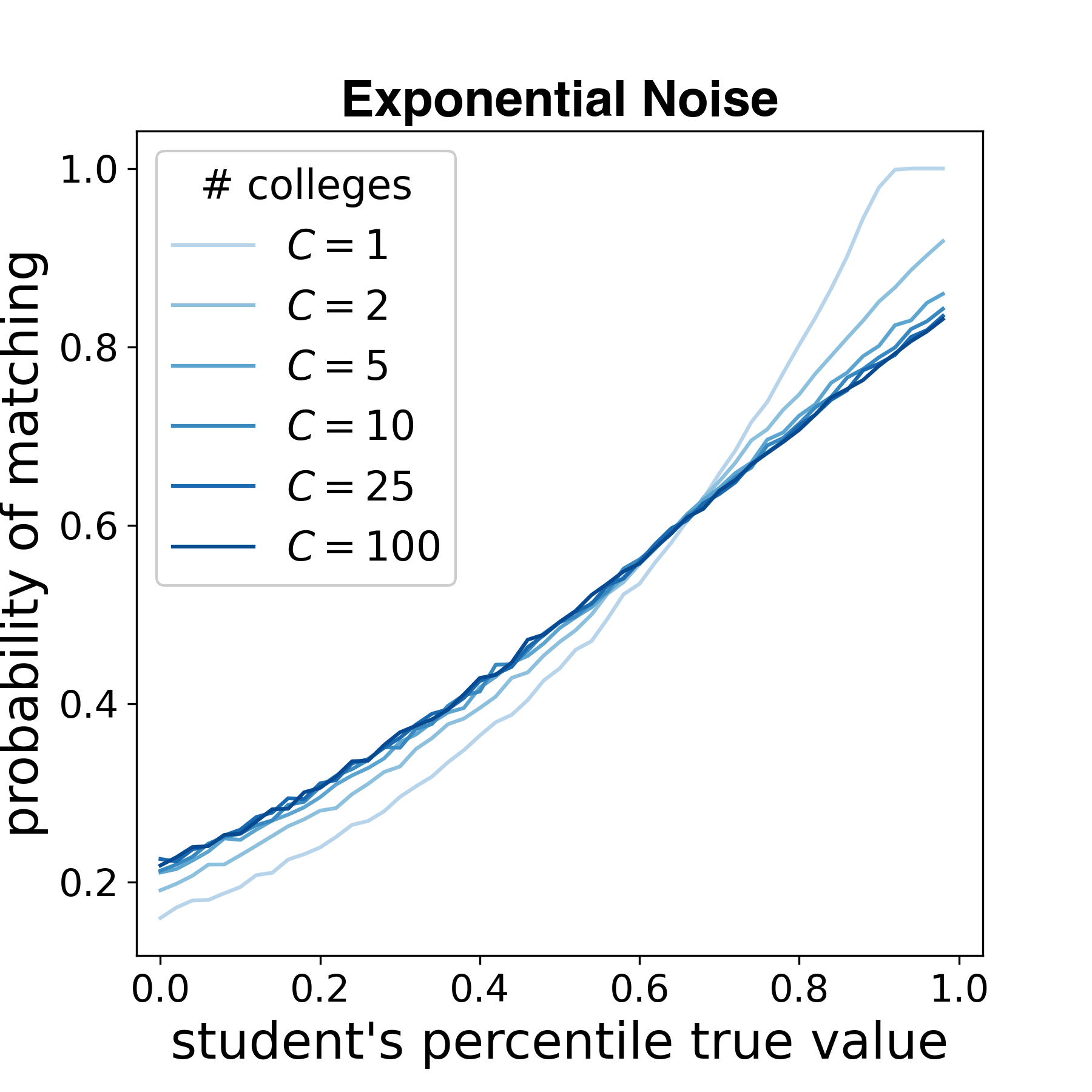}
        \caption{Noise Stays $\approx$ Same \\{\color{white}.}}
        \label{fig:sub2}
    \end{subfigure}
    \begin{subfigure}{0.32\textwidth}
        \centering
        \includegraphics[width=\linewidth]{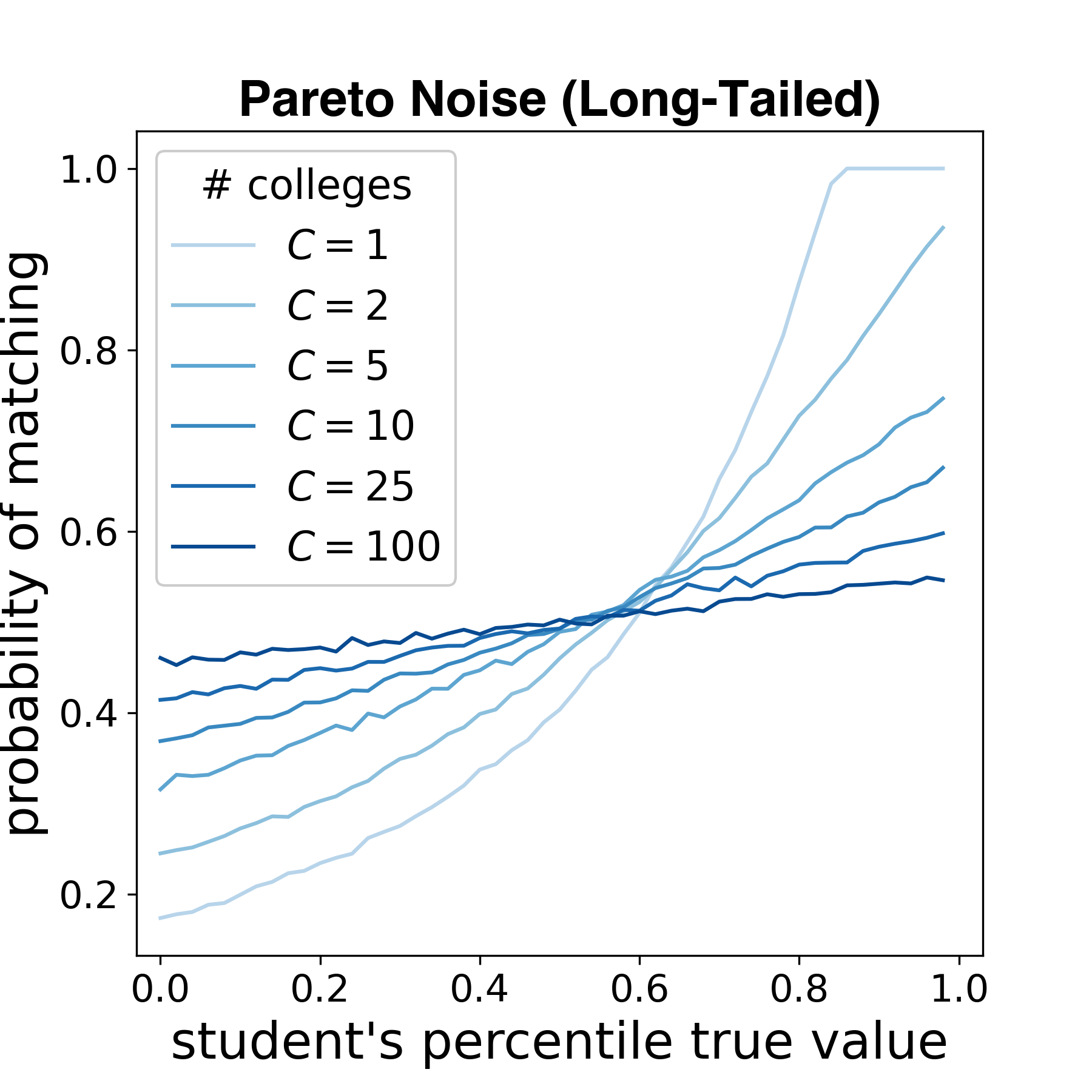}
        \caption{Noise Amplification as $C\rightarrow \infty$\\{\color{white}.} \hspace{0.5cm}  (Foolishness of Crowds)}
        \label{fig:sub3}
    \end{subfigure}
    \caption{We compare $p_\mu(v)$, the probability that a student with true value $v$ matches, across three economies that differ only in their noise distribution $\DD$. For uniform noise, which is max-concentrating, $p_\mu(v)$ approaches a step function as $C$ grows large (\Cref{thm:attenuating}). For Pareto noise, which is long-tailed, $p_\mu(v)$ approaches a constant as $C$ grows large (\Cref{thm:amplifying}). For exponential noise, which lies between our two regimes of focus, $p_\mu(v)$ does not change significantly as $C$ grows large. The economy here has $2000$ students and the total capacity of colleges is $1000$. Students have uniformly random preferences over colleges. Plots display averages over 100 simulations.}
    \label{fig:three_figures}
\end{figure}

Showing \Cref{thm:attenuating,thm:amplifying} requires a more subtle proof method, due to global interactions in matching markets. We use the ``cutoff structure'' of stable matching \citep{abdulkadirouglu2015expanding, azevedo2016supply}. In this characterization, each college $c$ has a cutoff $P_c$. Then a student can ``afford'' a college if and only if their ``score'' (here, their estimated value) at that college exceeds the cutoff; the student then matches with their most-preferred college among those they can afford. This matching is stable if and only if this process clears the market---i.e. when the capacities of colleges are properly met but not exceeded. In our setting, given market-clearing cutoffs $P_1, P_2, \cdots,$ a student with true value $v$ is matched if and only if there exists a college $c$ such that $v + X_c > P_c.$ Following the intuition of the previous paragraph, this roughly depends on the maximum $X_c$ a student obtains; however, it also depends on the value of the associated cutoff $P_c$, which depends on $\DD$, students preferences, and college capacities, complicating matters. Fortunately, we need not explicitly determine the market-clearing cutoffs from student and college preferences---a task that appears intractable; rather, the market-clearing conditions can be used indirectly.

While we first consider effects in a setup in which all colleges in an economy share the same true preferences over students, the results extend to a much broader setting. In \Cref{sec:extended}, we show that analogues of \Cref{thm:attenuating,thm:amplifying} hold for any large \textit{coalition} of colleges with the same true preferences, even when this coalition exists within a broader economy where students and other colleges have otherwise arbitrary preferences. Whether noise attenuates or amplifies within each coalition depends only on its noise distribution, allowing for arbitrary two-sided preferences outside the coalition.

Taking a step back, our approach suggests a more general recipe for analyzing the consequences of imperfect preference formation in markets: first, specify the true preferences of participants in a matching market; second, specify how participants form imperfect (e.g., incomplete, noisy, or biased) preferences in practice; third, compute the outcome of the market resulting from the imperfect preferences; finally, analyze the outcome with respect to participants' true preferences. We elaborate on this general framework in \Cref{sec:discussion}.

The article proceeds as follows. In \Cref{sec:related-work}, we discuss related work. In \Cref{sec:model}, we introduce the basic model. In \Cref{sec:results}, we introduce our main results, \Cref{thm:attenuating,thm:amplifying}. We give an overview of their proofs in \Cref{sec:proof-outline}, leaving full proofs to the appendix. In \Cref{sec:extended}, we give the extended model and results. \Cref{sec:discussion} discusses the general framework mentioned above, and \Cref{sec:conclusion} concludes.
\section{Related Work}\label{sec:related-work}
Our work connects to several active lines of work in matching and school admissions \citep{gale1962college}, including stable matching under uncertainty, random and large matching markets, and uncertainty in admissions and hiring.

\paragraph{Matching with incomplete information.} A widely recognized limitation of stable matching theory is that agents are typically assumed to have sufficient information to determine their true preferences. Several strands of literature consider this limitation. One strand has developed stability concepts in the presence of incomplete information \citep{chakraborty2010two, liu2014stable, bikhchandani2017stability, liu2020stability, chen2020learning, chen2023theory}. This line of work focuses on the strategic behavior of agents under different mechanisms when agents have incomplete information about their preferences and may be further informed by the actions of other participants. 

A second strand of work focuses on (facilitating) the process of potentially costly information acquisition in stable matching \citep{gonczarowski2019stable, ashlagi2020clearing, immorlica2020information, kanoria2020facilitating, shi2023optimal}. In particular, this line of work has characterized the amount of information required to obtain stable matchings; \citet{gonczarowski2019stable} and \citet{ashlagi2020clearing} take explicitly information-theoretic approaches. More broadly, these works shed light on the extent to which information acquisition limitations prevent the formation of stable matchings, and suggest possible interventions to facilitate efficient information acquisition. Relatedly, several works focus on how agents can sequentially learn preferences and form matches
\citep{jagadeesan2023learning,liu2020competing, liu2021bandit, dai2021learning, ionescu2021incentives, jeloudar2021decentralized}. These works adopt a learning-theoretic approach, e.g., through the perspective of multi-armed bandits.

Our approach is distinct and complementary to this literature. The above works focus on settings in which agents (1) behave strategically depending on their incomplete information, or (2) (strategically) acquire additional information. Meanwhile, we assume agents report their preferences exactly according to the noisy signals they receive. We then analyze the structure of the resulting matching, obtaining results that are more ``statistical'' than game- or learning-theoretic. For example, we do not consider stability with respect to true preferences, or allow participants to learn their preferences over time; rather, we analyze the market-wide statistical interaction between noise and stable matching. While it is important to understand how participants behave strategically given incomplete information and how participants acquire additional information, in many settings it may be reasonable to assume that participants act ``naïvely'' given their best knowledge of their own preferences: for example, that schools rank their favorite candidates according to grades, geographic preferences, and perceptions of interviews, letters, and essays, without strategizing beyond the inherently limited set of information they have. 

\paragraph{Random and large matching markets.}
While sharing a similar motivation as the aforementioned work on incomplete information and information acquisition in matching markets, the ``statistical'' style of our work bears a greater resemblance to the analysis of random and large matching markets. The random markets literature includes a substantial line of results giving asymptotic bounds on the number of stable matchings and resulting welfare of agents in different random matching markets \citep{pittel1989average, immorlica2003marriage, ashlagi2017unbalanced, hoffman2023stable}. 

Our results and proof method rely heavily on a model of matching markets with a continuum of students. More generally, analysis of matching markets has benefited substantially from models of ``large markets '' \citep{kojima2017recent, leshno2023large}. The continuum model, in particular, induces an especially tractable ``cutoff structure'' of stable matching \citep{abdulkadirouglu2015expanding, azevedo2016supply}. The model we introduce is built directly on the model of \citet{azevedo2016supply}, extending it to a setting where colleges have noisy estimates of their preferences. This extension expands the scope of questions that can be studied using these models, as we discuss further in \Cref{sec:discussion}. 

\paragraph{Models with noisy estimates.} The basic structure of our model---in which students have a true value $v$, and where colleges only receive a noisy estimate $v + X$ of this value---can be found in many papers. For example, a number of papers present models of college admissions using similar frameworks \citep{chade2014student, emelianov2022fair,gargtesting21,liuoptionaltests21,niumultipletests22}. These papers focus on somewhat different questions: for example, \citet{castera2022statistical} study the effects of statistical discrimination in matching markets, \citet{chade2014student} analyze the portfolio problem of choosing which colleges to apply to, and \citet{gargtesting21} analyze the trade-off between accuracy and equity when using specific features (e.g., test scores) of students in college admissions. One innovation of our work is the adoption of large market models---especially, with a large number of colleges. Indeed, while the models above focus primarily on settings with one or two colleges, the most striking findings of our model emerge with a large number of colleges.

\citet{chade2006matching} gives a close search-theoretic analogue of our model in which agents on each side of the market sequentially receive noisy signals from agents on the other side, and must choose to accept or reject. \citet{chade2006matching} finds that assortative matching (i.e., the most desired agents are more likely match with each other) occurs when noise satisfies the monotone likelihood ratio property, and that the optimal strategy for agents requires accounting for the fact that acceptance from the other agent is ``bad news''---an analog of the ``winner's curse.'' While assortative matching is shown, the magnitude of sorting is not explicitly given: it may be interesting to consider if (as search costs grow small) analogs to our results can be found: that despite individually noisy signals, the resulting matching displays perfect sorting; or, if the noise distribution is different, that the matching displays no sorting.

\paragraph{Comparison to \citet{peng2023monoculture}.} Most related to this work is our prior work \citet{peng2023monoculture}, which studies algorithmic monoculture in matching markets.\footnote{\citet{kleinberg2021algorithmic} initiated the study of \textit{algorithmic monoculture}, the phenomenon in which many decision-makers rely on the same algorithm to evaluate applicants; consequently, they also study what happens when multiple decision-makers use independent but noisy evaluations. Indeed, the directional effect of our \Cref{thm:attenuating} is shown in \cite{kleinberg2021algorithmic} in the two-college case.} Both our prior and present work extend the model of \citet{azevedo2016supply}---with a continuum of applicants and discrete colleges---so that colleges form noisy preferences via a random utility model. In particular, our prior work \citep{peng2023monoculture} proves \Cref{thm:attenuating} of the present work in a fully symmetric setting where students have uniformly random preferences over colleges and colleges have identical capacities. \Cref{thm:attenuating} in the present work generalizes this result for arbitrary student preferences and college capacities, requiring a significantly more sophisticated proof method. 

The questions addressed by each work are substantively different. \citet{peng2023monoculture}, building on \citet{kleinberg2021algorithmic}, compares monoculture (when noise is fully correlated across colleges for each student) to polyculture (independent noise), in terms of who is matched, student welfare, and disparities when students submit different numbers of applications. In contrast, the present work focuses on the first question of who is matched under independent noise, but in a more general model with arbitrary student preferences. In addition to the general noise attenuation result of \Cref{thm:attenuating}, \Cref{thm:amplifying} of the present work gives a noise regime in which the \textit{opposite effect} of that found in \cite{kleinberg2021algorithmic} and \cite{peng2023monoculture} occurs, suggesting that the structure of noise plays an essential role. Moreover, \Cref{thm:attenuating-extended} and \Cref{thm:amplifying-extended} show that these results extend to large subsets of colleges that share true preferences. More broadly, both papers are instantiations of a framework we introduce to study imperfect preference formation in matching markets, expanded upon in \Cref{sec:discussion}.
\section{Model}\label{sec:model}

We introduce a model of two-sided matching markets in which colleges have shared true preferences over students, but act according to independent noisy approximations of these preferences. Meanwhile, students can have arbitrary preferences over colleges. For tractability, we consider a continuum of students. Our model extends that of \citet{azevedo2016supply}, allowing for noisy colleges preferences. Indeed, we adopt much of the same notation. In \Cref{sec:extended}, we extend the basic model introduced in this section to relax the assumption that all colleges share true preferences.

To fully and rigorously specify an economy (and its corresponding stable matchings) will require a fairly significant number of parameters, summarized in \Cref{tab:params}. However, as perhaps foreshadowed by the generality of our results (informally stated in the introduction), many of these parameters will ``fade into the background'' by the time we reach our analysis.

\subsection{Definitions}
There is a continuum of students of measure $1$ and a finite set of colleges $C = \{1, 2, \cdots, C\}$. A standard model of stable matching between students and colleges has three ingredients: student preferences over colleges, college preferences over students, and college capacities. Given these three ingredients, we may analyze the corresponding set of \textit{stable matchings}, matchings between students and colleges so that no student-college pair would jointly prefer to defect from the matching.

The focal point of our model is on how colleges form preferences over students, so this is where we begin. Each student has a \vocab{value} $v\in \RR$, such that colleges all prefer students with higher value. Student values specify the true preferences of colleges. Colleges, however, only obtain noisy estimates of student values, dictated by a \textit{noise distribution} $\DD$: for a student with value $v$, each college $c\in C$ obtains a noisy estimate equal to $v + X_c$, where $X_c$ is drawn independently from $\DD$. Given this noise, we are interested in the probability that a student with value $v$ matches. For the sake of comparison, it is convenient to hold two parameters of our model fixed: 
\begin{itemize}
    \item $\eta$, a probability measure over student values, and
    \item $S\in (0,1)$, the total capacity of colleges.
\end{itemize}
We assume that $\eta$ has connected support, which implies that there is a unique value $v_S$ such that $\eta((v_S, \infty))=S$; i.e., the measure of students with value at least $v_S$ is equal to $S$, the total capacity of colleges. Holding $\eta$ and $S$ fixed is convenient, because we can specify two extreme behaviors. On one extreme, only students with value at least $v_S$ are matched, reflecting a ``noise free'' matching; on the other extreme, all students match with probability $S$, reflecting a ``fully noisy'' matching. Note that overall capacity is less than the measure of students.

We further assume a (weak) regularity condition on $\eta$: that there is a constant $\gamma>0$ such that $\eta((v, v+\delta)) = O(\delta^\gamma)$ for all $v\in \RR$ and $\delta > 0.$ This is equivalent to the cumulative distribution function of student values satisfying a $\gamma$-Hölder condition. We also fix a constant $\alpha>0$ that we use to specify a regularity condition on college capacities.
\\
\\
\begin{table}
\caption{Model parameters.}
\label{tab:params}
\begin{center}
\small
    \begin{tabular}{p{4cm} c p{7cm}}
    \toprule
        \multirow{6}{*}{Constants in our model} & $\eta$ & a probability measure over student values\\ \noalign{\vspace{0.5ex}}\cline{2-3}\noalign{\vspace{0.5ex}}
         & $S$ & the total capacity of colleges \\ \noalign{\vspace{0.5ex}}\cline{2-3}\noalign{\vspace{0.5ex}}
         & $\gamma$ & a constant governing a regularity condition on $\eta$\\
         \noalign{\vspace{0.5ex}}\cline{2-3}\noalign{\vspace{0.5ex}}
         & $\alpha$ & a constant governing a regularity condition on individual college capacities\\ \midrule
        \multirow{5}{*}{Parameters of an economy} & $\DD$ & a noise distribution\\ \noalign{\vspace{0.5ex}}\cline{2-3}\noalign{\vspace{0.5ex}}
         & $C$ & the total number of colleges\\ \noalign{\vspace{0.5ex}}\cline{2-3}\noalign{\vspace{0.5ex}}
         & $\vec{S}$ & a vector of college capacities\\ \noalign{\vspace{0.5ex}}\cline{2-3}\noalign{\vspace{0.5ex}}
         & $\rho$ & a probability measure over student types\\
    \bottomrule
    \end{tabular}
    \end{center}
\end{table}

With $\eta$ and $S$ fixed, we may fully parameterize an \vocab{economy} by the ordered tuple $(\DD, C, \vec{S}, \rho),$ where $\DD$ is the noise distribution, $C$ the total number of colleges, $\vec{S}=(S_1, \cdots, S_C)$ a vector of college capacities, and $\rho$ a probability measure over student types (to be defined). We require $\vec{S}$ and $\rho$ to satisfy some basic conditions:

As the total capacity $S$ is fixed, we must have that $S_1 + S_2 + \cdots + S_C = S.$ We also assume that 
\begin{equation}
    S_c < \frac{\alpha}{C}
\end{equation} 
for all $c\in C$ for some fixed $\alpha$, meaning that no college has a disproportionately large capacity. This kind of assumption is necessary for our results, since otherwise it is possible for an economy to technically have many colleges, but such that a few colleges take up most of the capacity; this results in the economy behaving as if it only had a few colleges. It will also be useful to define $S(C')$ as the total capacity of colleges in $C'\subseteq C$, that is, $\sum_{c\in C'} S_c.$ By definition, $S = S(C).$

To specify student preferences over colleges, we identify each student with their \vocab{type} $\theta = (v, \succ),$ where $v$ is their value and $\succ$ is their preference ordering over colleges. Then $\rho$ is a probability measure over $\Theta = \RR\times \mathcal{R},$ where $\mathcal{R}$ is the set of strict preference orderings over colleges. In other words, $\rho$ induces a joint distribution over student values and student preferences. We assume that all students prefer all colleges over being unmatched. We allow the distribution of student preferences conditional on student value to otherwise vary arbitrarily. Since the distribution of student values is fixed according to $\eta$, we must have $\rho(V\times \mathcal{R})=\eta(V)$ for all $V\subseteq \RR.$

As we discuss in the next section, any economy has at least one stable matching. For a fixed $\DD$ and $C$, we let $E(\DD, C)$ be the set of all economies $(\DD, C, \vec{S}, \rho)$, hinting at the fact that our results depend only on the noise distribution and the total number of colleges.
\\
\\
We define two regimes of noise distributions $\DD$. $\DD$ is \vocab{$\beta$-max-concentrating} if and only if
\begin{equation}
    \Var[X^{(n)}] = O(n^{-\beta})
\end{equation}
for some $\beta > 0,$ where
$X^{(n)}$ is the maximum order statistic of $n$ samples from $\DD$ (i.e., the maximum of $n$ independent draws from $\DD$). For example, all bounded distributions are max-concentrating, and the Gaussian distribution is $1$-max-concentrating.\footnote{See, e.g., Proposition 4.7 of \cite{boucheron2012concentration}.} On the other hand, $\DD$ is \vocab{long-tailed} if and only if for all $d > 0$,
\begin{equation}
    \lim_{x\rightarrow \infty} \Pr_{X\sim \DD}[X > x + d\,|\,X > x] = 1.
\end{equation}
For example, the Pareto distribution is long-tailed, as are all subexponential distributions. These two regimes do not cover the set of all noise distributions; for example, the exponential and Gumbel distributions are neither max-concentrating nor long-tailed.

\subsection{Cutoff Characterization of Stable Matching}
We now specify the map between economies and stable matchings, using the cutoff characterization shown by \citet{azevedo2016supply}. Consider an economy $E = (\DD, C, \vec{S}, \rho) \in E(\DD,C).$

Let $\vec{P} = (P_1, P_2, \cdots, P_C)\in \RR^C$ be a vector of \vocab{cutoffs}, such that college $c$ has cutoff $P_c$. Student $\theta = (v, \succ)$ can \vocab{afford} college $c$ if and only if their estimated value at $c$ exceeds the cutoff $P_c$. Here, we recall that a student with value $v$ receives an estimated value $v + X_c$ at college $c$, where $X_1,\cdots,X_C$ are drawn i.i.d. according to $\DD$. The \vocab{demand} $D^\theta(\vec{P})$ of student $\theta$ at $\vec{P}$ is their most preferred firm among those they can afford; if they cannot afford any firm, $D^\theta(\vec{P}) = \emptyset$. In our setup, $D^\theta(\vec{P})$ is a random variable. Cutoffs $\vec{P}$ are \vocab{market-clearing} in the economy $E$ if and only if
\begin{equation}
    \int_{\Theta} \Pr\left[D^\theta(\vec{P})=c\right] \,d\rho(\theta) = S_c
\end{equation}
for all $c\in C$. In other words, the measure of students who demand a college is exactly equal to that college's capacity.\footnote{Capacities are exactly filled since the total capacity $S$ is less than the total mass of students $1$, and because every student prefers to be matched over being unmatched (i.e., colleges are overdemanded).} 

A random function $\mu:\Theta\rightarrow C\cup \{\emptyset\}$ is a \vocab{stable matching} of $E$ if and only if
\begin{equation}
    \mu(\theta) = D^\theta(\vec{P})
\end{equation}
for cutoffs $\vec{P}$ that are market-clearing in $E$. We refer the reader to Lemma 1 of \cite{azevedo2016supply} to see that this condition is in fact equivalent to the classical definition of stable matching, where stability here is with respect to college preferences based on the \textit{estimated} value of students.\footnote{This corresponds to a model in which colleges admit students based only on their estimated values (and do not strategize based on, for example, which students other colleges admit).}
\\
\\
Let $M(\DD, C)$ denote the set of all matchings that are stable for some economy $E\in E(\DD, C).$ Then any matching $\mu\in M(\DD, C)$ can be characterized by a set of market-clearing cutoffs $\vec{P}(\mu).$

\subsection{Preliminaries}
We now make some basic observations that will be useful for our results. We show that key properties of stable matchings operate at the level of student values rather than student types (which also encode student preferences). This means that in our analysis, we can focus on the fixed student value distribution $\eta$ while mostly ignoring the varying student type distribution $\rho$.

Consider a matching $\mu\in M(\DD, C)$, characterized by a vector of market-clearing cutoffs $\vec{P}=\vec{P}(\mu)$. Our results focus on the probability that a student $\theta=(v,\succ)$ matches to a college. This probability is equal to
\begin{equation}
    \Pr[\mu(\theta)\in C] = \Pr[v + X_c > P_c\text{ for some $c\in C$}],
\end{equation}
where $X_1, \cdots, X_C \simiid \DD.$ Notice that this probability does not depend on $\succ$, the student's preferences over colleges. Indeed, the probability that a student matches is exactly the probability that a student can afford some college; whether or not they can afford a college only depends on their value and the college's cutoff (not the student's preferences). Accordingly, we can drop $\theta$ and define
\begin{equation}
    p_\mu(v) := \Pr[v + X_c > P_c\text{ for some $c\in C$}],
\end{equation}
the probability that a student with value $v$ is matched under $\mu$. More generally, for a subset of colleges $C'\subseteq C$, we define
\begin{equation}
    p_\mu(v, C') := \Pr[v + X_c > P_c\text{ for some $c\in C'$}],
\end{equation}
the probability that a student with value $v$ can afford some college in $C'$. By definition, $p_\mu(v) = p_\mu(v, C).$

Notice that our analysis depends \textit{directly} only on three objects: $\eta$ (the distribution of students' true values), $\DD$ (the noise distribution), and $\vec{P}$ (the vector of cutoffs of each college). All other parameters forming an economy, such as the capacities of colleges and the distribution of student preferences, influence our analysis only insofar as they determine the cutoffs $P$. We only consider the interaction between these other parameters and these cutoffs $P$ on an as-needed basis. In this way, we may essentially take a set of cutoffs as given, assuming nothing else aside from the fact that they are market-clearing in some economy.
\section{Main Results}\label{sec:results}
We now state our main results. Recall that there is a unique $v_S$ such that $\eta((v_S, \infty))=S$, meaning that the mass of students with true value above $v_S$ is equal to the total capacity $S$. In other words, if all students with value greater than $v_S$ match, this exactly meets supply; moreover, this is the ``noise-free'' outcome in which only the highest-value students match.

The first result pertains to economies with a max-concentrating noise distribution.

\begin{theorem}[Attenuation]\label{thm:attenuating}
    Let $\DD$ be max-concentrating and $v\in \RR$. Then for all $\epsilon > 0$, there exists $C(\epsilon)$ such that for all $C > C(\epsilon)$ and $\mu\in M(\DD, C)$,
    \begin{equation}
        p_\mu(v) < \epsilon \qquad \text{if $v < v_S$},
    \end{equation}
    and
    \begin{equation}
        p_\mu(v) > 1 - \epsilon \qquad \text{if $v > v_S$}.
    \end{equation}
\end{theorem}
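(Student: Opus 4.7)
The plan is to combine monotonicity of $p_\mu$, the market-clearing identity, and a sharp-transition property of $p_\mu$ induced by max-concentration of $\DD$. Writing $F_\DD$ for the CDF of $\DD$, any $\mu\in M(\DD,C)$ is determined by market-clearing cutoffs $\vec P=(P_c)_{c\in C}$ and satisfies $p_\mu(v) = 1 - \prod_{c\in C} F_\DD(P_c - v)$, which is nondecreasing in $v$. Because every college is overdemanded (since $S<1$ and every student prefers matching to being unmatched), $\int_\RR p_\mu(v)\,d\eta(v) = S$; this identity is the only ``global'' constraint the proof needs.

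First I would establish a sharp transition: for any $\xi>0$ there exist $C_0$ and $v^*=v^*(\mu,C)$ such that for $C\geq C_0$, $p_\mu(v^*-\xi)<\epsilon/2$ and $p_\mu(v^*+\xi)>1-\epsilon/2$. Using $-\log(1-p_\mu(v)) = -\sum_c \log F_\DD(P_c-v)$, it suffices to show that this sum crosses from $\ll 1$ to $\gg 1$ over a window in $v$ of length $o_C(1)$. Max-concentration gives that $F_\DD(u)^C$ transitions from $0$ to $1$ in a window of size $O(C^{-\beta/2})$ around its own ``cutoff quantile'' (since $\Var[X^{(C)}]=O(C^{-\beta})$); with varying $P_c$ the analogous statement should follow by accounting for the ``effective'' number of cutoffs whose contribution to the sum is non-negligible: if that number is small then $p_\mu$ is already close to $0$, while if it is large then the collective reverse-hazard $\sum_c f_\DD(P_c-v)/F_\DD(P_c-v)$ (the derivative of $-\log(1-p_\mu)$) is enormous at near-quantile arguments and forces a narrow transition.

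Given the sharp transition, the second step is to locate $v^*$ using the integral identity and monotonicity. If $p_\mu$ is within $\xi$ of a step at $v^*$, then $|\eta((v^*,\infty)) - S| = O(\xi)$; by the $\gamma$-Hölder regularity of $\eta$'s CDF this forces $|v^*-v_S|=o(1)$. Combining with the sharp-transition step yields the theorem: for any fixed $v<v_S$, once $C$ is large enough we have $v<v^*-\xi$, so $p_\mu(v)<\epsilon$; the case $v>v_S$ is symmetric.

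The main obstacle is the asymmetric-cutoff case of the sharp-transition step. In the symmetric setting (equal capacities, uniform student preferences) treated in \citet{peng2023monoculture}, all cutoffs coincide by symmetry and sharp transition is immediate from the concentration of the order statistic $X^{(C)}$. With arbitrary cutoffs the product $\prod_c F_\DD(P_c-v)$ is not a simple order-statistic expression, and market-clearing does not immediately force the cutoffs themselves to concentrate—some colleges can be much more desired than others and therefore have much larger $P_c$. I expect the proof to use market-clearing \emph{indirectly}—bounding the measure of students whose demand at a given $v$ is non-negligible, and playing this against the regularity of $\eta$ and the $\alpha$-bound on individual capacities—rather than trying to pin the cutoffs down.
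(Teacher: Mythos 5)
Your high-level strategy---use monotonicity of $p_\mu$, the market-clearing integral constraint, and a ``sharp transition'' forced by max-concentration---is the right compass, and your final localization step (once a sharp transition is known, the integral constraint pins its location near $v_S$) is sound in spirit. But the argument you give for the sharp transition is exactly where the paper's proof has to work hardest, and your sketch has a genuine gap. The claim ``if [the effective number of near-quantile cutoffs] is small then $p_\mu$ is already close to $0$'' is not true: a handful of colleges with very low cutoffs $P_c$ can make $\prod_c F_\DD(P_c - v)$ bounded away from $1$ (so $p_\mu(v)$ bounded away from $0$) even for $v$ far below $v_S$, while contributing a negligible ``effective count'' in your reverse-hazard accounting. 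Market-clearing only caps the total capacity of those colleges---i.e.\ the \emph{measure} of students who match to them---not the probability that any fixed $v$ can afford one. This is why the paper works with $\int_{-\infty}^{v_S} p_\mu(v)\,d\eta(v)$ rather than pointwise bounds, and then recovers the pointwise statement via monotonicity (the trivial Markov-style argument in the footnote). The missing ingredient in your proposal is the paper's Case 1 / Case 2 dichotomy: set $\delta = C^{\phi_1}$, find the lowest window $[P^*, P^*+\delta]$ containing $\ge C^{\phi_2}$ cutoffs, and split on whether there are $< C^{\phi_3}$ or $\ge C^{\phi_3}$ cutoffs below $P^*$. In the dense-cluster case the low-cutoff colleges have tiny total capacity (so their integral contribution is $O(C^{\phi_3-1})$) and the cluster gives concentration of $X^{(C^{\phi_2})}$. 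In the spread-out case, the key idea you do not have is a pigeonhole step: if $C^{\phi_3}$ cutoffs sit below $P^*$ and none packs a $\delta$-window densely, then $P_{C^{\phi_3}} - P_1 > C^{\phi_1+\phi_3-\phi_2}$, a gap that dwarfs $\EE[X^{(C)}] = o(\log C)$; this lets any $v$ above $P_{C^{\phi_3}} - \EE[X^{(C)}] - C^{\phi_4}$ afford college $1$ with near-$1$ probability, while students below it are handled by the capacity bound on the low colleges and concentration of $X^{(C)}$ against $P_{C^{\phi_3}}$.

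Two smaller points. First, you appear to have the roles of the regularity assumptions reversed: the $\gamma$-H\"older condition on $\eta$'s CDF is used to \emph{upper}-bound the $\eta$-measure of the narrow transition window (so its integral contribution is $O(C^{\gamma\phi_1})$), not to ``force $|v^*-v_S|=o(1)$''; the latter comes from $\eta$ having connected support. Second, you cite ``$\int_\RR p_\mu\,d\eta = S$'' as the only global constraint, but the $\alpha$-bound on individual capacities ($S_c < \alpha/C$) is also essential---without it, a single college could carry all the low cutoffs and all the capacity, and the dichotomy would not yield a small error term.
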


\Cref{thm:attenuating} shows that when the number of colleges is large with max-concentrating noise, any student with value above $v_S$ will almost certainly be matched, whereas any student with value below $v_S$ will almost certainly not be matched. In this way, the market attenuates all noise (in a broad sense), since only the highest-value students are matched.

The next result pertains to economies with a long-tailed noise distribution.

\begin{theorem}[Amplification]\label{thm:amplifying}
    Let $\DD$ be long-tailed and $v\in \RR$. Then for all $\epsilon > 0$, there exists $C(\epsilon)$ such that for all $C > C(\epsilon)$ and $\mu\in M(\DD, C)$,
    \begin{equation}
        |p_\mu(v) - S| < \epsilon.
    \end{equation}
\end{theorem}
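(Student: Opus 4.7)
My plan is to work directly with $q_\mu(v) := 1-p_\mu(v) = \prod_{c=1}^{C}(1-\bar F_\DD(P_c-v))$ (with $\bar F_\DD$ the survival function of $\DD$) and to show $q_\mu(v) \approx 1-S$ uniformly in $v$ for $C$ large. The proof combines two ingredients: the market-clearing identity $\int q_\mu\,d\eta = 1-S$ (which follows from $\int p_\mu\,d\eta = \sum_c S_c = S$), and the long-tailed property of $\DD$, which supplies the needed $v$-smoothness of $q_\mu$.

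First I would establish a lower bound on $q_\mu$ at an anchor point $v_L$. Picking $v_L$ to be the quantile $\eta((-\infty,v_L)) = (1-S)/2$, splitting the integral at $v_L$ and using monotonicity of $q_\mu$ (with $q_\mu\le 1$ below $v_L$ and $q_\mu(v)\le q_\mu(v_L)$ above) gives $q_\mu(v_L)\ge (1-S)/(1+S) =: c_0$. The elementary inequality $-\log(1-x)\ge x$ then yields $N(v_L):= \sum_c\bar F_\DD(P_c-v_L)\le -\log c_0$, independent of $C$. Next, the long-tailed property $\bar F_\DD(x-d)/\bar F_\DD(x)\to 1$ as $x\to\infty$ (uniformly for $d$ in a bounded set) gives: if every cutoff satisfies $P_c > K + \max(v,v_L)$ for $K$ chosen large relative to some small $\delta>0$, then $|\bar F_\DD(P_c-v)-\bar F_\DD(P_c-v_L)|\le \delta\,\bar F_\DD(P_c-v_L)$ for all $c$, and the telescoping inequality $|\prod(1-a_c)-\prod(1-b_c)|\le\sum|b_c-a_c|$ yields $|q_\mu(v)-q_\mu(v_L)|\le \delta N(v_L)$, which is arbitrarily small. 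Hence $q_\mu$ is essentially constant across the support of $\eta$, and the integral identity forces the common value to be $1-S$, so $p_\mu(v)\approx S$.

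The main obstacle is verifying the hypothesis that all cutoffs diverge as $C\to\infty$, so that the long-tailed asymptotic applies to every factor. I would isolate this as a lemma: for any fixed $K$, no stable matching in $E(\DD,C)$ has a cutoff $P_c\le K$ once $C$ is large. The argument is by contradiction: suppose $P_{c^*}\le K$ for some $c^*$. Choose a bounded subinterval $[a',v_L]$ of the support of $\eta$ with $\eta([a',v_L])\ge (1-S)/4$ (possible since $\eta((-\infty, v_L))=(1-S)/2$). For $v\in[a',v_L]$, we have $\bar F_\DD(P_{c^*}-v)\ge \bar F_\DD(K-a') =: \delta'>0$ and $\prod_{c\ne c^*}(1-\bar F_\DD(P_c-v))\ge q_\mu(v)\ge q_\mu(v_L)\ge c_0$, so regardless of the preference ordering $\succ_v$,
\begin{equation*}
\Pr\!\left[D^\theta(\vec P)=c^*\right] = \bar F_\DD(P_{c^*}-v)\prod_{c\succ_v c^*}(1-\bar F_\DD(P_c-v))\ge \delta' c_0,
\end{equation*}
because $\{c:c\succ_v c^*\}\subseteq\{c\ne c^*\}$ and restricting the product to a smaller index set only enlarges it. Integrating against $\rho$ then bounds the demand at $c^*$ below by $\delta' c_0 (1-S)/4$, a positive constant, contradicting the market-clearing requirement $S_{c^*}<\alpha/C\to 0$. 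The delicate feature is that the lower bound holds uniformly over $\succ_v$, which is what makes the argument applicable to any student type distribution $\rho$ allowed by the model.
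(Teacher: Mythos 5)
Your proof is correct, and it takes a genuinely different and arguably cleaner route than the paper's for the basic model. The paper only establishes the weaker claim that all but $\epsilon C$ cutoffs diverge (their Lemma on unbounded cutoffs), which forces them to partition the colleges into a low-cutoff block $\mathcal{F}_1$ and a high-cutoff block $\mathcal{F}_2$, bound $p_\mu(v,\mathcal{F}_2)\approx S$ via a ratio argument on survival probabilities, and then separately bound $p_\mu(v,\mathcal{F}_1)$ by a capacity argument. Your stronger lemma---that \emph{every} cutoff diverges once $C$ is large---eliminates the need for that partition entirely. The ingredient the paper does not exploit, and you do, is the market-clearing condition at the level of a \emph{single} college: $S_{c^*}<\alpha/C\to 0$, while the demand $\int\Pr[D^\theta(\vec P)=c^*]\,d\rho$ at a college with a bounded cutoff stays bounded below by $\delta' c_0\,\eta([a',v_L])>0$, uniformly over all student preference orderings (since $\{c:c\succ c^*\}\subseteq\{c\neq c^*\}$ only enlarges the affordability product down to $q_\mu(v)\ge c_0$). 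The paper's unbounded-cutoffs lemma instead tests market clearing against the \emph{aggregate} capacity $S$, which only gives the weaker $(1-\epsilon)$-fraction statement because it requires $\epsilon C\to\infty$ colleges below a fixed threshold. Your smoothness step is also technically neater: bounding $N(v_L)=\sum_c\bar F_\DD(P_c-v_L)\le -\log c_0$ via $-\log(1-x)\ge x$, and then using the telescoping inequality $\lvert\prod(1-a_c)-\prod(1-b_c)\rvert\le\sum\lvert a_c-b_c\rvert$, replaces the paper's two-lemma $O(\sigma/C)$ bound and product-ratio computation with a one-line Lipschitz estimate. One caveat worth noting: your stronger lemma is specific to the basic model where the coalition is the whole market. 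In the extended coalition model it fails (the paper's footnote gives a counterexample with a single unpopular low-cutoff college in a coalition), which is presumably why the paper chose the more conservative $\mathcal{F}_1/\mathcal{F}_2$ machinery that ports over unchanged to the extension. Finally, to pin down the constant as $1-S$, make sure to restrict the near-constancy argument to a compact interval carrying $\eta$-mass $\ge 1-\epsilon'$ before invoking $\int q_\mu\,d\eta=1-S$, since the long-tailed estimate is only uniform for $d$ in a bounded set; you appear to be aware of this but it should be made explicit.
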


\Cref{thm:amplifying} shows that when the number of colleges is large with long-tailed noise, every student essentially has an equal probability $S$ of being matched. This limiting probability is independent of the student's value. In this sense, the market fully amplifies noise. Even if individual colleges rank students with some signal, the set of students who are matched is essentially random.
\\
\\
Observe the generality of both these results: for a fixed probability measure over student values $\eta$, total capacity $S$, and noise distribution $\DD$, these results hold for stable matchings of \textit{all} economies with a sufficiently large number of colleges that satisfy the specified regularity conditions.
\section{Outline of Proofs}\label{sec:proof-outline}
In this section, we describe the main ideas in the proofs of \Cref{thm:attenuating,thm:amplifying}. Full proofs are given in \Cref{sec:attenuating,sec:amplifying}, respectively. In both proofs, we focus on the value
\begin{equation}
    p_\mu(v) := \Pr[v + X_c > P_c\text{ for some $c\in C$}],
\end{equation}
where $P_1, \cdots, P_C$ are the corresponding market-clearing cutoffs of a stable matching $\mu$. Conveniently, our proof strategy does not require calculating the cutoffs $P_1, \cdots, P_C$ for each possible economy; doing so would require individualized analysis for every distribution of student types and choice of college capacities. Rather, we simply take $P_1, \cdots, P_C$ as given, and---intermittently---use our knowledge that $P_1, \cdots, P_C$ are market-clearing. 

\paragraph{\Cref{thm:attenuating} Proof Outline.} With this in mind, we first consider \Cref{thm:attenuating}, the attenuating, ``wisdom of the crowds'' regime. To provide some intuition, we begin with the simplest case, $P_1 = P_2 = \cdots = P_C.$ Call this shared cutoff $P^*$. Then
\begin{align}
    p_\mu(v) &= \Pr[v + X_c > P^*\text{ for some $c\in C$}]\\
    &= \Pr[v + \max_{c\in C} X_c > P^*]\\
    &= \Pr[v > P^* - X^{(C)}],\label{eq:hippo}
\end{align}
where we recall that $X^{(C)}$ is the maximum order statistic of $\DD$. If $\DD$ is max-concentrating, then $X^{(C)}$ concentrates as $C$ grows large, meaning that \eqref{eq:hippo} is approximately $1$ when $v > P^* - \EE[X^{(C)}]$ and $0$ when $v < P^* - \EE[X^{(C)}].$ This example---though simple---makes clear an underlying intuition: that whether a student is matched is dictated roughly by their \textit{maximum} estimated value. (This is also the case analyzed in Theorem 1 of \citet{peng2023monoculture}.)

When the cutoffs $P_1, \cdots, P_C$ are not identical, our analysis becomes significantly more complicated. The quantity
\begin{equation}
    \Pr[v + X_c > P_c\text{ for some $c\in C$}]
\end{equation}
is not easily simplified in this case. It is, however, useful to borrow from the approach in the ``equal cutoffs'' case. In particular, we search for a ``dense cluster'' of colleges---that is, a large subset of cutoffs that are \textit{approximately} equal. More formally, we let $P^*$ be the smallest real number such that the interval $[P^*, P^* + \delta]$ (where we choose some $\delta=\delta(C)$) contains a ``large'' number of cutoffs, say at least $M=M(C)$.

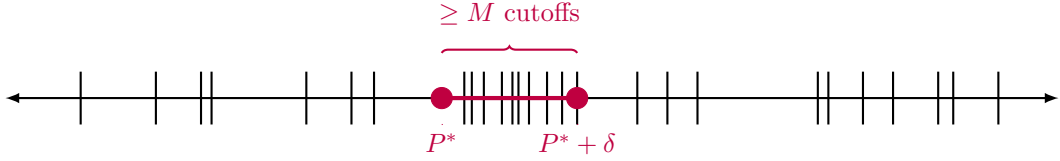
\begin{figure}[ht]
\vspace{0.5cm}
\begin{center}
\begin{tikzpicture}[scale=2]
\foreach \x in  {-3, -2.5, -2.2, -2.13, -1.5, -1.2, -1.05, -0.45, -0.4, -0.32, -0.2, -0.13, -0.09, -0.02, 0.1, 0.2, 0.3, 0.7, 0.9, 1.1, 1.9, 1.97, 2.2, 2.4, 2.7, 2.8, 3.1}
\draw[shift={(\x,0)},color=black, thick] (0pt,5pt) -- (0pt,-5pt);
\draw[shift={(-0.6,0)},color=purple] (0pt,-5pt) -- (0pt,-5pt) node[below] 
{$P^*$};
\draw[shift={(0.3,0)},color=purple] (0pt,-5pt) -- (0pt,-5pt) node[below] 
{$P^* + \delta$};
\draw[latex-latex, thick] (-3.5,0) -- (3.5,0) ;

\path [draw=purple, fill=purple] (-0.6,0) circle (2pt);
\path [draw=purple, fill=purple] (0.3,0.0) circle (2pt);
\draw[ultra thick, purple] (-0.6,0) -- (0.3,0);
\draw [shift={(0,0.3)}, decorate, decoration={brace}, purple, thick] (-0.6, 0) --  (0.3,0) node[midway,yshift=1.5em]{$\ge M$\text{ cutoffs}};
\end{tikzpicture}
\end{center}
\caption{$P^*$ is the smallest real number such that $[P^*, P^*+\delta]$ contains at least $M$ cutoffs.}
\end{figure}
We then consider two cases, delineated by a carefully chosen value $m=m(C)$:
\begin{itemize}
    \item Case 1: There are few colleges $(< m)$ with cutoff below $P^*$.
    \item Case 2: There are ``many'' colleges $(\ge m)$ with cutoff below $P^*$.
\end{itemize}
In the first case, the argument proceeds as follows. Since there are few colleges $(<m)$ with cutoff below $P^*$, we may ``ignore'' these colleges because they cannot contribute significantly to $p_\mu(v)$, the probability that a student matches. We then focus on the probability that a student can afford a college with cutoff at least $P^*$. Since there are many cutoffs in the narrow interval $[P^*, P^* + \delta],$ the probability that a student can afford \textit{any} college with cutoff at least $P^*$ is roughly the same as the probability that the student can afford a college with cutoff in the narrow interval. Moreover, since $\delta$ is small, this probability is approximately
\begin{equation}
    \Pr[v + X^{(M)} > P^*].
\end{equation}
Since $X^{(M)}$ concentrates when $M$ is large, this probability resembles a step function in $v$, as desired.

The argument in the second case is somewhat different; because there are many $(>m)$ colleges with low cutoffs, we cannot ``ignore'' all of these colleges and apply the same strategy as in the first case. Rather, we leverage the fact that there does not exist any dense cluster of cutoffs among the first $m$ cutoffs below $P^*$. By the pigeonhole principle, these cutoffs must be spread out: in particular, we are able to ensure that $P_m-P_1$ is significantly larger than $X^{(C)}$. We now employ the following reasoning. Now consider a student with value below $P_m - \EE[X^{(C)}].$ Then the probability that the student can afford a college with cutoff at least $P_m$ is approximately $0$, since even their maximum estimated value is unlikely to exceed $P_m$, let alone higher cutoffs. Meanwhile, the probability that a student matches with a college with cutoff less than $P_m$ must be small for almost all students, since the total capacity of these colleges is small (since $m$ is chosen to be small). Now consider a student with value greater than $P_m - \EE[X^{(C)}].$ Since $P_m - P_1$ is significantly larger than $P_m - \EE[X^{(C)}]$, it follows that the probability such a student can afford college $1$ is near $1$; therefore, the probability that the student can afford a college overall is near $1$. In this way, we have shown that $p_\mu(v)$ resembles a step function.

A key challenge in the proof is to select the appropriate notions of ``small'' and ``large,'' since these choices are sometimes competing. For example, we require a large number of colleges in a small interval in case 1, but this conflicts with our desire for a ``large gap'' in case 2. In this way, the dividing line between the two cases must be chosen in a way that satisfies both conditions. We leave the details of these compromises to the full proof in the appendix.

\paragraph{\Cref{thm:amplifying} Proof Outline.}
We now outline the proof of \Cref{thm:amplifying}, which is somewhat more straightforward. We begin with some high-level intuition for why long-tailed distributions induce this behavior. Consider two students with values $v_1$ and $v_2$. Suppose that $v_2 = v_1 - d$ for some $d > 0$. Now consider the likelihood that each applicant can afford a firm with cutoff $P$. These probabilities are $\Pr[v_1 + X > P]$ and $\Pr[v_2 + X > P]$ respectively. Consider the ratio
\begin{equation}
    \frac{\Pr[v_2 + X > P]}{\Pr[v_1 + X > P]} = \frac{\Pr[X > P - v_1 + d]}{\Pr[X > P - v_1]}.
\end{equation}
Then since $\DD$ is long-tailed,
\begin{equation}\label{eq:monstera}
    \lim_{P\rightarrow \infty} \frac{\Pr[X > P - v_1 + d]}{\Pr[X > P - v_1]} = 1.
\end{equation}
This tells us that if a college's cutoff is large, two students (with different values) can afford the college with approximately equal probability. To leverage this fact, we will show that almost all colleges must have large cutoffs. (Roughly speaking, this is true because if there were a large number of colleges with small cutoffs, then more students would be able to afford at least one of these firms than there is total capacity.) Then, as in the proof of \Cref{thm:attenuating}, we may ``ignore'' the small remaining number of colleges with low cutoffs. Note that since we require that the probabilities that two students can afford \textit{at least one} college to be similar, \eqref{eq:monstera} alone is insufficient; our actual analysis is more careful.

\section{Extension: Coalitions within a Broader Economy}
\label{sec:extended}

In our basic model, we considered economies in which all colleges share the same true preferences (but have different estimated preferences). We now extend our model so that we can analyze \textit{subsets} of colleges that share the same true preferences (but again have different estimated preferences). We call such subsets \textit{coalitions}.

We briefly sketch the structure of the extended model. In the extended model, there is a broader set of colleges $C^*$. Within this set of colleges, there is a subset of colleges $C\subseteq C^*$ which forms a coalition. This means that---as in the basic model---colleges in $C$ share the same true preferences over students determined by each student's true value $v$, but each forms rankings based on independent noisy draws of student values. Now, however, there are colleges in $C^*$ but not $C$ that may have arbitrary true and estimated preferences.

We then analyze the probability that a student with a given value can \textit{afford} a college in $C$, obtaining analogs to results in the basic model. Contrasting with our main results, the results in this section focus on the probability a student can \textit{afford} a college rather than the probability a student \textit{matches} with a college. This is because whether or not a student matches with a college in a coalition depends significantly on their preferences with respect to colleges outside of the coalition.

Also distinct from results in the basic model, our extended results will refer to a subset of colleges $C'\subseteq C$ within the coalition $C\subseteq C^*,$ such that $|C'|/|C| \approx 1.$ This means that our results in this section are somewhat weaker than what is shown in the basic model. On the other hand, these (somewhat) weaker results can be shown with fewer assumptions. 

As in the analysis of our basic model, our analysis will leverage the cutoff structure of stable matching. Again, we will largely be able to take cutoffs as given, and use market-clearing conditions indirectly. Furthermore, our analysis will depend essentially only on the cutoffs within a coalition. In this way, the key pieces of analysis from the basic model will port over directly.

\subsection{Model and Definitions}

\paragraph{General noisy economy.}
There is a continuum of students of measure 1 to be matched with a finite set of colleges $C^* = \{1, 2, \cdots, C^*\}.$ Each student is identified by their \textit{true type} $\theta = (\bm{v}, \succ)$, where
\begin{equation}
    \bm{v} = (v_1, v_2, \cdots, v_{C^*}) \in \RR^C
\end{equation}
are the \textit{true values} of the student. Here, $v_c$ is the true value of the student at college $c$. In contrast to the basic model, students have multiple true values, which differ across colleges. 

A student with true values $\bm{v}$ has \textit{approximate values} $\hat{\bm{v}},$ where
\begin{equation}
    \hat{\bm{v}} = (\hat{v}_1, \hat{v}_2, \cdots, \hat{v}_{C^*})
\end{equation}
is a random variable over $\RR^{C^*}$. ($\hat{v}_1, \hat{v}_2, \cdots, \hat{v}_{C^*}$ need not be independent.) Thus, a student $(\bm{v},\succ)$ has true value at college $c$ equal to $v_c$, and the college obtains an approximation of the student's value equal to $\hat{v}_c.$

An \textit{economy} is given by an ordered pair $(\rho, \vec{S}, \hat{\bm{V}}),$ where $\rho$ is a probability measure over the set of student types $\Theta = \RR^{C^*}\times \mathcal{R}$ and $\vec{S}$ is a vector of college capacities. As before, we hold the total capacity of colleges to be a fixed constant $S\in (0,1)$, and assume that all students prefer being matched to any college over being unmatched. $\hat{\bm{V}} = \{\hat{\bm{v}}\}_{\bm{v}\in \RR^{C^*}}$ is a set of independent random variables over $\RR^{C^*}$, where $\hat{\bm{v}}$ give the approximate values of a student with true values $\bm{v}$. For example, one could choose $\hat{\bm{V}}$ such that $\hat{\bm{v}}$ is equal to $\bm{v}$ perturbed by noise drawn from a specified multivariate normal distribution for all $\bm{v}\in \RR^C$. Here, $\hat{\bm{V}}$ generalizes the role of the noise distribution $\DD$ in the basic model, by allowing for more broader relationships between true values and approximate values.

(Formally, we require $\hat{\bm{V}}$ to be chosen in a way that induces a valid probability distribution over true values and approximate values, where $\rho$ specifies the marginal distribution of true values, and $\hat{\bm{v}}$ specifies the conditional distribution for each true value.)

\paragraph{Cutoff characterization of stable matching.} 
Consider an economy $E=(\rho, \vec{S}, \hat{\bm{V}}).$ Given cutoffs $\vec{P} = (P_1, \cdots, P_{C^*}),$ we have that a student with true values $\bm{v}$ can afford college $c$ if and only if $\hat{v}_c > P_c$. A student's demand $D^{\theta}(\vec{P})$ at cutoffs $\vec{P}$ is equal to the college they most prefer among those they can afford; $D^{\theta}(\vec{P})=\emptyset$ if the student cannot afford any college. Note that $D^{\theta}(\vec{P})$ is a random variable. Then, as in the basic model, the cutoffs $\vec{P}$ are market-clearing if and only if
\begin{equation}
    \int_{\theta} \Pr[D^{\theta}(\vec{P})=c]\,d\rho(\theta) = S_c
\end{equation}
for all $c\in {C^*}$.\footnote{Capacities are exactly filled since the total capacity $S$ is less than the total measure of students $1$, and because every student prefers to be matched over being unmatched.} If $\vec{P}$ is market-clearing, then the function $\mu:\theta\mapsto D^{\theta}(\vec{P})$ is a stable matching.

\paragraph{College coalitions.}
We are interested in analyzing the behavior of subsets of colleges in an economy that share the same true preferences over students, but who each make noisy decisions. We now formally define these subsets.

In an economy $E=(\rho, \vec{S}, \hat{\bm{V}})$, a subset of colleges $C\subseteq {C^*}$ is a $(\DD,\eta)$\vocab{-coalition} if and only if the following conditions are met:
\begin{enumerate}
    \item The true value of a student is same at each college in $C$.\footnote{Formally, the $\rho$-measure of students who have different true values at two colleges in $C$ is equal to $0$: 
    \begin{equation}
        \rho\left(\{(\bm{v}, \succ) \in \Theta: \text{exists }c,c'\in C\text{ such that }v_c\neq v_c'\}\right) = 0.
    \end{equation}
    }
    If this condition is satisfied, for a student with true values $\bm{v}$, we simply use $v$ to refer to the common true value of the student at colleges in $C$. In other words, $v_c=v$ for all $c\in C$. We call $v$ the student's \textit{true value at $C$}.
    \item The approximate values of a student at colleges in $C$ is equal (in distribution) to their true value at $C$ perturbed by independent noise drawn from $\DD$: For all $\bm{v}\in \RR^{C^*},$
    \begin{equation}
        \hat{v}_c \eqdist v + X_c
    \end{equation}
    for all $c\in C$,
    where $X_1, X_2, \cdots, X_C$ are drawn i.i.d. according to $\DD$.
    \item The distribution of true values at $C$ is given by the probability measure $\eta$. That is,
    \begin{equation}
        \rho((\bm{v},\succ): v\in C) = \eta(V)
    \end{equation}
    for all $V\subseteq \RR.$
\end{enumerate}

In essence, a subset of colleges form a coalition if they behave like the full set of colleges in the basic model---i.e., the colleges in the coalition have the same true value for each student and each obtain independent noisy estimates of this value. (But, as in the basic model, students may have arbitrary preferences over colleges in the coalition, and colleges may differ in their capacities.) Indeed, one may check that in the basic model, the entire set of colleges $C$ forms a $(\DD,\eta)$-coalition.\footnote{The reader might observe that the conditions for a coalition are somewhat simpler than in the basic model: the constants $\alpha$ and $\gamma$ governing regularity conditions are notably absent in the extended model. The reason is that these regularity conditions are not necessary for the somewhat weaker results we establish in the extended model. On the other hand, the regularity conditions are also not sufficient to establish stronger results in the extended model, as we explain below.} As we will see, coalitions satisfy very similar properties to the full set of colleges in the basic model.

\paragraph{Preliminaries.}
Consider an economy $E = (\rho, \vec{S}, \hat{\bm{V}})$ and a $(\DD, \eta)$-coalition $C$. Let $\mu$ be a stable matching in $E$ with corresponding market-clearing cutoffs $\vec{P}$. Our results focus on the probability that a student $\theta=(\bm{v},\succ)$ can \textit{afford} a college in $C'\subseteq C$. Recall that the student's true value at $C$ is denoted simply by $v$. Then the probability the student can afford a college in $C'$ is
\begin{equation}
    \Pr[v + X_c > P_c\text{ for some $c\in C'$}],
\end{equation}
which does not depend on the true values of the student at colleges outside of the coalition, nor the preferences of the student. We set
\begin{equation}
    p_\mu(v, C') := \Pr[v + X_c > P_c\text{ for some $c\in C'$}]
\end{equation}
to denote the probability that a student with true value $v$ at coalition $C$ can afford a college in $C'\subseteq C$. In this way, we can begin to see how the setup of the extended model essentially reduces to that of the basic model, since the quantity we are interested in depends only on the true value of a student at the coalition $C$.
Given that the true values of a student at colleges outside of $C$ are irrelevant, we will---when unambiguous---refer to the true value of a student at the coalition $C$ as simply the true value of the student.

\subsection{Results}
We now state analogues of \Cref{thm:attenuating} and \Cref{thm:amplifying} in the extended model. 

\begin{theorem}[Attenuation in Coalitions]\label{thm:attenuating-extended}
    Let $\DD$ be max-concentrating. Then for any $\epsilon > 0$, there exists $C(\epsilon)$ such that the following holds.
    Let $C$ be a $(\DD, \eta)$-coalition in an economy $E$ with stable matching $\mu$ such that $|C|>C(\epsilon)$. Then there exists $v'\in \RR$ and $C'\subseteq C$ with $\frac{|C'|}{|C|} > 1 - \epsilon$, such that
    \begin{equation}
        p_\mu(v, C') < \epsilon \qquad\text{ for all $v<v'-\epsilon$},
    \end{equation}
    and
    \begin{equation}
        p_\mu(v, C) > 1 - \epsilon \qquad\text{ for all $v>v'+\epsilon$}.
    \end{equation}
\end{theorem}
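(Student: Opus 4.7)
The plan is to adapt the proof of \Cref{thm:attenuating} to the coalition setting. As emphasized in \Cref{sec:extended}, $p_\mu(v, C')$ depends only on the cutoffs $\{P_c\}_{c \in C}$ within the coalition and on i.i.d.\ noise at those colleges, so the analysis is effectively local to $C$ and the cutoffs can be taken as given. The new flexibility is the subset $C' \subsetneq C$, which substitutes for the absent regularity conditions on $\eta$ and on college capacities.

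Following \Cref{sec:proof-outline}, I would first search for a dense cluster of cutoffs. Fix $\delta < \epsilon/4$ and $M = M(\epsilon, |C|)$ large enough that $X^{(M)}$ concentrates within $\epsilon/8$ of its mean (possible by max-concentration for $|C|$ sufficiently large), and let $P^*$ be the smallest real number such that $[P^*, P^* + \delta]$ contains at least $M$ cutoffs of colleges in $C$. The argument then splits by the number of cutoffs below $P^*$, in exact parallel to the outline in \Cref{sec:proof-outline}.

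In \emph{Case 1} (fewer than $\epsilon|C|/2$ cutoffs below $P^*$), I would take $C' = \{c \in C : P_c \geq P^*\}$ and $v' = P^* - \EE[X^{(|C'|)}]$, so that $|C'|/|C| > 1 - \epsilon/2$. The first inequality follows from $p_\mu(v, C') \leq \Pr[v + X^{(|C'|)} > P^*]$ (valid because every cutoff in $C'$ is at least $P^*$) together with max-concentration of $X^{(|C'|)}$ for $v < v' - \epsilon$; the second follows from $p_\mu(v, C) \geq \Pr[v + X^{(M)} > P^* + \delta]$ (using the $M$ cluster colleges) and max-concentration, since $\delta < \epsilon/4$. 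In \emph{Case 2} (at least $\epsilon|C|/2$ cutoffs below $P^*$), the minimality of $P^*$ forces those cutoffs to be spread, so pigeonhole produces some $m \leq \epsilon|C|/2 + 1$ with $P_m - P_1$ significantly exceeding $\EE[X^{(|C|)}]$ (where $P_1 \leq \dots \leq P_{|C|}$ denote the sorted cutoffs). Taking $C' = \{c \in C : P_c \geq P_m\}$ and $v' = P_m - \EE[X^{(|C'|)}]$, the first inequality follows as in Case 1, and the second follows because the gap $P_m - P_1$ makes $P_1 - v$ highly negative for $v > v' + \epsilon$, so $\Pr[v + X_1 > P_1]$ alone exceeds $1 - \epsilon$.

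The main obstacle is Case 2. The basic proof invokes the regularity $S_c < \alpha/|C|$ to argue that the mass of students matching with low-cutoff colleges is negligible---essential for a pointwise first inequality. In the extended model this regularity is unavailable, and the resolution is precisely the subset flexibility: excluding the $m-1$ lowest-cutoff colleges from $C'$ sidesteps the capacity argument entirely, since the first inequality is required only on $C'$, not on $C$. The delicate step is coordinating the case-partition threshold with the parameters $\delta$, $M$, and $m$ so that $|C'|/|C| > 1 - \epsilon$ holds simultaneously with the pigeonhole spread below $P^*$ being large enough to dominate $\EE[X^{(|C|)}]$ and yield a clean step function.
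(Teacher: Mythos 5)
Your strategy is correct and matches the paper's: reuse the two-case dense-cluster argument from \Cref{thm:attenuating}, and use the freedom to drop a small fraction of colleges from $C'$ precisely where the basic proof would have invoked the capacity bound $S_c < \alpha/C$ and the $\gamma$-H\"older condition on $\eta$. What the paper actually does is even shorter than your re-derivation: it observes that \Cref{prop:tomato} and \Cref{prop:beet}, which are stated purely in terms of the cutoffs, the max-concentration rate, and the choices $\delta=C^{\phi_1}$, $M=C^{\phi_2}$, $m=C^{\phi_3}$, never touch the regularity conditions on $\eta$ or $\vec{S}$---those enter only through \Cref{prop:duck-1}, \Cref{prop:duck-2}, \Cref{prop:goose-1}, \Cref{prop:goose-2}. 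So the theorem follows by citing \Cref{prop:tomato} (Case 1) and \Cref{prop:beet} (Case 2) directly, with $C'=\CC_2$ and the corresponding $v'$.

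There is, however, a genuine gap in your Case 1. You set $v'=P^*-\EE[X^{(|C'|)}]$ and prove the lower bound $p_\mu(v,C)>1-\epsilon$ via $\Pr[v+X^{(M)}>P^*+\delta]$. For $v>v'+\epsilon$ this gives $\Pr[X^{(M)}>t]$ with $t<\EE[X^{(|C'|)}]+\delta-\epsilon$; but concentration of $X^{(M)}$ pushes this probability toward $1$ only when $t$ is below $\EE[X^{(M)}]$, and since $|C'|\approx|C|\gg M$ you have $\EE[X^{(|C'|)}]\ge\EE[X^{(M)}]$ with a gap that you never control. Your stated criterion for choosing $M$ (``large enough that $X^{(M)}$ concentrates within $\epsilon/8$'') is a condition on $M$ alone and is satisfied by a constant $M_0(\epsilon)$, in which case $\EE[X^{(|C'|)}]-\EE[X^{(M_0)}]$ is unbounded in $|C|$ and the bound fails. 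Either of two fixes closes this: (a) take $M$ polynomially growing in $|C|$, in which case a telescoping of the bound $\EE[X^{(2n)}]\le\EE[X^{(n)}]+2\sqrt{\Var[X^{(n)}]}$ (the calculation in \Cref{lem:log-bound}) shows $\EE[X^{(|C'|)}]-\EE[X^{(M)}]=O(M^{-\beta/2})\to0$; or (b) use the paper's union-bound step, which rewrites $\Pr[X^{(|C'|)}>\cdot]\le(|C'|/M)\Pr[X^{(M)}>\cdot]$ and lets both thresholds be anchored at $\EE[X^{(M)}]$ (this also forces $M\gtrsim|C|^{1/(1+\beta)}$). Your Case 2, by contrast, is sound once $M$ grows sub-linearly: the pigeonhole spread is $\Theta(\delta\,|C|/M)$, which dominates $\EE[X^{(|C|)}]=o(\log|C|)$ as required.
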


\Cref{thm:attenuating-extended} shows that there is a value $v'$ at which there is a sharp change in behavior: For all students with true value $v<v'$, the probability that they can afford a college in $C'\subseteq C$ is very small, where $C'$ contains almost all of the colleges in $C$. Meanwhile, for all students with true value $v>v'$, the probability that they can afford a college in $C$ is almost $1$.

The result also implies a sort of ``approximately no justified envy'' property in noisy matching markets when $\DD$ is max-concentrating: the pairs of students with values $v_1, v_2$ such that $v_1 > v_2$ but where the student with value $v_2$ can afford a college in $C'$ while the student with value $v_1$ cannot afford a college in $C$ is vanishing in measure.

\begin{theorem}[Amplification in Coalitions]\label{thm:amplifying-extended}
    Let $\DD$ be long-tailed. Then for any $\epsilon > 0$, there exists $C(\epsilon)$ such that the following holds.
    Let $C$ be a $(\DD, \eta)$-coalition in an economy $E$ with stable matching $\mu$ such that $|C|>C(\epsilon)$. Then there exists $S'\in \RR$ and $C'\subseteq C$ with $\frac{|C'|}{|C|} > 1 - \epsilon$, such that
    \begin{equation}
        |p_\mu(v, C')-S'| < \epsilon
    \end{equation}
    for all $v$ except on a set of $\eta$-measure at most $\epsilon$.
\end{theorem}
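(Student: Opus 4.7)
The plan is to adapt the proof of \Cref{thm:amplifying} from the basic model to the coalition setting. Fix $\epsilon > 0$ and choose $v_{\min} < v_{\max}$ such that $\eta([v_{\min}, v_{\max}]) \ge 1 - \epsilon$ and also $\eta([v_{\min}, \infty)) > S$ (both possible since $\eta$ is a probability measure and $S < 1$); values outside $[v_{\min}, v_{\max}]$ will form the exceptional set. Order the cutoffs of colleges in $C$ as $P_{(1)} \le P_{(2)} \le \cdots \le P_{(|C|)}$, let $C'$ consist of the colleges holding the top $\lceil (1-\epsilon)|C| \rceil$ cutoffs, and set $P^* = \min_{c \in C'} P_c$. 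By construction $|C'|/|C| \ge 1 - \epsilon$ and $P_c \ge P^*$ for every $c \in C'$.

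The crucial step is to show $P^*$ must grow unboundedly as $|C|$ grows. Suppose for contradiction that $P^* \le P_0$ for some fixed $P_0$. Then at least $\epsilon|C|$ colleges in $C$ have cutoff $\le P_0$, and since $\DD$ is long-tailed (hence has unbounded upper support), $\delta := \Pr[X > P_0 - v_{\min}] > 0$. For each $v \ge v_{\min}$, the probability such a student cannot afford \textit{any} of these low-cutoff colleges is at most $(1-\delta)^{\epsilon|C|}$, tending to $0$ as $|C| \to \infty$. Integrating against $\eta$ yields $\int p_\mu(v, C) \, d\eta(v) \to \eta([v_{\min}, \infty)) > S$, contradicting the market-clearing bound $\int p_\mu(v, C) \, d\eta(v) \le S$. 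This bound holds because a student who can afford some college in $C$ can afford some college in $C^*$ and hence is matched, while the total matched mass equals the total capacity $S$.

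With $P^*$ arbitrarily large (by taking $|C|$ large), the long-tailed property finishes the job. For any $\epsilon' > 0$ and $d := v_{\max} - v_{\min}$, long-tailedness yields $x_0$ such that $|\Pr[X > x + d']/\Pr[X > x] - 1| < \epsilon'$ for all $x \ge x_0$ and $|d'| \le d$. Choosing $|C|$ large enough that $P^* - v_{\max} \ge x_0$ and $\Pr[X > P^* - v_{\max}] < \epsilon'$, the quantities $q_c(v) := \Pr[X > P_c - v]$ satisfy both $|q_c(v_1)/q_c(v_2) - 1| < \epsilon'$ and $q_c(v) < \epsilon'$ uniformly for $c \in C'$ and $v_1, v_2 \in [v_{\min}, v_{\max}]$. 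Using $\log(1 - q) \approx -q$ for small $q$ and summing, the multiplicative closeness of the $q_c$'s passes through the product to yield multiplicative closeness of $1 - p_\mu(v_1, C')$ and $1 - p_\mu(v_2, C')$, hence $|p_\mu(v_1, C') - p_\mu(v_2, C')| = O(\epsilon')$. Setting $S' := p_\mu(v_{\min}, C')$ and choosing $\epsilon'$ small enough completes the bound on $[v_{\min}, v_{\max}]$.

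The main obstacle is quantitative balancing: $P^*$ must grow fast enough (driven by $|C|$) that the long-tailed ratios are uniformly near $1$ and the $q_c$'s are uniformly small, while the multiplicative closeness must survive taking the product over $|C'|$ potentially large many colleges. Following the structure of the basic-model proof, this should be tractable with careful tracking of the dependencies between $\epsilon$, $\epsilon'$, $|C'|$, and the long-tailed thresholds, but is the most delicate point in the argument.
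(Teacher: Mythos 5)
Your proposal is correct and follows essentially the same route as the paper. You choose $C'$ to be the top $(1-\epsilon)$ fraction of coalition cutoffs (exactly the set $\CC_2$ from the basic-model proof), show via market clearing that $P^*$ must grow without bound (this is precisely Lemma~\ref{lem:unbounded-cutoffs}, and your observation that the bound $\int p_\mu(v,C)\,d\eta\le S$ transfers to the coalition setting is exactly the point the paper leans on), and then use the long-tailed ratio property to argue that the per-college affording probabilities for two values $v_1,v_2$ are multiplicatively close. The paper packages the final step as \Cref{prop:lt-approx-F2}, which controls the product $\prod_c(1-q_c)$ by first establishing the auxiliary bound $q_c(v_+)\le\sigma/C$ for some constant $\sigma$, and the extended-theorem proof then simply invokes \Cref{prop:lt-approx-F2} together with monotonicity of $p_\mu(\cdot,\CC_2)$; your log/Taylor version of the product step is equivalent, but you should be explicit that $\sum_c q_c(v_{\min})\le -\log(1-p_\mu(v_{\min},C'))$ is bounded because $p_\mu(v_{\min},C')\le S/(1-\epsilon/2)<1$ by monotonicity and market clearing --- this is the quantitative anchor you flag as ``the most delicate point,'' and it is exactly what the paper's $\sigma/C$ lemma supplies.
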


\Cref{thm:amplifying-extended} shows that there exists a large subset $C'\subseteq C$ such that the probability of affording a college in $C'$ is essentially the same for all students, regardless of a student's true value at the coalition.\footnote{A natural question is if the restriction to a large subset $C'\subseteq C$ (absent in \Cref{thm:attenuating} and \Cref{thm:amplifying}) is necessary for \Cref{thm:attenuating-extended} and \Cref{thm:amplifying-extended}. To see why it is, consider the following example. We focus on the attenuating regime, but the same reasoning extends to the amplifying regime. Suppose that all colleges share the same true values of students and that students have shared preferences over these colleges. Then, colleges can be ranked in strict order of popularity. Now consider a coalition that consists of all the most popular colleges and the single least popular college. Then the set of students who can afford a college in the coalition is essentially the same as the set of students who can afford the single least popular college (which has a very low cutoff in comparison to other colleges in the coalition). Since decisions by a single college are noisy, we have to ``remove'' this college from the coalition to establish the desired result. The result shows, however, that at most a vanishing proportion of colleges need to be removed from a coalition to obtain the desired behavior. 

This example raises the question of why such a restriction is \textit{not} required in the basic model. The reason is that \Cref{thm:attenuating} and \Cref{thm:amplifying} pertain to the probability that a student \textit{matches} to a college in $C$: in such a case, a vanishing subset of colleges does not have to be removed, since the number of students matched to these colleges is also vanishing. This cannot be leveraged when analyzing the probability that a student can \textit{afford} a college in $C$. This perhaps raises yet another question, of why our analysis in the extended model focuses on the probability a student can afford rather than match with a college in the coalition $C$. The reason is that the matching behavior can be altered significantly by changing the distribution of student preferences with respect to colleges outside of the coalition. However, by imposing stricter assumptions on the structure of preferences---e.g., students strictly prefer colleges in the coalition to those outside of the coalition---it is likely that results on the matching behavior can be obtained in the extended setting.}

\Cref{thm:attenuating-extended} and \Cref{thm:amplifying-extended} are especially powerful when applied to many coalitions at once. For example, one may consider an economy comprised of colleges drawn from a finite number of types, such that students do not distinguish between colleges of the same type. Then if the colleges of each given type form a coalition,  \Cref{thm:attenuating-extended} suggests that under max-concentrating noise, large economies will behave as if without noise, both in terms of who is matched and to where (which type) each student is matched. Meanwhile, \Cref{thm:amplifying-extended} suggests that under long-tailed noise, large economies will not sort students based on their value to colleges. In other words, when additional structure is assumed in an economy, \Cref{thm:attenuating-extended} and \Cref{thm:amplifying-extended} imply results about the overall outcome of an economy.

Full proofs of \Cref{thm:attenuating-extended,thm:amplifying-extended} are given in \Cref{sec:proofs-extended}, and follow essentially from the analysis given for \Cref{thm:attenuating,thm:amplifying}.

\begin{figure}
    \begin{center}
        \includegraphics[width=\linewidth]{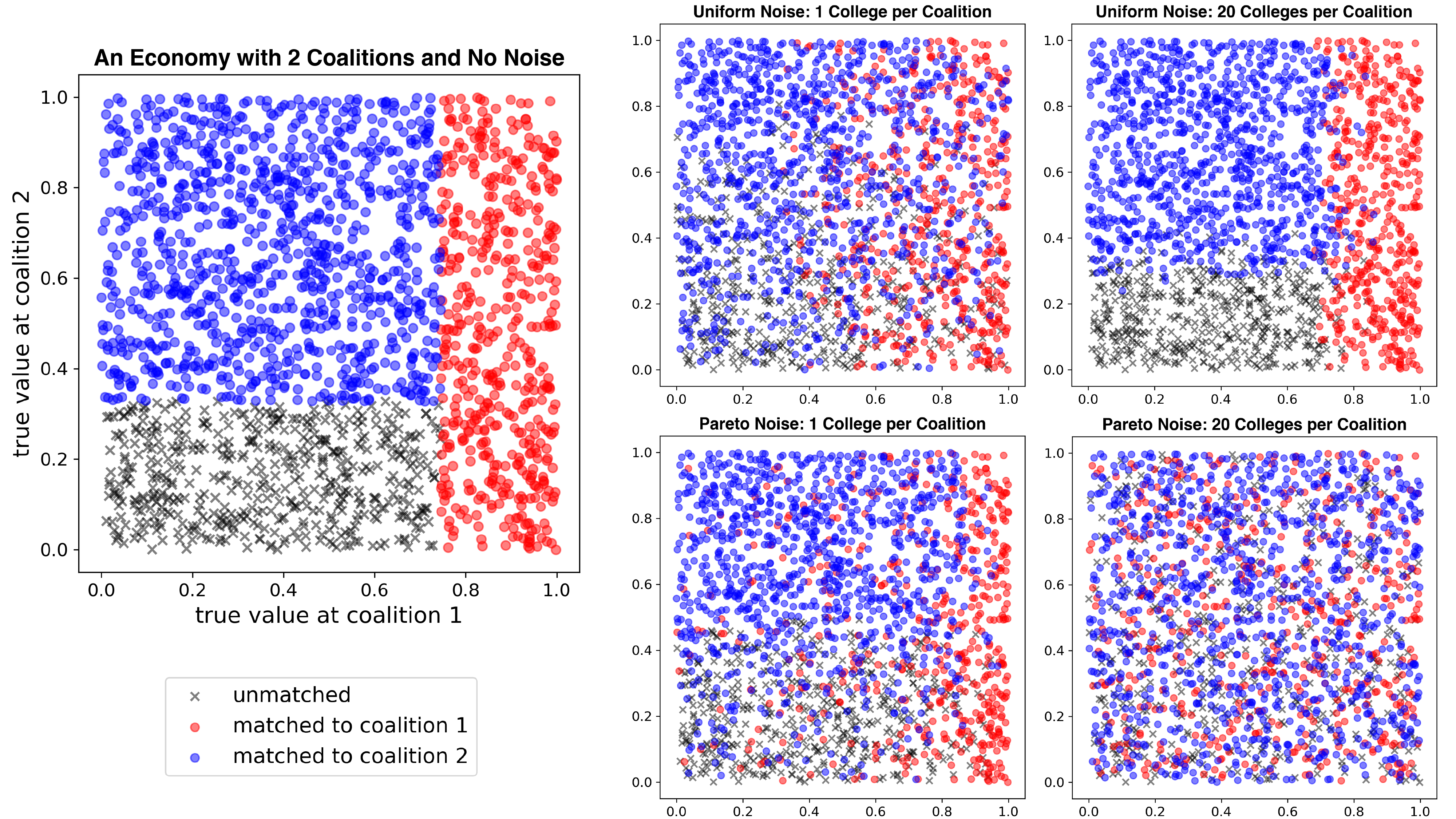}
    \end{center}
    \caption{An economy with 2000 students and two coalitions of colleges. Coalition $1$ (red) has total capacity $500$ and coalition $2$ (blue) has total capacity $1000$. All students strictly prefer colleges in coalition $1$ to those in coalition $2$. Thus, in the noiseless case (left), the students with the highest true values at coalition 1 ($v_1>\frac{3}{4}$) match to a college in coalition 1, those out of the remaining students with the highest true values at coalition 2 ($v_2 > \frac{1}{3}$) match to a college in coalition 2, and the remaining students are unmatched. The top row on the right illustrates \Cref{thm:attenuating-extended}: as coalition size grows large with a max-concentrating distribution, the market approaches the noiseless setting. The bottom row illustrates \Cref{thm:amplifying-extended}: with a long-tailed distribution, noise is amplified as coalition size increases. In particular, the bottom right matching is \textit{noisier} than the bottom left matching, and is approaching uniform at random matching.}
    \label{fig:two-coalitions}
\end{figure}

\subsection{Computational Experiment: Noisy Economy with Two Coalitions}

We now illustrate our results using simulations. Consider an economy with colleges $1, 2, \cdots, 2C$. There are two coalitions $\CC_1 = \{1, 2, \cdots, C\}$ and $\CC_2 = \{C + 1, C + 2, \cdots, 2C\}.$ Therefore, we may think of each student as having two true values $v_1$ and $v_2$, such that colleges $c \in \CC_1$ rank students based on a noisy estimate $v_1 + X_{c,1}$ and colleges $c\in \CC_2$ rank students based on a noisy estimate $v_2 + X_{c,2}.$ Here, $X_{c,1}$ and $X_{c,2}$ are drawn i.i.d. from a noise distribution $\DD$. Let the true values of students $(v_1, v_2)$ be distributed uniformly on $[0,1]\times [0,1],$ and assume that students have preferences as follows: each student prefers all colleges in $\CC_1$ to those in $\CC_2$, but prefers colleges within a coalition uniformly at random. We also assume that colleges in coalition $\CC_1$ each have capacity $0.25/C$ and colleges in coalition $\CC_2$ each have capacity $0.5/C$, meaning that the total capacity of $\CC_1$ is $0.25$ and the total capacity of $\CC_2$ is $0.5$. This essentially ``normalizes'' the economy according to $C$.

First consider what would happen in such an economy without noise. We calculate what the cutoffs are in this economy. Intuitively, the cutoffs within each coalition are the same by symmetry. Denote these cutoffs by $P_1^*$ and $P_2^*$ respectively. Since all students prefer colleges in $\CC_1$ to $\CC_2$, the measure of students who match to a college in $\CC_1$ is equal to $1 - P_1^*$; since the total capacity of $\CC_1$ is $0.25$, this implies that $P_1^* = 3/4.$ The measure of students who are matched to a college in $\CC_2$ is thus $\frac{3}{4}(1 - P_2^*)$. Since the total capacity of $\CC_2$ is $0.5$, this implies $P_2^* = 1/3.$

We compare the outcomes of noisy economies to this ``noise-free'' outcome. We run simulations with $2000$ students with values $(v_1, v_2)$ drawn uniformly at random on $[0,1]\times [0,1]$. We consider two choices of the number of colleges $C$, $1$ and $20$, and two choices of the noise distribution $\DD$, the uniform $[0,1]$ distribution and the Pareto distribution with shape $2$ and scale $0.3$. These distributions are chosen such that in the case $C=1$, the resulting outcomes are similarly noisy. Outcomes are plotted in \Cref{fig:two-coalitions}. The main observation is that when $C=20$, the outcome is approximately noise-free under the uniform distribution, reflecting \Cref{thm:attenuating-extended}; meanwhile the outcome appears entirely noisy under the Pareto distribution, reflecting \Cref{thm:amplifying-extended}.

\section{An Approach to Study Imperfect Preference Formation}\label{sec:discussion}
We now take a step back, abstracting away the details of our analysis and the specific questions we answer. We suggest that the general approach we have taken can be adapted to answer a much broader class of questions about the effects of imperfect preference formation in matching contexts. 

Stable matching has proven to be a useful solution concept to understand how preferences induce matchings when participants are embedded in a broader market. However, the preference formation process is necessarily imperfect. Participants in a market have limited information on which to form their preferences---and even given this information, participants may form preferences in a noisy or biased way. Our approach is to start with a set of true preferences and then \textit{exogenously} specify how these true preferences map onto approximate preferences. This map, denoted by $\mathcal{A}$ in the diagram below, models the imperfect preference formation process. In the present work, for example, $\mathcal{A}$ takes the true preferences of colleges and perturbs these preferences by independent noise. A large body of work studies such imperfect preference formation processes.
\begin{figure}[tbh]
\centering
\begin{tikzcd}
\text{true preferences} \arrow [dr,dashed, bend right,"\pi \circ \mathcal{A}"] \arrow[r,"\mathcal{A}"] &
  \text{approximate preferences} \arrow[d,"\pi"]\\
& \text{matching}
\end{tikzcd}
\caption{A framework for studying the effect of imperfect preference formation.}
\end{figure}
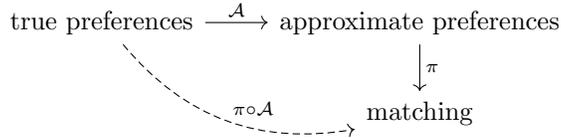

 However, we then specify a map $\pi$ from approximate preferences to a matching outcome; a natural choice of $\pi$ maps preferences to a stable matching. We study the induced mapping from true preferences to matching outcomes given (intuitively) by $\pi\circ \mathcal{A}$. Beginning with a set of true preferences allows us to compare outcomes under different maps $\AA$; in particular, it is useful to take $\mathcal{A}$ to be the identity map as a baseline. In the present work, we saw how the distributional properties of the noise can drastically affect the outcome of the market. More generally, our findings illustrate that the interaction between $\mathcal{A}$ (the preference formation process) and $\pi$ (stable matching) is not necessarily clear \textit{a priori}---for example, exhibiting emergent behavior in large markets.

We briefly suggest some other questions that may be (further) answered using this framework:
\begin{itemize}
    \item What happens when some firms are biased towards a group of job candidates? \citep{bertrand2004emily, quillian2017meta}. Is this bias amplified by the market, or does the presence of unbiased firms correct this bias? One may also ask analogous questions, replacing bias with statistical discrimination \citep{phelps1972statistical}.
    \item What happens when all firms use the same algorithm to evaluate applicants? \citep{kleinberg2021algorithmic}. In the resulting matching outcome, does this harm or benefit applicants and firms? \citet{peng2023monoculture} applies the framework to this question.
    \item What happens when each job candidate only applies to a subset of the full set of firms? \citep{immorlica2003marriage, haeringer2009constrained, calsamiglia2010constrained}. Can markets with short preference lists approximate a market with full preference lists?
    \item What happens when students use common college rankings to approximate their preferences, as compared to conducting independent research on colleges? Would students benefit more from accurate information on shared preference components (e.g., graduation rates) or idiosyncratic ones (e.g., a specialty program)?
    \item What happens when the above characteristics apply to both sides of the market simultaneously? Do the same effects occur, or does, e.g., two-sided noise amplify in ways that one-sided noise does not?
\end{itemize}

A few more remarks: The decomposition of the matching process into two components $\AA$ and $\pi$ is useful both methodologically and in terms of tractability. Methodologically, the decomposition allows us to cleanly isolate and compare the effect of imperfect preference formation $\AA$. Exogenizing the preference formation process is useful in this way. At the same time, however, it limits the analysis of subtler interactions such as the continual updating and strategic use of information during the matching process, as is studied in prior work discussed in \Cref{sec:related-work}. 

We further note that any stable matching model may be used to model $\pi$. In the present work, we use the continuum model of \citet{azevedo2016supply}, though $\pi$ may also be analyzed, for example, using the standard finite model \citep{gale1962college} or a finite capacity continuum model \citep{arnosti2022continuum}. %

\section{Conclusion}\label{sec:conclusion}
In this paper, we demonstrated that matching markets can serve as both strong noise attenuators and strong noise amplifiers, all without colleges sharing information. These results reveal that noisy behavior at the individual level can coalesce to produce striking and non-obvious market-level behavior. The results also reflect an underlying observation: that market outcomes depend on \textit{extremes}: what is the \textit{most favorable} evaluation a student receives from a college?

A few interesting open questions remain in the setup of our model. First, one might ask about the behavior that arises from noise distributions $\DD$ lying between the two regimes of our focus. We focused on these two regimes, since they are fairly broad, and induce the most extreme effects. Precise characterizations of intermediate behavior would be interesting. Second, one might take the results we begin to develop in the extended model (\Cref{sec:extended}) and generalize further. Is it possible, for example, to identify noise attenuation and amplification in a model where all participants (on both sides of the market) have idiosyncratic true preferences?

Finally, as discussed in \Cref{sec:discussion}, we suggest that the approach taken here may find broader applicability in studying the effect of imperfect preference formation at the market level. 

{
\bibliography{bib}
}

\appendix

\section{Proof of Theorem \ref{thm:attenuating} (Full Attenuation)}\label{sec:attenuating}

Consider a student with value $v.$ We would like to show that for all $\epsilon > 0$, there exists $C(\epsilon)$ such that for all $C > C(\epsilon)$ and $\mu\in M(\DD, C)$,
    \begin{equation}
        p_\mu(v) < \epsilon
    \end{equation}
    if $v < v_S$, and
    \begin{equation}
        p_\mu(v) > 1 - \epsilon
    \end{equation}
    if $v > v_S.$

First observe that it suffices to show that for all $\epsilon > 0$, there exists $C(\epsilon)$ such that for all $C > C(\epsilon)$ and $\mu\in M(\DD, C)$,
\begin{equation}
    \int_{v_S}^\infty p_\mu(v)\,d\eta(v) < \epsilon^2,
\end{equation}
meaning that the mass of matched students with true value below $v_S$ vanishes as the number of colleges grows large.\footnote{Indeed, suppose the statement is true for $v < v_S$. Then for $\epsilon$ sufficiently small, $v < v^*$ for $v^*$ satisfying $\eta((v^*, v_S))=\epsilon$. Then for all $C > C(\epsilon)$ and $\mu\in M(\DD,C)$, we have that
\begin{align}
    \epsilon^2 &> \int_{-\infty}^{v_S} p_\mu(v)\,d\eta(v)\\
    &> \int_{v^*}^{v_S} p_\mu(v)\,d\eta(v) \\
    &\ge \eta((v^*, v_S)) p_\mu(v)\\
    &\ge \epsilon p_\mu(v),
\end{align}
so $p_\mu(v) < \epsilon$ as desired. The case for $v > v_S$ is analogous.} Therefore, \Cref{thm:attenuating} follows from the following proposition.
\begin{proposition}\label{thm1v2}
    Let $\DD$ be $\beta$-max-concentrating and $v\in \RR$. Let $\mu \in M(\DD, C)$. Then
    \begin{equation}
        \int_{v_S}^\infty p_\mu(v)\,d\eta(v) = O\left(C^{-K(\beta,\gamma)}\right),
    \end{equation}
    where
    \begin{equation}
        K(\beta,\gamma) := \frac{2\beta\gamma}{3\beta\gamma + 2\beta + 5\gamma + 6}.
    \end{equation}
\end{proposition}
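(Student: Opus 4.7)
The plan is to follow the two-case cutoff strategy outlined in Section 5 and quantify each step to extract a polynomial rate. Given market-clearing cutoffs $P_1 \le \cdots \le P_C$ of $\mu$, introduce parameters $\delta, M, m$ as powers of $C$ (to be optimized at the end). Let $P^*$ be the smallest real number such that the interval $[P^*, P^* + \delta]$ contains at least $M$ cutoffs, and let $m^*$ be the number of cutoffs strictly below $P^*$. The proof splits on $m^* < m$ (Case 1) vs.\ $m^* \ge m$ (Case 2); in each case, I will exhibit a threshold $v^*$ and quantities $\epsilon_1, \epsilon_2 = o(1)$ such that $p_\mu(v) < \epsilon_1$ for $v < v^* - \tau$ and $p_\mu(v) > 1 - \epsilon_1$ for $v > v^* + \tau$, outside a transition set of $\eta$-measure $\le \epsilon_2$. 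Market-clearing $\int p_\mu \, d\eta = S$ together with the uniqueness of $v_S$ then pins $v^*$ to $v_S$ up to $\eta$-measure $O(\epsilon_1 + \epsilon_2)$ and yields both $\int_{-\infty}^{v_S} p_\mu \, d\eta = O(\epsilon_1 + \epsilon_2)$ and $\int_{v_S}^\infty (1 - p_\mu) \, d\eta = O(\epsilon_1 + \epsilon_2)$.

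In Case 1, the fewer-than-$m$ low-cutoff colleges have total capacity at most $m\alpha/C$, so their integrated contribution to \emph{matching} (not merely affording) is bounded by $m\alpha/C$ via market-clearing applied to this subset. For the cluster, $\Pr[\text{afford some cluster college}]$ is sandwiched between $1 - F_{\DD}(P^* + \delta - v)^M$ and $1 - F_{\DD}(P^* - v)^M$, which Chebyshev converts into a near-step function at $v^* = P^* - \EE[X^{(M)}]$ with transition half-width $\tau = O(M^{-\beta/2}\epsilon_1^{-1/2} + \delta)$ using $\Var[X^{(M)}] = O(M^{-\beta})$. The contribution of colleges with cutoff above $P^* + \delta$ to low-value students is controlled by $C \cdot \Pr[X > P^* + \delta - v]$, which is $o(1)$ provided $\delta$ absorbs $\EE[X^{(C)}] - \EE[X^{(M)}]$ plus concentration widths (an inequality enforced by the chosen powers of $C$).

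In Case 2, the $m$ lowest cutoffs $P_1 \le \cdots \le P_m \le P^*$ admit no width-$\delta$ subinterval containing $M$ of them (by minimality of $P^*$), so a pigeonhole on $\lceil (P_m - P_1)/\delta \rceil$ disjoint subintervals gives $P_m - P_1 \ge g := (m/M - 1)\delta$. Set $v^* := P_m - \EE[X^{(C)}]$. For $v < v^* - \tau$, the probability of affording any college with cutoff $\ge P_m$ is at most $C \cdot \Pr[X > P_m - v]$, which is $\epsilon_1$ once $\tau$ exceeds the $X^{(C)}$ concentration width $O(C^{-\beta/2}\epsilon_1^{-1/2})$; the at-most-$m$ colleges with cutoff $< P_m$ contribute at most $m\alpha/C$ by market-clearing as in Case 1. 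For $v > v^* + \tau$, choosing parameters so that $g$ dominates the $X^{(C)}$ scale forces $v > P_1 + \Omega(\EE[X^{(C)}])$, so college $1$ is afforded with probability $\ge 1 - \epsilon_1$. The transition half-width $\tau$ is then converted to $\eta$-measure $\epsilon_2 \lesssim \tau^\gamma$ via the $\gamma$-Hölder regularity of $\eta$.

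The main obstacle is jointly choosing $\delta \sim C^{-a}$, $M \sim C^b$, and $m \sim C^c$ so that the four error sources---$m/C$ (low-cutoff capacity in both cases), $M^{-\beta/2}$ (Case 1 cluster concentration), $C^{-\beta/2}$ relative to $g = m\delta/M$ (Case 2 gap vs.\ global concentration), and $\delta^\gamma$ (Hölder conversion of transition width to $\eta$-measure)---all match a common polynomial rate. This balancing reduces to a small linear program in $(a, b, c)$, and I expect its solution to yield the stated exponent $K(\beta,\gamma) = \tfrac{2\beta\gamma}{3\beta\gamma + 2\beta + 5\gamma + 6}$. All other steps are essentially Chebyshev plus the regularity assumptions, so the most delicate technical work is in this compromise between the two cases.
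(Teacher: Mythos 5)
Your overall plan matches the paper's proof: the same ``lowest dense interval'' construction of $P^*$, the same two cases delineated by the count of cutoffs below $P^*$, the same capacity bound for the low-cutoff colleges, the pigeonhole gap in Case 2, the use of market-clearing $\int p_\mu\,d\eta \le S$ to pin the step location near $v_S$, and the conversion of the transition width to $\eta$-measure via the H\"older condition. Case 2 is essentially correct modulo a typo (you should conclude $v > P_1 + \Omega(g)$, not $v > P_1 + \Omega(\EE[X^{(C)}])$; it is $g \gg \EE[X^{(C)}] = o(\log C)$, via \Cref{lem:log-bound}, that makes Chebyshev on a single draw $X$ go to $1$).

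The genuine gap is in Case 1, specifically in bounding, for low-value $v$, the probability of affording \emph{any} college with cutoff $\ge P^*$. Your two proposed bounds both fail. The sandwich $1 - F_\DD(P^*-v)^M$ is a \emph{lower} bound on the cluster probability when the cluster has $K > M$ colleges, so it cannot serve as the upper bound you need. The naive union bound $C\cdot \Pr[X > P^*+\delta - v]$ over all colleges needs the per-college tail $\Pr[X > t]$ to be $o(1/C)$, but a single-draw Chebyshev at scale $t \approx \EE[X^{(M)}] + \delta + \tau$ cannot deliver that, since $\EE[X^{(M)}] = o(\log C)$ stays far below the $\sqrt{C}$ scale required. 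Your fallback condition that ``$\delta$ absorbs $\EE[X^{(C)}]-\EE[X^{(M)}]$'' is also unattainable: for Gaussian noise $\EE[X^{(n)}] \asymp \sqrt{2\log n}$, so $\EE[X^{(C)}]-\EE[X^{(C^{\phi_2})}] \to \infty$ while $\delta = C^{\phi_1} \to 0$. The missing idea is the paper's grouping trick in \Cref{prop:tomato}(ii): partition the $C$ colleges into $C^{1-\phi_2}$ groups of size $C^{\phi_2}$ each, so that each group's maximum concentrates around $\EE[X^{(C^{\phi_2})}]$---exactly the scale you compare $v$ to---and then union bound over the $C^{1-\phi_2}$ groups. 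This yields $C^{1-\phi_2}\cdot O(\Var[X^{(C^{\phi_2})}]/\delta^2)$, which is polynomially small with the right exponents; no comparison to $\EE[X^{(C)}]$ is needed. Without this step, the Case~1 upper bound does not close and the claimed rate cannot be extracted.
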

(Recall that $\gamma>0$ is a constant such that $\eta([x,x+\delta]) = O(\delta^{\gamma}).$) In other words, \Cref{thm1v2} shows that the mass of students who match with value less than $v_S$ vanishes as $C$ grows large. \Cref{thm1v2} can also be viewed as an approximate ``rural hospital theorem,'' in the sense that the set of students who are matched (or unmatched) is nearly deterministic. We spend the remainder of this section proving \Cref{thm1v2}.

\subsection*{Proof of \Cref{thm1v2}}
Consider any stable matching $\mu\in M(\DD, C)$, so that $C=\{1,2,\cdots,C\}$ is the set of firms. Let $\vec{P}=\vec{P}(\mu)$ be the corresponding vector of market-clearing cutoffs. Without loss of generality, we may assume $P_1\le P_2\le \cdots \le P_C.$ A remarkable feature of the proof is that we need not explicitly determine these cutoffs. Instead, we may analyze them as given, using the knowledge that they are market-clearing.

Define the constants
\begin{align}
    \phi_1 := - \frac{2\beta}{3\beta\gamma + 2\beta + 5\gamma + 6},
    \quad \phi_2 := \frac{5\gamma + 6}{3\beta\gamma + 2\beta + 5\gamma + 6},
    \quad \phi_3 := 1 - \frac{2\beta\gamma}{3\beta\gamma + 2\beta + 5\gamma + 6}.
\end{align}
We will perform most of our analysis in terms of $\phi_1, \phi_2,$ and $\phi_3$, and then substitute in these values at the end.\footnote{This is also the approach we used to find these ``magical'' constants, by first computing various ``error'' terms in terms of $\phi_1, \phi_2,$ and $\phi_3$ can choosing these constants to minimize the total error.}

The key idea of the proof is to identify a ``lowest dense interval'' of cutoffs. Set $\delta = C^{\phi_1}.$ Define $P^*$ to be the smallest real number such that the interval $[P^*, P^*+\delta]$ contains at least $C^{\phi_2}$ cutoffs. If such an interval does not exist, then we set $P^* := \infty.$

\begin{figure}[h]
\vspace{0.5cm}
\begin{center}
\begin{tikzpicture}[scale=2]
\foreach \x in  {-3, -2.5, -2.2, -2.13, -1.5, -1.2, -1.05, -0.45, -0.4, -0.32, -0.2, -0.13, -0.09, -0.02, 0.1, 0.2, 0.3, 0.7, 0.9, 1.1, 1.9, 1.97, 2.2, 2.4, 2.7, 2.8, 3.1}
\draw[shift={(\x,0)},color=black, thick] (0pt,5pt) -- (0pt,-5pt);
\draw[shift={(-0.6,0)},color=purple] (0pt,-5pt) -- (0pt,-5pt) node[below] 
{$P^*$};
\draw[shift={(0.3,0)},color=purple] (0pt,-5pt) -- (0pt,-5pt) node[below] 
{$P^* + \delta$};
\draw[latex-latex, thick] (-3.5,0) -- (3.5,0) ;

\path [draw=purple, fill=purple] (-0.6,0) circle (2pt);
\path [draw=purple, fill=purple] (0.3,0.0) circle (2pt);
\draw[ultra thick, purple] (-0.6,0) -- (0.3,0);
\draw [shift={(0,0.3)}, decorate, decoration={brace}, purple, thick] (-0.6, 0) --  (0.3,0) node[midway,yshift=1.5em]{$\ge C^{\phi_2}$\text{ cutoffs}};
\end{tikzpicture}
\end{center}
\caption{$P^*$ is the smallest real number such that $[P^*, P^*+\delta]$ contains at least $C^{\phi_2}$ cutoffs.}
\end{figure}
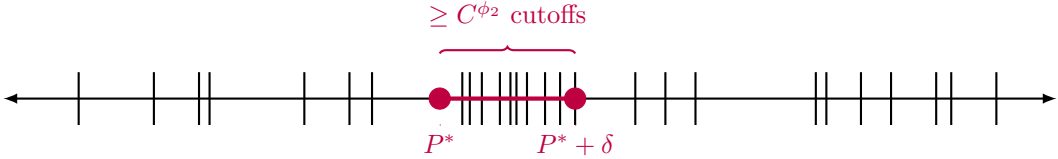

We use the interval $[P^*, P^*+\delta]$ to proceed in two cases, depending on the number of cutoffs below $P^*.$ For an interval $I$, define $C(I):=\{c\in C: P_c\in I\}$ to be the set of colleges with cutoffs
contained in $I$. Then the two cases are as follows: 
\begin{itemize}
    \item \textbf{Case 1.} Few cutoffs below $P^*$:
    \begin{equation}
        |C((-\infty, P^*))| < C^{\phi_3}.
    \end{equation}
    
    \item \textbf{Case 2.} Many cutoffs below $P^*$:
    \begin{equation}
        |C((-\infty, P^*))| \ge C^{\phi_3}.
    \end{equation}
\end{itemize}
We choose these two cases for the following reason. The first case guarantees the existence of a dense set of cutoffs ``early on'' in the sequence of colleges. The second case, does not have a dense set of cutoffs early on. By the pigeonhole principle, this instead implies that there must exist a large gap between cutoffs early on. As we will see, both the existence of an ``early'' dense set or large gap is sufficient to obtain the desired result.

\subsubsection*{Case 1: Few Cutoffs Below $P^*$}

At a high level, our analysis in this case proceeds as follows. Set
\begin{align}
    \CC_1 &:= C((-\infty, P^*))\\
    \CC_2 &:= C((P^*, \infty)).
\end{align}
First, because there are few cutoffs below $P^*$, the mass of students who match to colleges in $\CC_1$ is small---i.e., going to $0$ as $C\rightarrow \infty$. Therefore, we can focus on the probability a student can afford a college in $\CC_2.$ This probability is approximately a step function as for $C$ large, since the dense cluster of colleges in $[P^*, P^*+\delta]$ contains a large number of colleges, serving as an ``arbitrarily strong filter.''
\\
\\
In accordance with this plan, observe that
\begin{equation}
    \int_{v_S}^\infty p_\mu(v)\,d\eta(v) \le S(\CC_1) + \int_{v_S}^\infty p_\mu(v,\CC_2)\,d\eta(v),
\end{equation}
since the total measure of students with value below $v_S$ who match to a college is at most the total measure $S(\CC_1)$ of students (of any value) who match with a college in $\CC_1$ plus the total measure of students with value below $v_S$ who can afford a college in $\CC_2.$

We proceed by bounding the two components of the RHS. Bounding $S(\CC_1)$ is straightforward: there are at most $C^{\phi_3}$ colleges in this set, each of which has capacity at most $\frac{\alpha}{C}.$ This gives the following bound.
\begin{proposition}\label{prop:duck-1}
\begin{equation}
    S(\CC_1) \le \alpha C^{\phi_3 - 1} = O\left(C^{-K(\beta,\gamma)}\right).
\end{equation}
\end{proposition}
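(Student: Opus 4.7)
The plan is a direct counting estimate using only the two constraints already in force: the Case 1 hypothesis and the per-college capacity regularity. The Case 1 hypothesis states $|\CC_1| = |C((-\infty, P^*))| < C^{\phi_3}$. The regularity assumption from the model gives $S_c < \alpha/C$ for every college $c \in C$. Summing the capacities over colleges in $\CC_1$ then yields
\begin{equation}
    S(\CC_1) = \sum_{c \in \CC_1} S_c \le |\CC_1| \cdot \frac{\alpha}{C} < \alpha C^{\phi_3 - 1}.
\end{equation}

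To finish, I would verify that the exponent matches the target rate by direct substitution of the definition of $\phi_3$:
\begin{equation}
    \phi_3 - 1 = -\frac{2\beta\gamma}{3\beta\gamma + 2\beta + 5\gamma + 6} = -K(\beta,\gamma),
\end{equation}
so $S(\CC_1) \le \alpha C^{-K(\beta,\gamma)} = O\bigl(C^{-K(\beta,\gamma)}\bigr)$, as claimed.

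There is no serious obstacle in this step: \Cref{prop:duck-1} is essentially a bookkeeping check that the constants $\phi_1, \phi_2, \phi_3$ were chosen consistently with the target rate $K(\beta,\gamma)$, and that the Case 1 threshold $C^{\phi_3}$ was picked small enough to make the mass of colleges in $\CC_1$ negligible at the desired rate. The genuine difficulty in the Case 1 analysis lies not here but in the companion bound on $\int_{v_S}^\infty p_\mu(v, \CC_2)\, d\eta(v)$, which must exploit the $\ge C^{\phi_2}$ near-coincident cutoffs in $[P^*, P^* + \delta]$ together with $\beta$-max-concentration of $\DD$ and the $\gamma$-Hölder control on $\eta$ to produce a probability that behaves like a step function in $v$.
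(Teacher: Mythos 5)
Your proof matches the paper's argument exactly: bound each college capacity by $\alpha/C$, multiply by the Case 1 cardinality bound $|\CC_1| < C^{\phi_3}$, and check $\phi_3 - 1 = -K(\beta,\gamma)$. (Your sign on the exponent is actually written more carefully than in the paper, which has a minor typo identifying $\phi_3 - 1$ with $K(\beta,\gamma)$ rather than $-K(\beta,\gamma)$.)
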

\begin{proof}
    The capacity of each college is at most $\frac{\alpha}{C}$. Since there are fewer than $C^{\phi_3}$ colleges in $C((-\infty, P^*))$, the mass of students who match to these colleges is less than $C^{\phi_3}\cdot \frac{\alpha}{C} = \alpha C^{\phi_3 - 1}.$ The result follows since
    \begin{equation}
        \phi_3 - 1 = -\frac{2\beta\gamma}{3\beta\gamma + 2\beta + 5\gamma + 6} = K(\beta,\gamma).
    \end{equation}
\end{proof}
Bounding the second component---the measure of students with value below $v_S$ who match with a college in $C([P^*,\infty))$---is more involved. We start by showing that $p_\mu(v, \CC_2)$, the probability that a student with value $v$ can afford a college in $\CC_2$, resembles a step function in $v$ that jumps at $P^* - \EE[X^{(C^{\phi_2})}],$ where we recall that $\phi_2$ is the number of colleges in the dense cluster. We can use this result to analyze the second integral, since if $p_\mu(v, \CC_2)$ resembles a step function, then this step must occur at roughly $v_S$ or otherwise too many (more than $S$) or too few (at most $S$) students will be able to afford a college. In turn, $p_\mu(v, \CC_2)$ must be small for almost all students with value at most $v_S$.

\begin{proposition}\label{prop:tomato}
\quad
    \begin{itemize}
        \item[(i)] If $v > P^* - \EE[X^{(C^{\phi_2})}]  + 2\delta,$
        \begin{equation}
            p_\mu(v, \CC_2) \ge 1 - O\left(C^{-2\phi_1 - \beta\phi_2}\right).
        \end{equation}
        \item[(ii)] If $v < P^* - \EE[X^{(C^{\phi_2})}]  - \delta,$
        \begin{equation}
            p_\mu(v, \CC_2) \le O\left(C^{1-2\phi_1 - (1 + \beta)\phi_2}\right).
        \end{equation}
    \end{itemize}
\end{proposition}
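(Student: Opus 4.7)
The plan is to exploit the dense cluster of at least $M := \lceil C^{\phi_2} \rceil$ cutoffs lying in $[P^*, P^*+\delta]$, converting affordability statements about this cluster into statements about the maximum order statistic $X^{(M)}$ of $M$ iid draws from $\DD$. By the $\beta$-max-concentrating hypothesis, $\Var[X^{(M)}] = O(M^{-\beta}) = O(C^{-\beta\phi_2})$, so Chebyshev's inequality gives
\begin{equation}
\Pr\!\left[\,\bigl|X^{(M)} - \EE[X^{(M)}]\bigr| > \delta\,\right] \;=\; O\!\bigl(C^{-\beta\phi_2}/\delta^2\bigr) \;=\; O\!\bigl(C^{-\beta\phi_2 - 2\phi_1}\bigr),
\end{equation}
and this single concentration bound drives both parts. (Note that the chosen exponents satisfy $\beta\phi_2 + 2\phi_1 > 0$, so this error is $o(1)$.)

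For part (i), let $\CC^*$ denote the set of at least $M$ colleges with cutoffs in $[P^*, P^*+\delta]$; essentially $\CC^* \subseteq \CC_2$. Because $|\CC^*| \ge M$ and draws are iid, the maximum of $\{X_c : c \in \CC^*\}$ stochastically dominates $X^{(M)}$; moreover every $c \in \CC^*$ has $P_c \le P^* + \delta$. Hence the event $v + X^{(M)} > P^* + \delta$ implies the student can afford some college in $\CC^*$, giving
\begin{equation}
p_\mu(v, \CC_2) \;\ge\; p_\mu(v, \CC^*) \;\ge\; \Pr\!\bigl[X^{(M)} > P^* + \delta - v\bigr].
\end{equation}
The hypothesis $v > P^* - \EE[X^{(M)}] + 2\delta$ makes $P^* + \delta - v < \EE[X^{(M)}] - \delta$, so the right-hand side is at least $\Pr[X^{(M)} > \EE[X^{(M)}] - \delta]$, whose complement is controlled by the Chebyshev bound above, proving part (i).

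For part (ii), I union bound. Every $c \in \CC_2$ has $P_c > P^*$, so $v + X_c > P_c$ implies $X_c > P^* - v > \EE[X^{(M)}] + \delta$ by hypothesis. Writing $q := \Pr_{X \sim \DD}[X > \EE[X^{(M)}] + \delta]$, the same Chebyshev estimate gives
\begin{equation}
1 - (1 - q)^M \;=\; \Pr\!\bigl[X^{(M)} > \EE[X^{(M)}] + \delta\bigr] \;=\; O\!\bigl(C^{-\beta\phi_2 - 2\phi_1}\bigr).
\end{equation}
Since the right-hand side is $o(1)$, $Mq$ is also $o(1)$, and the elementary inequality $1 - (1-q)^M \ge Mq/2$ (valid for $Mq$ small) gives $q = O(C^{-\beta\phi_2 - 2\phi_1}/M) = O(C^{-(1+\beta)\phi_2 - 2\phi_1})$. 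A union bound over the at most $C$ colleges in $\CC_2$ yields $p_\mu(v, \CC_2) \le C q = O(C^{1 - 2\phi_1 - (1+\beta)\phi_2})$, as required.

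The work here is essentially bookkeeping, with no need to invoke market-clearing (that property only comes in downstream, when Propositions \ref{prop:duck-1} and \ref{prop:tomato} are combined to control $\int_{v_S}^\infty p_\mu(v)\,d\eta(v)$). The one subtlety to watch is ensuring $\beta\phi_2 + 2\phi_1 > 0$ so that the Chebyshev bound is genuinely $o(1)$ — and hence the linearization $1-(1-q)^M \approx Mq$ is legitimate; this can be checked directly from the closed-form values of $\phi_1, \phi_2$. A harmless measure-zero technicality (colleges with $P_c = P^*$ lying in neither $\CC_1$ nor $\CC_2$) is absorbed into the constants, since the cluster has at least $M \to \infty$ cutoffs.
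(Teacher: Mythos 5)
Your proof is correct, and part (i) is essentially identical to the paper's: lower-bound $p_\mu(v,\CC_2)$ by the probability of affording some college in the dense cluster $C([P^*,P^*+\delta])$, reduce to $\Pr[v + X^{(C^{\phi_2})} > P^*+\delta]$, and apply Chebyshev via $\Var[X^{(n)}]=O(n^{-\beta})$. For part (ii) the paper takes a slightly more direct route: after bounding $p_\mu(v,\CC_2)\le \Pr[v+X^{(C)}>P^*]$, it partitions the $C$ colleges into $C^{1-\phi_2}$ blocks of size $C^{\phi_2}$ and union-bounds over blocks, giving $C^{1-\phi_2}\Pr[v+X^{(C^{\phi_2})}>P^*]$ immediately, then Chebyshev. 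Your variant instead inverts the max-order-statistic probability to obtain the single-draw tail $q=\Pr[X>P^*-v]$ via the two-sided estimate $Mq/2\le 1-(1-q)^M\le Mq$ for $Mq$ small, and then union-bounds over the $C$ individual colleges. The two arguments are mathematically equivalent up to a constant factor, but your version carries an extra obligation — checking that $Mq=o(1)$ so the linearization is valid — which you correctly discharge by noting $\beta\phi_2+2\phi_1>0$ and that $1-(1-q)^M=o(1)$ forces $Mq\to 0$. The observation that market-clearing is not needed here and only enters downstream is also accurate.
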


We first sketch the proof strategy and intuition for \Cref{prop:tomato}. For part (i), we will lower bound $p_\mu(v, \CC_2)$ by the probability that a student with value $v$ can afford a college in the dense cluster $C([P^*, P^* + \delta]).$ In turn, this can be lower bounded by the probability that among the at least $C^{\phi_2}$ scores $\theta$ receives from the colleges in the dense cluster, at least one exceeds $P^* + \delta$, i.e., $\Pr[v + X^{(C^{\phi_2})} > P^* + \delta$]. This probability is close to $1$, since $v + \EE[X^{(C^{\phi_2})}] > P^* + 2\delta$ and $X^{(C^{\phi_2})}$ concentrates as $C$ grows large (due to the $\beta$-max-concentrating property).

Our approach for part (ii) is similar, but uses an additional observation. We can upper bound $p_\mu(v, \CC_2)$ by the probability that among the at most $C$ scores $v$ receives from the colleges in $C([P^*,\infty)),$ at least one exceeds $P^*$---i.e., $\Pr[v + X^{(C)} > P^*].$ The key observation is that this probability is at most $C^{1 - \phi_2}\Pr[v + X^{(C^{\phi_2})} > P^*]$ by a union bound. $\Pr[v + X^{(C^{\phi_2})} > P^*]$ is small since $v + \EE[X^{(C^{\phi_2})}] < P^* - \delta$ and $X^{(C^{\phi_2})}$ concentrates as $C$ grows large; in fact, we can show that it is so small that it overcomes the factor of $C^{1 - \phi_2}$. 

We now provide this analysis formally.

\begin{proof}[Proof of (i)]
    We bound $p_\mu(v, \CC_2)$ from below by the probability that $v$ can afford a college in $C([P^*, P^* + \delta])\subseteq C([P^*, \infty)).$
    Observe that if $\max_{c\in C(P^*, P^* + \delta)} \{v + X_c\} > P^* + \delta$, then $v$ can afford at least one college in $C([P^*, \infty))$.
    Therefore,
    \begin{equation}
        p_\mu(v, \CC_2) \ge \Pr\left[\max_{c\in C([P^*, P^* + \delta])} \{v + X_c\} > P^* + \delta\right] \ge \Pr\left[v + X^{(C^{\phi_2})} > P^* + \delta\right],
    \end{equation}
    where the last inequality follows since there are more than $C^{\phi_2}$ colleges in $C([P^*, P^* + \delta])$.

    Now observe that
    \begin{align}
        \Pr\left[v + X^{(C^{\phi_2})} > P^* + \delta\right] &= \Pr\left[X^{(C^{\phi_2})} > P^* + \delta - v\right]\\
        &\ge \Pr\left[X^{(C^{\phi_2})} > \EE\left[X^{(C^{\phi_2})}\right] - \delta \right] \label{eq:potato-1} \\ 
        &\ge \Pr\left[\left|X^{(C^{\phi_2})} - \EE\left[X^{(C^{\phi_2})}\right]\right| < \delta \right] \\
        &> 1 - \frac{\Var\left[X^{(C^{\phi_2})}\right]}{\delta^2} \label{eq:potato-2} \\
        &= 1 - O\left(\frac{C^{-\beta\phi_2}}{\delta^2}\right) \label{eq:potato-3} \\
        &= 1 - O\left(C^{-2\phi_1 - \beta\phi_2}\right),
    \end{align}
    where \eqref{eq:potato-1} follows because $v > P^* - \EE\left[X^{(C^{\phi_2})}\right]  + 2\delta,$ \eqref{eq:potato-2} follows from Chebyshev's inequality, and \eqref{eq:potato-3} follows from the fact that $\Var[X^{(n)}] = O\left(n^{-\beta}\right).$
\end{proof}

\begin{proof}[Proof of (ii).]
Note that $v$ can afford at least one college in $C(P^*, \infty)$ only if $v + X_c > P^*$ for some $c\in C(P^*, \infty).$ Thus,
    \begin{align}
        p_\mu(v, \CC_2) &\le \Pr[v + X^{(C)} > P^*]\\
        &\le C^{1 - \phi_2} \cdot \Pr[v + X^{(C^{\phi_2})} > P^*],
    \end{align}
    where the last inequality holds by a union bound (split the $C$ colleges into $C^{1-\phi_2}$ groups each of size $C^{\phi_2}$, and then bound the probability that the student is admitted to a college in any of these groups). 
    
    Then, proceeding as in part (i), 
    \begin{align}
        \Pr\left[v + X^{(C^{\phi_2})} > P^* \right] &= \Pr\left[X^{(C^{\phi_2})} > P^* - v\right]\\
        &\ge \Pr\left[X^{(C^{\phi_2})} > \EE\left[X^{(C^{\phi_2})}\right] + \delta \right] \label{eq:yam-1} \\ 
        &\ge \Pr\left[\left|X^{(C^{\phi_2})} - \EE\left[X^{(C^{\phi_2})}\right]\right| > \delta \right] \\
        &< \frac{\Var\left[X^{(C^{\phi_2})}\right]}{\delta^2} \label{eq:yam-2} \\
        &= O\left(\frac{C^{-\beta \phi_2}}{\delta^2}\right), \label{eq:yam-3}
    \end{align}
    where \eqref{eq:yam-1} follows because $v < P^* - \EE\left[X^{(C^{\phi_2})}\right]  - \delta,$ \eqref{eq:yam-2} follows from Chebyshev's inequality, and \eqref{eq:yam-3} follows from the fact that $\Var[X^{(n)}] = O\left(\frac{1}{n}\right).$ It follows that
    \begin{equation}
        p_\mu(v, \CC_2) \le C^{1 - \phi_2}\cdot O\left(\frac{C^{-\beta \phi_2}}{\delta^2}\right) = O\left(C^{1-2\phi_1-(1+\beta)\phi_2}\right).
    \end{equation}
\end{proof}

We may now use the preceding result (\Cref{prop:tomato}) to bound the second integral, the mass of students with value at most $v_S$ who can afford a college in $C([P^*,\infty))$.

\begin{proposition}\label{prop:duck-2}
\begin{equation}
    \int_{-\infty}^{v_S} p_\mu(v, \CC_2)\,d\eta(v) = O\left(C^{-K(\beta,\gamma)}\right).
\end{equation}
\end{proposition}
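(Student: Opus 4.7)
The plan is to exploit the step-function behavior of $p_\mu(\cdot, \CC_2)$ around the threshold $v^* := P^* - \EE[X^{(C^{\phi_2})}]$ established in Proposition~\ref{prop:tomato}, and to use market clearing to force $v^*$ to lie close to $v_S$. Concretely, I would partition $(-\infty, v_S)$ into three (possibly empty) sub-intervals
\begin{equation}
    I_1 = (-\infty, v^* - \delta), \quad I_2 = [v^* - \delta, v^* + 2\delta) \cap (-\infty, v_S), \quad I_3 = [v^* + 2\delta, v_S),
\end{equation}
and bound the contribution to the integral from each separately.

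On $I_1$, Proposition~\ref{prop:tomato}(ii) gives the pointwise bound $p_\mu(v, \CC_2) = O(C^{1 - 2\phi_1 - (1+\beta)\phi_2})$, so the contribution of $I_1$ is of that order (the total $\eta$-mass of $I_1$ is at most $1$). On $I_2$, I would use the trivial bound $p_\mu \le 1$ together with the Hölder condition $\eta(I_2) \le \eta((v^* - \delta, v^* + 2\delta)) = O(\delta^\gamma) = O(C^{\phi_1 \gamma})$.

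The main obstacle is $I_3$: here neither the trivial pointwise bound $p_\mu \le 1$ nor the lower bound from Proposition~\ref{prop:tomato}(i) directly controls the integral, so I would instead bound the \emph{measure} $\eta(I_3)$ using market clearing. A student who can afford a college in $\CC_2$ necessarily matches (since they can afford at least one college), and the total matched mass equals $S = \eta((v_S, \infty))$; combining this with Proposition~\ref{prop:tomato}(i) applied on $(v^* + 2\delta, \infty)$ yields
\begin{equation}
    \bigl(1 - O(C^{-2\phi_1 - \beta\phi_2})\bigr) \, \eta((v^* + 2\delta, \infty)) \le \int_{v^* + 2\delta}^{\infty} p_\mu(v, \CC_2) \, d\eta(v) \le \int_{-\infty}^{\infty} p_\mu(v) \, d\eta(v) = S.
\end{equation}
Rearranging gives $\eta(I_3) \le \eta((v^* + 2\delta, \infty)) - S = O(C^{-2\phi_1 - \beta\phi_2})$; the Hölder condition ensures $\eta$ is atomless at $v_S$, so no boundary mass is lost in identifying $\eta((v_S, \infty))$ with $\eta([v_S, \infty))$. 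Hence the contribution of $I_3$ is $O(C^{-2\phi_1 - \beta\phi_2})$ as well.

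Summing the three contributions gives a bound of order $C^{1 - 2\phi_1 - (1+\beta)\phi_2} + C^{\phi_1 \gamma} + C^{-2\phi_1 - \beta\phi_2}$. Setting $D := 3\beta\gamma + 2\beta + 5\gamma + 6$ so that $\phi_1 = -2\beta/D$ and $\phi_2 = (5\gamma+6)/D$, a direct algebraic check shows that the first and second exponents reduce to exactly $-2\beta\gamma/D = -K(\beta,\gamma)$ and that the third exponent $-2\phi_1 - \beta\phi_2 = -(5\beta\gamma + 2\beta)/D$ is no larger, which yields the claimed $O(C^{-K(\beta,\gamma)})$ bound.
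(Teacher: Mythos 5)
Your proposal is correct and matches the paper's proof essentially step for step: same decomposition around the threshold $P^* - \EE[X^{(C^{\phi_2})}]$ into a far-left region handled by \Cref{prop:tomato}(ii), a narrow middle band handled by the Hölder bound on $\eta$, and a right region whose $\eta$-measure is controlled by market clearing via \Cref{prop:tomato}(i), followed by the identical arithmetic verification of the three exponents. The minor differences (explicitly intersecting $I_2$ with $(-\infty, v_S)$, phrasing the market-clearing bound as $\eta(I_3) \le \eta((v^*+2\delta,\infty)) - S$) are cosmetic and do not change the argument.
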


\begin{proof}
We split the integral into three parts:
\begin{alignat}{2}
    \int_{-\infty}^{v_S} p_\mu(v, \CC_2)\,d\eta(v) 
    &= \quad\int_{-\infty}^{P^* - \EE[X^{(C^{\phi_2})}] - \delta} && p_\mu(v, \CC_2)\,d\eta(v) \label{eq:cat-1} \\ 
    &\quad+ \int_{P^* - \EE[X^{(C^{\phi_2})}] - \delta}^{P^* - \EE[X^{(C^{\phi_2})}] + 2\delta} && p_\mu(v, \CC_2)\,d\eta(v) \label{eq:cat-2} \\
    &\quad+ \int_{P^* - \EE[X^{(C^{\phi_2})}] + 2\delta}^{v_S} && p_\mu(v, \CC_2)\,d\eta(v).\label{eq:cat-3} 
\end{alignat}
We bound these parts separately.

For \eqref{eq:cat-1}, we apply \Cref{prop:tomato}(ii) to get
\begin{align}
    \int_{-\infty}^{P^* - \EE[X^{(C^{\phi_2})}] - \delta} p_\mu(v, \CC_2)\,d\eta(v) &\le \int_{-\infty}^{P^* - \EE[X^{(C^{\phi_2})}] - \delta} O\left(C^{1-2\phi_1 - (1 + \beta)\phi_2}\right) \,d\eta(v)\\
    &\le \int_{-\infty}^\infty O\left(C^{1-2\phi_1 - (1 + \beta)\phi_2}\right) \,d\eta(v)\\
    &= O\left(C^{1-2\phi_1 - (1 + \beta)\phi_2}\right).
\end{align}

For \eqref{eq:cat-2}, we have that
\begin{align}
    \int_{P^* - \EE[X^{(C^{\phi_2})}] - 2\delta}^{P^* - \EE[X^{(C^{\phi_2})}] + \delta} p_\mu(v, \CC_2)\,d\eta(v) 
    &\le \int_{P^* - \EE[X^{(C^{\phi_2})}] - 2\delta}^{P^* - \EE[X^{(C^{\phi_2})}] + \delta} 1\,d\eta(v)\\
    &= \eta((P^* - \EE[X^{(C^{\phi_2})}] - 2\delta, P^* - \EE[X^{(C^{\phi_2})}] + \delta))\\
    &= O\left(C^{\gamma\phi_1}\right),
\end{align}
where we used that the $\eta$-measure of any interval of length $3\delta$ is $O\left(C^{\gamma\phi_1}\right)$.

For \eqref{eq:cat-3}, we proceed somewhat more indirectly. The basic idea is that because students with value above $P^* - \EE[X^{(C^{\phi_2})}] + 2\delta$ have a probability near 1 of matching, and because the mass of students with value above $v_S$ is equal to the total capacity $S$, the mass of students with value in $(P^* - \EE[X^{(C^{\phi_2})}]+2\delta, v_S)$ cannot be large since otherwise more students would match than there is capacity. With this strategy in mind, we use \Cref{prop:tomato}(i) to get
\begin{align}
    S &\ge \int_{P^* - \EE[X^{(C^{\phi_2})}] + 2\delta}^\infty p_\mu(v, \CC_2)\,d\eta(v)\\ 
    &\ge \left(1 - O\left(C^{-2\phi_1 - \beta\phi_2}\right)\right) \cdot \eta((P^* - \EE[X^{(C^{\phi_2})}] + 2\delta, \infty))\\
    &= \left(1 - O\left(C^{-2\phi_1 - \beta\phi_2}\right)\right)\cdot (S + \eta((P^* - \EE[X^{(C^{\phi_2})}] + 2\delta, v_S))).
\end{align}
Rearranging,
\begin{align}
    \eta((P^* - \EE[X^{(C^{\phi_2})}] + 2\delta, v_S)) \le \frac{S\cdot O\left(C^{-2\phi_1 - \beta\phi_2}\right)}{1 - O\left(C^{-2\phi_1 - \beta\phi_2}\right)} = O\left(C^{-2\phi_1 - \beta\phi_2}\right).
\end{align}
It follows that
\begin{equation}
    \int_{P^* - \EE[X^{(C^{\phi_2})}] + 2\delta}^{v_S} p_\mu(v, \CC_2)\,d\eta(v) \le \int_{P^* - \EE[X^{(C^{\phi_2})}] + 2\delta}^{v_S} 1\,d\eta(v) = O\left(C^{-2\phi_1 - \beta\phi_2}\right). 
\end{equation}
Using the three bounds, we have that
\begin{align}
    \int_{-\infty}^{v_S} p_\mu(v, \CC_2)\, d\eta(v) &= O\left(C^{1-2\phi_1 - (1 + \beta)\phi_2}\right) + O\left(C^{\gamma\phi_1}\right) + O\left(C^{-2\phi_1 - \beta\phi_2}\right)\\
    &= O\left(C^{1-2\phi_1 - (1 + \beta)\phi_2}\right) + O\left(C^{\gamma\phi_1}\right),
\end{align}
where we observe that $1-2\phi_1 - (1 + \beta)\phi_2 > -2\phi_1 - \beta\phi_2$ since $1 - \phi_2 = 1 - \frac{5\gamma + 6}{3\beta\gamma + 2\beta + 5\gamma + 6} > 0.$ Finally, we may verify that
\begin{align}
    1-2\phi_1 - (1 + \beta)\phi_2 &= -K(\beta,\gamma)\\
    \gamma\phi_1 &= -K(\beta,\gamma).
\end{align}
\end{proof}

Putting together \Cref{prop:duck-1} and \Cref{prop:duck-2}, we have that
\begin{align}
    \int_{-\infty}^{v_S} p_\mu(v)\,d\eta(v) &\le S(\CC_2) + \int_{-\infty}^{v_S} p(v, \CC_2)\,d\eta(v)\\
    &= O\left(C^{-K(\beta,\gamma)}\right),
\end{align}
completing the proof for Case 1.

\subsubsection*{Case 2: Many Cutoffs Below $P^*$}

Set
\begin{align}
    \CC_1 &:= C((-\infty, P_{C^{\phi_3}}))\\
    \CC_2 &:= C((P_{C^{\phi_3}}, \infty)).
\end{align}
As in the first case,
\begin{equation}
    \int_{v_S}^\infty p_\mu(v)\,d\eta(v) \le S(\CC_1) + \int_{v_S}^\infty p_\mu(v,\CC_2)\,d\eta(v),
\end{equation}

By the pigeonhole principle,
\begin{equation}
    |P_{C^{\phi_3}} - P_1| > \delta \cdot \frac{C^{\phi_3}}{C^{\phi_2}} = C^{\phi_1 + \phi_3 - \phi_2}.
\end{equation}
This gives a result that will be key for later in the proof: that the cutoff of college $C^{\phi_3}$ is significantly larger than $P_1.$

\begin{proposition}\label{prop:goose-1}
\begin{equation}
    S(\CC_1) < \alpha C^{-\frac{2\beta\gamma}{3\beta\gamma + 2\beta + 5\gamma + 6}}.
\end{equation}
\end{proposition}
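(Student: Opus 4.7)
The plan is essentially a direct capacity-counting argument, parallel to the proof of \Cref{prop:duck-1}. Since the cutoffs are sorted $P_1 \le P_2 \le \cdots \le P_C$ and $\CC_1 = C((-\infty, P_{C^{\phi_3}}))$ consists of colleges with cutoff \emph{strictly} less than $P_{C^{\phi_3}}$, we immediately have $|\CC_1| \le C^{\phi_3} - 1 < C^{\phi_3}$. This is where the case-2 hypothesis $|C((-\infty, P^*))| \ge C^{\phi_3}$ enters only implicitly, via the fact that in Case 2 we have chosen to ``truncate at the $C^{\phi_3}$-th cutoff.''

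Next, I would invoke the regularity assumption $S_c < \alpha/C$ on individual college capacities. Summing this bound over $\CC_1$,
\begin{equation}
    S(\CC_1) = \sum_{c\in \CC_1} S_c < |\CC_1| \cdot \frac{\alpha}{C} < \alpha\cdot C^{\phi_3 - 1}.
\end{equation}

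Finally, I would substitute the definition of $\phi_3$ and verify the exponent:
\begin{equation}
    \phi_3 - 1 = -\frac{2\beta\gamma}{3\beta\gamma + 2\beta + 5\gamma + 6} = -K(\beta,\gamma),
\end{equation}
yielding the desired bound $S(\CC_1) < \alpha\, C^{-2\beta\gamma/(3\beta\gamma + 2\beta + 5\gamma + 6)}$.

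There is really no main obstacle here: the argument is a one-line pigeonhole-style capacity estimate, identical in spirit to \Cref{prop:duck-1} with the single change that the relevant set of ``low-cutoff'' colleges is now delineated by the index $C^{\phi_3}$ (coming from the case dichotomy) rather than by the threshold $P^*$. The substantive use of the case-2 hypothesis will appear later, when the large gap $|P_{C^{\phi_3}} - P_1| > C^{\phi_1 + \phi_3 - \phi_2}$ (already established just above the statement) is used to control $p_\mu(v, \CC_2)$; the proposition stated here is merely the easy half of Case 2, bounding the mass contributed by the low-cutoff colleges that we can afford to ignore.
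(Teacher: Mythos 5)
Your proof is correct and takes exactly the same approach as the paper: bound $|\CC_1| < C^{\phi_3}$, multiply by the per-college capacity bound $\alpha/C$, and evaluate the exponent $\phi_3 - 1 = -K(\beta,\gamma)$. (In fact your bookkeeping is cleaner than the paper's, which writes $\alpha C^{1-\phi_3}$ and ``$1-\phi_3 = -\tfrac{2\beta\gamma}{\cdots}$'' where it should read $\alpha C^{\phi_3-1}$ and $\phi_3-1 = -\tfrac{2\beta\gamma}{\cdots}$; the two sign slips cancel, so the stated bound is right.)
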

\begin{proof}
    The capacity of each college is at most $\frac{\alpha}{C}$. Since there are fewer than $C^{\phi_3}$ colleges in $C((-\infty, P_{C^{\phi_3}}))$, the mass of students who match to these colleges is less than $C^{\phi_3}\cdot \frac{\alpha}{C} = \alpha C^{1-\phi_3}.$ The result follows since
    \begin{equation}
        1 - \phi_3 = -\frac{2\beta\gamma}{3\beta\gamma + 2\beta + 5\gamma + 6}.
    \end{equation}
\end{proof}

\begin{lemma}\label{lem:log-bound}
    If $\DD$ is $\beta$-max-concentrating for some $\beta>0$, then
    \begin{equation}
        \EE[X^{(n)}] = o(\log n).
    \end{equation}
\end{lemma}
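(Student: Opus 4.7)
The plan is to establish the stronger bound $\EE[X^{(n)}] = O(1)$, from which $o(\log n)$ follows immediately. The key device is a doubling argument. First observe that $X^{(2n)} \stackrel{d}{=} \max(Y, Y')$ with $Y, Y'$ i.i.d.\ copies of $X^{(n)}$ (split $2n$ i.i.d.\ draws from $\DD$ into two halves and take the max of the two halfwise maxima). Applying the elementary identity $\max(a,b) = \tfrac{a+b}{2} + \tfrac{|a-b|}{2}$ and taking expectations yields
\begin{equation}
\EE[X^{(2n)}] - \EE[X^{(n)}] = \tfrac{1}{2}\EE[|Y - Y'|].
\end{equation}
Cauchy--Schwarz then bounds $\EE[|Y-Y'|] \le \sqrt{\EE[(Y-Y')^2]} = \sqrt{2\Var[X^{(n)}]}$, which by the $\beta$-max-concentrating hypothesis is $O(n^{-\beta/2})$.

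Next I would telescope along powers of two starting from some $n_0$ large enough that the $O(n^{-\beta})$ variance bound already applies:
\begin{equation}
\EE[X^{(2^k n_0)}] - \EE[X^{(n_0)}] \le \sum_{j=0}^{k-1} O\bigl((2^j n_0)^{-\beta/2}\bigr).
\end{equation}
The right-hand side is a convergent geometric series with ratio $2^{-\beta/2} < 1$, so $\EE[X^{(2^k n_0)}]$ is uniformly bounded in $k$. The base term $\EE[X^{(n_0)}]$ is finite, because $\Var[X^{(n_0)}] < \infty$ forces a finite second moment, hence a finite first moment via Cauchy--Schwarz. Finally, $\EE[X^{(n)}]$ is non-decreasing in $n$ under the natural coupling (adding a sample can only increase the max), so for arbitrary $n$ one bounds $\EE[X^{(n)}]$ by $\EE[X^{(2^k n_0)}]$ with the smallest $k$ satisfying $2^k n_0 \ge n$, delivering the uniform $O(1)$ bound on the full sequence.

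There is no substantive obstacle in this argument; the only mild care concerns the base case (choosing $n_0$ so that the asymptotic variance bound kicks in, and noting that finite variance at $n_0$ yields a finite first moment at $n_0$, a prerequisite for the telescoping to make sense). The resulting bound is actually much stronger than $o(\log n)$, but the weaker statement is all that seems to be used later in Case~2 of the main proof.
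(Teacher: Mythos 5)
Your proof is correct, and it proves a strictly stronger conclusion than the lemma claims. The key decomposition is identical to the paper's: write $X^{(2n)}\eqdist\max(Y,Y')$ with $Y,Y'$ i.i.d.\ copies of $X^{(n)}$, control $\EE[\max(Y,Y')]-\EE[Y]$ via $|Y-Y'|$, and apply Cauchy--Schwarz to reduce to $\sqrt{\Var[X^{(n)}]}$. (The paper uses $\max(a,b)\le a+|a-b|$, giving a factor $2\sqrt{\Var[X^{(n)}]}$; your symmetric identity yields a slightly smaller constant, but the step is the same.) Where you diverge is the last step. The paper only uses that $\Var[X^{(n)}]\to 0$, concluding that the doubling increments vanish and hence $\EE[X^{(n)}]=o(\log n)$ via the footnoted calculus observation. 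You instead exploit the full polynomial hypothesis $\Var[X^{(n)}]=O(n^{-\beta})$: telescoping along dyadic scales produces a convergent geometric series, and together with monotonicity and finiteness at the base case this gives $\EE[X^{(n)}]=O(1)$. Both arguments are valid; yours gets a sharper bound from the stated hypothesis, while the paper's weaker conclusion holds under the weaker assumption $\Var[X^{(n)}]\to 0$. That robustness is worth noticing: your $O(1)$ result in fact forces $\DD$ to be bounded above (an unbounded $\DD$ has $X^{(n)}\to\infty$ a.s., which Chebyshev rules out once $\EE[X^{(n)}]$ is bounded and $\Var[X^{(n)}]\to 0$), whereas the paper lists the Gaussian as $1$-max-concentrating even though $\Var[X^{(n)}]=\Theta(1/\log n)$ for Gaussian maxima, which is not $O(n^{-\beta})$ for any $\beta>0$. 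The paper's proof, written to need only $\Var[X^{(n)}]\to 0$, does cover that case; the stated definition does not. Your proof is sound as written against the definition as written, and the lemma as stated is of course implied.
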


\begin{proof}
    It suffices to show that for all $\epsilon > 0,$
    \begin{equation}
        \EE[X^{(2n)}] < \EE[X^{(n)}] + \epsilon
    \end{equation}
    for all $n$ sufficiently large.\footnote{To see this, note that for any $\epsilon>0,$ $f(2n) < f(n) + \epsilon$ for all $n$ sufficiently large implies that for $g(x):=f(e^x)$, $g(\log n + \log 2) < g(\log n) + \epsilon$ for all $n$ sufficiently large. This shows that $g(x)=o(x).$ Since $g(x)=f(e^x)$, then $g(\log x)=f(x)$, so $f(x)=o(\log x)$.}
    We have that
    \begin{align}
        \EE[X^{(2n)}] = \EE[\max\{W_1, W_2\}],
    \end{align}
    where $W_1$ and $W_2$ are drawn i.i.d. from the same distribution as $X^{(n)}.$ Then,
    \begin{align}
        \EE[\max\{W_1,W_2\}]
        &\le \EE[W_1 + |W_2 - W_1|]\\
        &= \EE[W_1] + \EE[|W_2 - W_1|]\\
        &= \EE[X^{(n)}] + \EE[|W_2 - \EE[X^{(n)}]| + |W_1 - \EE[X^{(n)}]|]\\
        &= \EE[X^{(n)}] + 2\EE[|X^{(n)} - \EE[X^{(n)}]|]\\
        &\le \EE[X^{(n)}] + 2\sqrt{\Var[X^{(n)}]}.
    \end{align}
    The result follows since $\lim_{n\rightarrow \infty} \Var[X^{(n)}] = 0$.
\end{proof}

We now introduce another constant,
\begin{equation}
    \phi_4 = -\frac{\beta}{2} + \frac{\beta\gamma}{3\beta\gamma + 2\beta + 5\gamma + 6}.
\end{equation}
Note that $\phi_4 < \frac{\beta\gamma}{3\beta\gamma + 2\beta + 5\gamma + 6} = \phi_1 + \phi_3 - \phi_2.$
\begin{proposition}\label{prop:beet}
\quad
\begin{itemize}
    \item[(i)] If $v < P_{C^{\phi_3}} - \EE[X^{(C)}] - C^{\phi_4},$
    \begin{equation}
        p_\mu(v, \CC_2) \le O\left(C^{-\beta-2\phi_4}\right).
    \end{equation}
    \item[(ii)] If $v > P_{C^{\phi_3}} - \EE[X^{(C)}] - C^{\phi_4},$
    \begin{equation}
        p_\mu(v) \ge 1 - O\left(C^{-2\phi_1 - 2\phi_3 + 2\phi_2}\right).
    \end{equation}
\end{itemize}
\end{proposition}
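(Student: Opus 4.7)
The plan is to reduce each part to a one-line application of Chebyshev's inequality, keying off the pigeonhole gap $P_{C^{\phi_3}} - P_1 > C^{\phi_1 + \phi_3 - \phi_2}$ established at the start of Case 2. Crucially, this gap is strictly larger than both $C^{\phi_4}$ (since $\phi_4 < \phi_1 + \phi_3 - \phi_2$) and $\EE[X^{(C)}]$ (which is $o(\log C)$ by \Cref{lem:log-bound}), so for $C$ sufficiently large the cutoff $P_{C^{\phi_3}}$ will be much larger than $P_1 + \EE[X^{(C)}] + C^{\phi_4}$. This single observation powers both bounds.

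For part (i), every $c \in \CC_2$ satisfies $P_c > P_{C^{\phi_3}}$, and $|\CC_2| \le C$, so I would write
\begin{equation}
p_\mu(v, \CC_2) \le \Pr\!\left[v + \max_{c \in \CC_2} X_c > P_{C^{\phi_3}}\right] \le \Pr\!\left[X^{(C)} > P_{C^{\phi_3}} - v\right].
\end{equation}
The hypothesis rearranges to $P_{C^{\phi_3}} - v > \EE[X^{(C)}] + C^{\phi_4}$, so this probability is at most $\Pr[|X^{(C)} - \EE[X^{(C)}]| > C^{\phi_4}] \le \Var[X^{(C)}]/C^{2\phi_4} = O(C^{-\beta - 2\phi_4})$ by Chebyshev and the $\beta$-max-concentrating property.

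For part (ii), I would use the much cruder lower bound $p_\mu(v) \ge \Pr[v + X_1 > P_1]$, i.e., the probability of affording just the lowest-cutoff college. The hypothesis on $v$ combined with the pigeonhole gap and \Cref{lem:log-bound} imply that for $C$ sufficiently large, $v - P_1 \ge \tfrac{1}{2} C^{\phi_1 + \phi_3 - \phi_2}$, so setting $t := v - P_1 + \EE[X_1]$ gives $t \ge \tfrac{1}{4} C^{\phi_1 + \phi_3 - \phi_2}$ for $C$ large (the constant $\EE[X_1]$ is negligible). Then Chebyshev applied to the single variable $X_1$ (whose variance is $O(1)$ since $\Var[X^{(1)}]$ is $O(1)$ by the max-concentrating property) yields
\begin{equation}
\Pr[X_1 \le P_1 - v] = \Pr[\EE[X_1] - X_1 \ge t] \le \frac{\Var[X_1]}{t^2} = O\!\left(C^{-2\phi_1 - 2\phi_3 + 2\phi_2}\right),
\end{equation}
which gives $p_\mu(v) \ge 1 - O(C^{-2\phi_1 - 2\phi_3 + 2\phi_2})$ as desired.

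The main subtlety, and the only nontrivial step, is verifying that the pigeonhole gap $C^{\phi_1 + \phi_3 - \phi_2}$ really dominates both the slowly growing centering term $\EE[X^{(C)}]$ and the $C^{\phi_4}$ buffer, so that the crude single-college lower bound in part (ii) actually succeeds. This is by design: $\phi_4$ was chosen strictly smaller than $\phi_1 + \phi_3 - \phi_2$, and \Cref{lem:log-bound} guarantees $\EE[X^{(C)}]$ is sub-polynomial. After that comparison, each part amounts to one application of Chebyshev to a different order statistic of $\DD$.
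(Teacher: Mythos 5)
Your proposal is correct and follows essentially the same route as the paper: part (i) bounds $p_\mu(v,\CC_2)$ by $\Pr[X^{(C)} > P_{C^{\phi_3}} - v]$ and applies Chebyshev to $X^{(C)}$ using the hypothesis, and part (ii) uses the single lowest-cutoff college, the pigeonhole gap $P_{C^{\phi_3}} - P_1 > C^{\phi_1+\phi_3-\phi_2}$, the bound $\EE[X^{(C)}] = o(\log C)$ from \Cref{lem:log-bound}, and Chebyshev on a single draw $X$. (Incidentally, the paper's display at \eqref{eq:poppin-3} omits the square in the Chebyshev denominator; your version writes $\Var[X_1]/t^2$ correctly, which is what the paper intends given the stated conclusion.)
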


\begin{proof}[Proof of (i).]
    The probability that $v$ can afford some college in $C([P_{C^{\phi_3}}, \infty))$ is bounded from above by the probability that $v + X_c > P_{C^{\phi_3}}$ for some $c\in C$. Therefore,
    \begin{align}
        p_\mu(v, \CC_2) &\le \Pr[v + X^{(C)} > P_{C^{\phi_3}}]\\
        &= \Pr[X^{(C)} > P_{C^{\phi_3}} - v]\\
        &\le \Pr\left[X^{(C)} > \EE[X^{(C)}] + C^{\phi_4}\right] \label{eq:penguin-1} \\
        &\le \Pr\left[|X^{(C)} - \EE[X^{(C)}]| > C^{\phi_4} \right]\\
        &\le \frac{\Var[X^{(C)}]}{C^{2\phi_4}} \label{eq:penguin-2}\\
        &\le O\left(C^{-\beta - 2\phi_4}\right), \label{eq:penguin-3}
    \end{align}
    where \eqref{eq:penguin-1} follows because $v < P_{C^{\phi_3}} - \EE[X^{(C)}] - C^{\phi_4}$, \eqref{eq:penguin-2} follows from Chebyshev's inequality, and \eqref{eq:penguin-3} follows from recalling that $\Var[X^{(n)}] = O\left(n^{-\beta}\right).$
\end{proof}

\begin{proof}[Proof of (ii).]
    The probability $v$ is matched is bounded from below by the probability that they match with the college with cutoff $P_1$. Therefore,
    \begin{align}
        p_\mu(v) &\ge \Pr[v + X > P_1]\\
        &= \Pr[X > P_1 - v]\\
        &\ge \Pr\left[X > P_1 - P_{C^{\phi_3}} + \EE[X^{(C)}] + C^{\phi_4}\right] \label{eq:poppin-1} \\
        &\ge \Pr\left[X > - C^{\phi_1 + \phi_3 - \phi_2} + O\left(\log C \right) + C^{\phi_4}\right] \label{eq:poppin-2} \\
        &= \Pr[X > - C^{\phi_1 + \phi_3 - \phi_2} + o(C^{\phi_1 + \phi_3 - \phi_2})]\\
        &= 1 - \Pr[X < - \Theta(C^{\phi_1 + \phi_3 - \phi_2})]
    \end{align}
    where \eqref{eq:poppin-1} follows because $v > P_{C^{\phi_3}} - \EE[X^{(C)}] - C^{\phi_4}$ and \eqref{eq:poppin-2} follows from recalling that $P_1 - P_{C^{\phi_3}} > C^{\phi_1 + \phi_3 - \phi_2}$ and $\EE[X^{(C)}] = o\left(\log C\right)$ (\Cref{lem:log-bound}). Now observe that $\Pr[X < - \Theta(C^{\phi_1 + \phi_3 - \phi_2})] = \Pr[X - \EE[X] < - \Theta(C^{\phi_1 + \phi_3 - \phi_2})],$ since $\EE[X]$ is fixed in $C$. Therefore,
    \begin{align}
        1 - \Pr[X < - O(C^{\phi_1 + \phi_3 - \phi_2})] &= 1 - \Pr[X - \EE[X] < - \Theta(C^{\phi_1 + \phi_3 - \phi_2})]\\
        &\ge 1 - \Pr[|X - \EE[X]| > \Theta(C^{\phi_1 + \phi_3 - \phi_2})]\\
        &\ge 1 - \frac{\Var[X]}{\Theta(C^{\phi_1 + \phi_3 - \phi_2})} \label{eq:poppin-3} \\
        &= 1 - O\left(C^{-2\phi_1 -2\phi_3 + 2\phi_2}\right), \label{eq:poppin-4}
    \end{align}
    where \eqref{eq:poppin-3} follows from Chebyshev's inequality, and \eqref{eq:poppin-4} follows because $\Var[X]$ is a constant.
\end{proof}

\begin{proposition}\label{prop:goose-2}
    \begin{equation}
    \int_{-\infty}^{v_S} p_\mu(v, \CC_2)\,d\eta(v) = O\left(C^{-K(\beta,\gamma)}\right).
\end{equation}
\end{proposition}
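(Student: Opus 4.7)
The plan is to mirror the decomposition used in the proof of \Cref{prop:duck-2}, leveraging the fact that \Cref{prop:beet} already gives clean step-function-like behavior around the threshold $T := P_{C^{\phi_3}} - \EE[X^{(C)}] - C^{\phi_4}$. Split the integral at $T$, writing
\begin{equation*}
\int_{-\infty}^{v_S} p_\mu(v, \CC_2)\,d\eta(v) \;=\; \int_{-\infty}^{T} p_\mu(v, \CC_2)\,d\eta(v) \;+\; \int_{T}^{v_S} p_\mu(v, \CC_2)\,d\eta(v).
\end{equation*}
For $v < T$, \Cref{prop:beet}(i) gives the pointwise bound $p_\mu(v, \CC_2) \le O(C^{-\beta - 2\phi_4})$, which integrates against the probability measure $\eta$ to the same order. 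Unlike in Case 1, no intermediate ``transition'' sliver is needed because \Cref{prop:beet}(i) and (ii) share the same threshold $T$.

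For the region $T \le v \le v_S$, I would use \Cref{prop:beet}(ii) together with market clearing, exactly as at the end of the proof of \Cref{prop:duck-2}. Since every $v > T$ satisfies $p_\mu(v) \ge 1 - O(C^{-2\phi_1 - 2\phi_3 + 2\phi_2})$ and $\eta((v_S,\infty)) = S$ by definition of $v_S$,
\begin{equation*}
S \;\ge\; \int_T^\infty p_\mu(v)\,d\eta(v) \;\ge\; \bigl(1 - O(C^{-2\phi_1 - 2\phi_3 + 2\phi_2})\bigr)\cdot\bigl(S + \eta((T, v_S))\bigr).
\end{equation*}
Rearranging forces $\eta((T, v_S)) = O(C^{-2(\phi_1 + \phi_3 - \phi_2)})$, and since $p_\mu(v, \CC_2) \le 1$ pointwise, the high-region integral is bounded by this same quantity.

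To close the argument it remains to check that both exponents collapse to $-K(\beta,\gamma)$. Substituting $\phi_4 = -\beta/2 + \frac{\beta\gamma}{3\beta\gamma + 2\beta + 5\gamma + 6}$ gives $-\beta - 2\phi_4 = -\frac{2\beta\gamma}{3\beta\gamma + 2\beta + 5\gamma + 6} = -K(\beta,\gamma)$, and substituting the explicit values of $\phi_1, \phi_2, \phi_3$ yields $\phi_1 + \phi_3 - \phi_2 = \frac{\beta\gamma}{3\beta\gamma + 2\beta + 5\gamma + 6}$, so $-2(\phi_1 + \phi_3 - \phi_2) = -K(\beta,\gamma)$ as well. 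Both pieces therefore contribute $O(C^{-K(\beta,\gamma)})$.

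The main obstacle, I would anticipate, is not the decomposition itself (which is structurally identical to Case 1), but confirming that the auxiliary exponent $\phi_4$ has been calibrated so that the two error terms coming from \Cref{prop:beet} balance: $\phi_4$ must be large enough that the Chebyshev-type bound in (i) is strong (i.e.\ $-\beta - 2\phi_4$ is sufficiently negative), yet small enough that the pigeonhole gap $C^{\phi_1 + \phi_3 - \phi_2}$ comfortably dominates the location term $\EE[X^{(C)}] + C^{\phi_4}$ used in (ii). The latter comparison is where \Cref{lem:log-bound}'s estimate $\EE[X^{(C)}] = o(\log C)$ does essential work, since it ensures the gap dwarfs both $C^{\phi_4}$ and the expected-maximum correction. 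Once that calibration is in hand (as it already is in the statement of \Cref{prop:beet}), the present proposition reduces to the routine arithmetic verification above.
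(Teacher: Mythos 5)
Your proposal matches the paper's proof essentially line-for-line: the same two-piece split at $T = P_{C^{\phi_3}} - \EE[X^{(C)}] - C^{\phi_4}$, the same pointwise bound from \Cref{prop:beet}(i) on $(-\infty, T)$, the same market-clearing argument driven by \Cref{prop:beet}(ii) to control $\eta((T, v_S))$, and the same exponent arithmetic showing both contributions are $O(C^{-K(\beta,\gamma)})$. Your observation that no middle ``transition'' sliver is needed here (unlike \Cref{prop:duck-2}) is also exactly right, since \Cref{prop:beet}(i) and (ii) share a single threshold.
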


\begin{proof}
    We split the integral into three parts:
    \begin{alignat}{2}
        \int_{-\infty}^{v_S} p_\mu(v, \CC_2)\,d\eta(v) 
        &= \quad\int_{-\infty}^{P_{C^{\phi_3}} - \EE[X^{(C)}] - C^{\phi_4}} && p_\mu(v, \CC_2)\,d\eta(v) \label{eq:llama-1} \\ 
        &\quad+ \int_{P_{C^{\phi_3}} - \EE[X^{(C)}] - C^{\phi_4}}^{v_S} && p_\mu(v, \CC_2)\,d\eta(v).\label{eq:llama-2} 
    \end{alignat}
    We bound the two parts separately.

    For \eqref{eq:llama-1}, we apply \Cref{prop:beet}(i) to get
    \begin{align}
        \int_{-\infty}^{P_{C^{\phi_3}} - \EE[X^{(C)}] - C^{\phi_4}} p_\mu(v, \CC_2)\,d\eta(v) &\le \int_{-\infty}^{P_{C^{\phi_3}} - \EE[X^{(C)}] - C^{\phi_4}} O\left(C^{-\beta-2\phi_4}\right)\,d\eta(v)\\
        &\le \int_{-\infty}^\infty O\left(C^{-\beta-2\phi_4}\right)\,d\eta(v)\\
        &= O\left(C^{-\beta-2\phi_4}\right) = O\left(C^{-K(\beta,\gamma)}\right),
    \end{align}
    where we note that $\beta+2\phi_4 = K(\beta,\gamma).$

    For \eqref{eq:llama-2}, we proceed with the same idea used to bound \eqref{eq:cat-3} in \Cref{prop:duck-2}. Again, the basic idea is that because students with value above $P_{C^{\phi_3}} - \EE[X^{(C)}] - C^{\phi_4}$ have a probability near 1 of matching, and because the measure of students with value above $v_S$ is equal to the total capacity, the measure of students with value in $(P_{C^{\phi_3}} - \EE[X^{(C)}] - C^{\phi_4}, v_S)$ cannot be large since otherwise more students would match than there is capacity. With this strategy in mind, we use \Cref{prop:beet}(ii) to get
    \begin{align}
        S &\ge \int_{P_{C^{\phi_3}} - \EE[X^{(C)}] - C^{\phi_4}}^\infty p_\mu(v)\,d\eta(v)\\ 
        &\ge \left(1 - O\left(C^{-2\phi_1 -2\phi_3 + 2\phi_2}\right)\right) \cdot \eta\left(\left(P_{C^{\phi_3}} - \EE[X^{(C)}] - C^{\phi_4}, \infty\right)\right)\\
        &= \left(1 - O\left(C^{-2\phi_1 -2\phi_3 + 2\phi_2}\right)\right)\cdot \left(S + \eta\left(\left(P_{C^{\phi_3}} - \EE[X^{(C)}] - C^{\phi_4}, v_S\right)\right)\right).
    \end{align}
    Rearranging,
    \begin{align}
        \eta((P_{C^{\phi_3}} - \EE[X^{(C)}] - C^{\phi_4}, v_S)) \le \frac{S\cdot O\left(C^{-2\phi_1 -2\phi_3 + 2\phi_2}\right)}{1 - O\left(C^{-2\phi_1 -2\phi_3 + 2\phi_2}\right)} = O\left(C^{-2\phi_1 -2\phi_3 + 2\phi_2}\right).
    \end{align}
    It follows that
    \begin{align}
        \int_{P_{C^{\phi_3}} - \EE[X^{(C)}] - C^{\phi_4}}^{v_S} p_\mu(v, \CC_2)\,d\eta(v) &\le \int_{P_{C^{\phi_3}} - \EE[X^{(C)}] - C^{\phi_4}}^{v_S} 1\,d\eta(v)\\
        &= O\left(C^{-2\phi_1 -2\phi_3 + 2\phi_2}\right) = O\left(C^{-K(\beta,\gamma)}\right),
    \end{align}
    where we note that $-2\phi_1 -2\phi_3 + 2\phi_2 = -K(\beta,\gamma).$
    
    The result follows, as we have shown that both \eqref{eq:llama-1} and \eqref{eq:llama-2} are $O(C^{-K(\beta,\gamma)})$.
\end{proof}

Putting together \Cref{prop:goose-1} and \Cref{prop:goose-2}, we have that
\begin{align}
    \int_{-\infty}^{v_S} \Pr[\mu(v)\in C]\,d\eta(v) &\le S(\CC_1) + \int_{-\infty}^{v_S} p_\mu(v, \CC_2)\,d\eta(v)\\
    &= O\left(C^{-K(\beta,\gamma)}\right),
\end{align}
completing the proof for Case 2.
\section{Proof of Theorem \ref{thm:amplifying} (Full Amplification)}\label{sec:amplifying}

First observe that \Cref{thm:amplifying} is equivalent to the following result.

\begin{proposition}\label{thm2v2}
Let $\DD$ be long-tailed. Then, for all $C$ sufficiently large, for all $\epsilon>0$ and any $\mu\in M(\DD, C)$,
\begin{equation}
    \Pr[\mu(v)\in C]\in (S-\epsilon, S+\epsilon).
\end{equation}
for all $v\in (v_\epsilon^-, v_\epsilon^+)$ for $v_\epsilon^-, v_\epsilon^+$ chosen such that $\eta(-\infty, v_\epsilon^-) = \eta((v_\epsilon, \infty)) = \epsilon$.
\end{proposition}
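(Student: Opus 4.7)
The strategy is to show $p_\mu(v)$ is approximately constant on $(v_\epsilon^-, v_\epsilon^+)$ and then pin down that constant as $S$ via market clearing. Throughout, write $q_c(v) := F(P_c - v)$ with $F(x) := \Pr_{X \sim \DD}[X > x]$, so that $p_\mu(v) = 1 - \prod_c (1 - q_c(v))$ is monotonically increasing in $v$.

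\textbf{Endpoint bounds from market clearing.} Since every student prefers matching to being unmatched, $\int p_\mu\,d\eta = S$. Combining this with monotonicity and the defining property of $v_\epsilon^\pm$ yields
\begin{equation}
    p_\mu(v_\epsilon^-) \le \frac{S}{1-\epsilon}, \qquad p_\mu(v_\epsilon^+) \ge \frac{S-\epsilon}{1-\epsilon},
\end{equation}
both close to $S$ for small $\epsilon$. It therefore suffices to establish $p_\mu(v_\epsilon^+) - p_\mu(v_\epsilon^-) \to 0$ as $C \to \infty$.

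\textbf{Cutoffs cannot be too low.} If any $P_c \le v_\epsilon^- - M_1$ with $M_1$ large, then $q_c(v) \ge F(-M_1) \approx 1$ for every $v \ge v_\epsilon^-$, pushing $p_\mu(v)$ close to $1$ on a set of $\eta$-mass $1-\epsilon$ and contradicting the endpoint bound (for $\epsilon$ sufficiently small). An analogous union-bound argument shows that for any fixed threshold $M$, the set $\CC_{\text{low}} := \{c: P_c \le M\}$ has cardinality at most a constant $K_0 = K_0(M, \epsilon, S)$ independent of $C$. Combined with the regularity condition $S_c \le \alpha/C$, the total capacity $S(\CC_{\text{low}}) \le K_0\alpha/C$ vanishes as $C \to \infty$.

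\textbf{Bounding the variation via long-tailedness.} Applying the elementary inequality $\prod a_c - \prod b_c \le \sum_c (a_c - b_c)$ for $a_c, b_c \in [0,1]$ with $a_c \ge b_c$ to $a_c = 1-q_c(v_\epsilon^-)$ and $b_c = 1-q_c(v_\epsilon^+)$,
\begin{equation}
    p_\mu(v_\epsilon^+) - p_\mu(v_\epsilon^-) \le \sum_c \bigl(q_c(v_\epsilon^+) - q_c(v_\epsilon^-)\bigr).
\end{equation}
Split the sum over $\CC_{\text{high}} := \{c: P_c > M\}$ and $\CC_{\text{low}}$. For $d := v_\epsilon^+ - v_\epsilon^-$, the long-tailed property (in the form $F(x+d)/F(x) \to 1$ as $x\to\infty$) lets us choose $M$ so that $q_c(v_\epsilon^+)/q_c(v_\epsilon^-) \le 1 + \delta$ for all $c \in \CC_{\text{high}}$. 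Combined with $\sum_c q_c(v_\epsilon^-) \le -\log(1 - p_\mu(v_\epsilon^-))$, which is bounded by the endpoint bound, the $\CC_{\text{high}}$ contribution is $O(\delta)$ and can be made arbitrarily small.

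\textbf{Main obstacle and conclusion.} The delicate part is controlling $\sum_{c\in \CC_{\text{low}}} (q_c(v_\epsilon^+) - q_c(v_\epsilon^-))$, since $q_c(v_\epsilon^+) - q_c(v_\epsilon^-)$ need not be small when $P_c - v_\epsilon^\pm$ is bounded, and a priori this sum is bounded only by $|\CC_{\text{low}}| \le K_0$. My plan for this step is to exploit the vanishing total capacity $S(\CC_{\text{low}}) = O(K_0/C)$ by further subdividing $\CC_{\text{low}}$ by cutoff height and arguing at each scale that either $q_c(v_\epsilon^-), q_c(v_\epsilon^+)$ are simultaneously close to $1$ (making their difference small by $q_c(v_\epsilon^+) \le 1$), or else the cutoff is already large enough that the long-tailed estimate applies. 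This is the main technical hurdle. Once $p_\mu(v_\epsilon^+) - p_\mu(v_\epsilon^-) = o(1)$ is established, monotonicity of $p_\mu$ together with the endpoint bounds immediately yields $|p_\mu(v) - S| < \epsilon$ uniformly for $v \in (v_\epsilon^-, v_\epsilon^+)$.
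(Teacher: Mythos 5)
Your plan captures the two essential ingredients from the paper's proof — endpoint bounds near $S$ from market clearing, and a long-tailed estimate showing that high-cutoff colleges distinguish $v_\epsilon^-$ from $v_\epsilon^+$ only multiplicatively — but you have correctly identified that the low-cutoff colleges are the crux, and your proposed resolution does not close the gap.

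The problem is that neither prong of your case analysis on $\CC_{\text{low}}$ covers the intermediate regime. A college with $P_c$ roughly between $v_\epsilon^-$ and $v_\epsilon^+$ can have, say, $q_c(v_\epsilon^-) \approx 0.3$ and $q_c(v_\epsilon^+) \approx 0.9$: $q_c(v_\epsilon^-)$ is not close to $1$, yet $P_c - v_\epsilon^-$ is $O(1)$ so the long-tailed estimate $F(x+d)/F(x)\to 1$ gives nothing. Even a single such college contributes $\Theta(1)$ to $\sum_{c\in\CC_{\text{low}}}\bigl(q_c(v_\epsilon^+)-q_c(v_\epsilon^-)\bigr)$, and your constant-cardinality bound $|\CC_{\text{low}}|\le K_0$ — while true, and in fact sharper than what the paper uses — cannot save this. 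The whole strategy of controlling $p_\mu(v_\epsilon^+)-p_\mu(v_\epsilon^-)$ by an additive sum of per-college differences does not go through for bounded-cutoff colleges.

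The paper avoids this by never trying to show the low-cutoff contribution is small in the difference $p_\mu(v_\epsilon^+)-p_\mu(v_\epsilon^-)$. Instead it decomposes $p_\mu(v,\CC_2)\le p_\mu(v)\le p_\mu(v,\CC_1)+p_\mu(v,\CC_2)$ with $\CC_1$ the $\epsilon C$ lowest-cutoff colleges, handles $\CC_2$ exactly as you do (\Cref{prop:lt-approx-F2}), and then bounds $p_\mu(v,\CC_1)$ itself via a second market-clearing argument: a student who can afford some college in $\CC_1$ but none in $\CC_2$ necessarily \emph{matches} to $\CC_1$, whose capacity is $\le\epsilon\alpha$. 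Since $1-p_\mu(v,\CC_2)$ is bounded below by roughly $1-S$ for $v$ in the target range, this forces $p_\mu(v,\CC_1)=O(\sqrt{\epsilon})$ — but, crucially, only for $v$ below a threshold $v^*$ with $\eta(v^*,v_+)=\sqrt{\epsilon}$ (\Cref{prop:lt-small-firms}). That shrinkage of the interval is the price of controlling $\CC_1$, and it is what your proposal is missing: you would need some such leverage that converts the tiny \emph{capacity} of the low-cutoff colleges into a bound on the \emph{affordability} probability, and that conversion only works on a slightly trimmed value range.

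Minor point: in "Cutoffs cannot be too low" you say $q_c(v)\ge F(-M_1)\approx 1$. This is fine as $M_1\to\infty$ because any proper distribution satisfies $F(x)\to 1$ as $x\to-\infty$; just note this is independent of the long-tailed hypothesis.
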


To see why this is equivalent, observe that for any $v$, when $\epsilon$ is taken sufficiently small, $v\in (v_\epsilon^-, v_\epsilon^+).$ Then for $C$ sufficiently large, $\Pr[\mu(v)\in C]\in (S-\epsilon, S+\epsilon).$

We begin with some high-level intuition for why long-tailed distributions induce this behavior. Consider two students with values $v_1$ and $v_2$. Suppose that $v_2 = v_1 - d$ for some $d > 0$. Now consider the likelihood that each applicant can afford a firm with cutoff $P$. These probabilities are $\Pr[v_1 + X > P]$ and $\Pr[v_2 + X > P]$ respectively. Consider the ratio
\begin{equation}
    \frac{\Pr[v_2 + X > P]}{\Pr[v_1 + X > P]} = \frac{\Pr[X > P - v_1 + d]}{\Pr[X > P - v_1]}.
\end{equation}
Then since $\DD$ is long-tailed,
\begin{equation}
    \lim_{P\rightarrow \infty} \frac{\Pr[X > P - v_1 + d]}{\Pr[X > P - v_1]} = 1.
\end{equation}
This tells us that if a college's cutoff is large, two students (with different values) can afford the college with approximately equal probability. To leverage this fact, we will show that almost all colleges must have large cutoffs. Roughly speaking, this is true because if there were a large number of colleges with small cutoffs, then more students would be able to afford these firms than there is total capacity.

\subsection*{Proof of \Cref{thm2v2}}
We consider a long-tailed probability distribution $\DD$. To prove \Cref{thm2v2}, we show the following equivalent statement: There exists a constant $\sigma > 0$ such that for all $C$ sufficiently large, if $\mu \in M(\DD, C),$ 
\begin{equation}\label{eq:equiv-lt}
    p_\mu(v) \in \left(S - (1 + \alpha)\epsilon - (1 - e^{-2\epsilon\sigma}), S + \epsilon + (1 - e^{-2\epsilon\sigma}) + \frac{\alpha\sqrt{\epsilon}}{1 - S - \epsilon - (1 - e^{-2\epsilon\sigma})}\right)
\end{equation}
for all $v\in (v_-,v^*)$, where $v_-, v_+$ are chosen such that $\eta(-\infty,v_-) = \eta(v_+,\infty) = \frac{\epsilon}{2}$ and $v^*$ is chosen such that $\eta(v^*,v_+)=\sqrt{\epsilon}.$ Indeed, the result follows since the interval in \eqref{eq:equiv-lt} is a vanishingly small interval around $S$ as $\epsilon\rightarrow 0$, and since every $v$ is contained in $(v_-, v^*)$ for $\epsilon$ sufficiently small.
\\
\\
Consider $\mu \in M(\DD, C)$, with cutoffs $P_1\le P_2 \le \cdots \le P_C.$ Define the following two sets of colleges: 
\begin{align}
    \FF_1 &:= \{1, 2, \cdots, \epsilon C\}\\
    \FF_2 &:= \{\epsilon C + 1, \epsilon C + 2, \cdots, C\}.
\end{align}
Then observe that
\begin{equation}\label{eq:lt-bounds}
    p_\mu(v, \CC_2) \le p_\mu(v) \le p_\mu(v, \CC_1) + p_\mu(v, \CC_2). 
\end{equation}
We use \eqref{eq:lt-bounds} to show the desired result. To briefly outline our plan: First, we show that $\Pr[\mu(v)\in \FF_1]$ must be small for all $v$ except on a set of small $\eta$-measure. Since $\FF_1$ is a small fraction of the firms in the market, $p_\mu(v, \CC_1)$ can be thought of as an ``error term.'' Second, and more substantially, we show that $p_\mu(v, \CC_2) \approx S$ for all $v$ except on a set of small $\eta$-measure. Then the result will follow by applying the bounds in \eqref{eq:lt-bounds}.
\\
\\
We focus first on $p_\mu(v, \CC_2)$. We will show the following result.
\begin{proposition}\label{prop:lt-large-firms}
There exists a constant $\sigma > 0$ such that for $C$ sufficiently large,
    \begin{equation}\label{eq:lt-large-firms}
    p_\mu(v, \CC_2) \in (S - (1 + \alpha)\epsilon - (1 - e^{-2\epsilon\sigma}), S + \epsilon + (1 - e^{-2\epsilon\sigma})).
    \end{equation}
for all $v\in (v_-, v_+).$
\end{proposition}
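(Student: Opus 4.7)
My proof plan for \Cref{prop:lt-large-firms} rests on three core ingredients: (i) a market-clearing argument showing that the smallest cutoff in $\CC_2$, namely $P_{\epsilon C + 1}$, diverges as $C\to\infty$; (ii) the long-tailed property of $\DD$, which lets us translate ``large cutoffs'' into ``$q_c(v):=\Pr[X>P_c-v]$ is nearly insensitive to $v$ on the bulk $(v_-,v_+)$''; and (iii) independence of the noise terms $X_c$ across $c$, which gives the factorization $1-p_\mu(v,\CC_2)=\prod_{c\in\CC_2}(1-q_c(v))$ and lets us pass from small multiplicative changes in each $q_c(v)$ to controlled changes in $p_\mu(v,\CC_2)$.

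The first step is to show that for any threshold $T$, eventually every $c\in\CC_2$ satisfies $P_c>T$. I would argue by contradiction: if more than $\epsilon C$ colleges had cutoff at most $T$, then by independence the probability that a student with value $v$ can afford at least one of them is at least $1-\Pr[X\le T-v]^{\epsilon C}$, which tends to $1$ as $C\to\infty$ whenever $\Pr[X>T-v]>0$. Integrating against $\eta$, the mass of students who can afford some college in this subset tends to $1$. But these students either match inside that subset (contributing at most $\epsilon\alpha$, by $S_c\le\alpha/C$) or match to a college they prefer in the complement (contributing at most the total capacity of the complement, which is at most $S$); combining these with $S<1$ forces a contradiction for $C$ large enough, giving the claim. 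The threshold $T$ can then be chosen, using the long-tail condition applied to $d=v_+-v_-$, so that the multiplicative comparison needed in the next step holds.

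The second step uses the long-tailed property to extract a constant $\sigma>0$ (depending only on $\DD$) such that for every $c\in\CC_2$ and every $v,v'\in(v_-,v_+)$, one has $q_c(v)\le(1+2\epsilon\sigma)q_c(v')$ (this is the quantitative version of $\Pr[X>x+d]/\Pr[X>x]\to 1$ for $x\to\infty$). Via $\log(1-q)=-q+O(q^2)$ and the fact that the $q_c$'s in $\CC_2$ are small (Step~1), I would combine this with the factoring in ingredient~(iii) to show $\log\bigl((1-p_\mu(v,\CC_2))/(1-p_\mu(v',\CC_2))\bigr)\ge -2\epsilon\sigma\lambda(v')+o(1)$, where $\lambda(v):=\sum_{c\in\CC_2}q_c(v)$. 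Since $\lambda(v')$ can be bounded by a constant (using, e.g., $\lambda(v')\le-\log(1-p_\mu(v',\CC_2))\le-\log(1-S)+o(1)$ from the upper bound established in the next step), this yields the near-constancy $|p_\mu(v,\CC_2)-p_\mu(v',\CC_2)|\le 1-e^{-2\epsilon\sigma}$ for $v,v'$ in the bulk.

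To pin the near-constant value to $\approx S$, I would combine two consequences of the market-clearing identity $\int p_\mu(v)\,d\eta(v)=S$. The upper bound uses $p_\mu(v,\CC_2)\le p_\mu(v)$: if $p_\mu(\cdot,\CC_2)$ is within $(1-e^{-2\epsilon\sigma})$ of a constant on the bulk (of $\eta$-mass $1-\epsilon$), then that constant is at most $S/(1-\epsilon)+(1-e^{-2\epsilon\sigma})\le S+\epsilon+(1-e^{-2\epsilon\sigma})$. The lower bound uses $p_\mu(v)=p_\mu(v,\CC_1)+p_\mu(v,\CC_2)-p_\mu(v,\CC_1)p_\mu(v,\CC_2)$ together with the key estimate $\int p_\mu(v,\CC_1)(1-p_\mu(v,\CC_2))\,d\eta(v)\le S(\CC_1)\le\epsilon\alpha$: the integrand lower-bounds the mass of students who afford $\CC_1$ but not $\CC_2$, and such students can only match inside $\CC_1$, whose total capacity is at most $\epsilon\alpha$. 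Using the near-constancy of $p_\mu(\cdot,\CC_2)$ from Step~2, this rearranges into a lower bound of the form $S-(1+\alpha)\epsilon-(1-e^{-2\epsilon\sigma})$.

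The main obstacle I anticipate is Step~1: turning ``many low-cutoff colleges'' into a genuine contradiction with the total capacity requires carefully separating the mass that matches inside the low-cutoff set from the mass that affords it but matches elsewhere, and the quantitative threshold $T$ produced there must be strong enough to drive the long-tail estimate of Step~2. A secondary delicate point is the lower bound: the quantity $\int p_\mu(v,\CC_1)\,d\eta$ is not small in general (high-$v$ students can easily afford low-cutoff colleges), and the $(1+\alpha)\epsilon$ loss in the statement reflects that we can only control it after weighting by $(1-p_\mu(v,\CC_2))$, which is only $\approx 1-S$ on the bulk thanks to the upper bound proved first.
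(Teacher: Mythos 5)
Your plan is essentially the paper's own argument: establish that all $\CC_2$-cutoffs diverge via a market-clearing contradiction, invoke the long-tail property to make $q_c(v)=\Pr[X>P_c-v]$ nearly $v$-independent on $(v_-,v_+)$, use the product formula $1-p_\mu(v,\CC_2)=\prod_{c\in\CC_2}(1-q_c(v))$ to deduce near-constancy (paper's Lemma on unbounded cutoffs, Lemma ``apple,'' and \Cref{prop:lt-approx-F2}), and then pin the constant near $S$ via $\int p_\mu(v,\CC_2)\,d\eta\le S$, $\int p_\mu(v,\CC_2)\,d\eta\ge S-S(\CC_1)$, and monotonicity of $p_\mu(\cdot,\CC_2)$.

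The one substantive difference is in how you control the aggregate tail mass $\lambda(v)=\sum_{c\in\CC_2}q_c(v)$. The paper proves a \emph{per-college} bound $\Pr[v_+ + X>P_c]\le\sigma/C$ via a second market-clearing contradiction (if the $(\epsilon C+1)$-th smallest cutoff were too low, the $\FF_1$-colleges would admit more than $S$ mass), then multiplies the resulting per-factor bounds directly. You instead propose to bound $\lambda(v')\le -\log(1-p_\mu(v',\CC_2))\le -\log(1-S)+o(1)$ directly from the upper bound on $p_\mu$ and pass through logarithms; this works provided you also note $\max_c q_c\to 0$ (from Step~1) so that the quadratic remainders $\sum_c O(q_c^2)\le(\max_c q_c)\lambda$ vanish. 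Your route is a bit slicker in that it avoids the second contradiction lemma, at the cost of being slightly more careful about the $\log$-expansion remainder. Two small cautions: your quantitative long-tail statement $q_c(v)\le(1+2\epsilon\sigma)q_c(v')$ conflates two roles---the long-tail ratio can be taken arbitrarily close to $1$ without any $\sigma$, while $\sigma$ should be reserved for the $\lambda$ (or $\sigma/C$) bound; and to avoid circularity, note that the upper bound $p_\mu(v_-,\CC_2)\le S/(1-\epsilon)$ follows from monotonicity alone, \emph{before} near-constancy, which is the order in which the paper deploys it.
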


Observe that by using the bounds in \Cref{prop:lt-small-firms} and \Cref{prop:lt-large-firms} in combination with \eqref{eq:lt-bounds}, we obtain the desired result \eqref{eq:equiv-lt}. Therefore, it suffices to show \Cref{prop:lt-large-firms}, which we devote the remainder of this section towards.

\paragraph{Proof of \Cref{prop:lt-large-firms}.} 
We first show a few useful lemmas. The first lemma demonstrates that as $C$ is taken large, almost all college cutoffs are must also be large.

\begin{lemma}\label{lem:unbounded-cutoffs}
    For any $P\in \RR$, $P_{\epsilon C} > P$ for all $C$ sufficiently large.
\end{lemma}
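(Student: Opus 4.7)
The plan is to argue by contradiction. Suppose that for some fixed $P \in \RR$ there are arbitrarily large values of $C$ with $P_{\epsilon C} \le P$; then at least $\epsilon C$ colleges have cutoff at most $P$. The strategy is to show that this forces too many students to be able to afford some college, which, combined with the hypothesis that every student prefers matching to being unmatched, forces the matched mass to exceed the total capacity $S$.

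To carry this out, I would first fix $v_0 \in \RR$ with $\eta([v_0,\infty)) > S$; this is possible because $\eta$ is a probability measure and $S < 1$. Since $\DD$ is long-tailed, its right tail is unbounded, so
\[
q := \Pr_{X \sim \DD}[X \le P - v_0] < 1.
\]
For any value $v \ge v_0$ and any college $c$ with $P_c \le P$, one has $\Pr[v + X_c \le P_c] = \Pr[X_c \le P_c - v] \le q$. By independence of the $X_c$ across colleges, the probability that a student with value $v \ge v_0$ fails to afford any of the $\epsilon C$ colleges whose cutoffs lie below $P$ is at most $q^{\epsilon C}$, which tends to $0$ as $C \to \infty$.

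Next I would translate this into a statement about the matched mass. Because every student strictly prefers matching to being unmatched, a student who can afford \emph{any} college necessarily matches. Hence the measure of matched students is bounded below by the measure of students with value at least $v_0$ who can afford some college in $\FF_1$:
\[
S \;=\; \int_\Theta \Pr[\mu(\theta) \in C]\, d\rho(\theta) \;\ge\; (1 - q^{\epsilon C}) \, \eta([v_0,\infty)).
\]
Letting $C \to \infty$ gives $S \ge \eta([v_0,\infty)) > S$, a contradiction. Hence no fixed $P$ can bound $P_{\epsilon C}$ along a subsequence of $C$'s, proving the lemma.

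The only conceptual subtlety is the passage from ``a large fraction of students can afford a cheap college'' to ``the capacity constraint is violated''; this is where the ``no outside option'' assumption (students prefer any match over being unmatched) does the real work, converting an affordability statement into a matching statement. Everything else is a union/product bound combined with the defining property of long tails (positive mass arbitrarily far to the right). Notably, neither the regularity parameter $\gamma$ nor the bound $\alpha$ on individual capacities is used.
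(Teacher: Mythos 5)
Your proposal is correct and follows essentially the same contradiction argument as the paper: assume $\epsilon C$ cutoffs stay below $P$, use unboundedness of the long tail to force the mass of students who can afford a college in $\FF_1$ (and hence who must match) above the total capacity $S$. The only cosmetic difference is that you fix a concrete $v_0$ with $\eta([v_0,\infty)) > S$ and use the explicit bound $1 - q^{\epsilon C}$, whereas the paper integrates over all $v$ and invokes convergence of $\int \Pr[v + X^{(\epsilon C)} > P]\,d\eta(v) \to 1$; these are interchangeable.
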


\begin{proof}
    Suppose otherwise. Then there exists $P$ such that there exists $C$ arbitrarily large and $\mu \in M(\DD, C)$ with cutoffs $P_1\le P_2\le \cdots \le P_C$ such that $P_{\epsilon C} \le P$.
    Then we show that the number of applicants who can afford a firm in $\FF_1$ exceeds the total capacity $S$. Indeed,
    \begin{align}
        \int_{-\infty}^\infty p_\mu(v, \CC_1)\,d\eta(v)
        &= \int_{-\infty}^\infty \Pr[v + X_c > P_c\text{ for some $c\in \FF_1$}]\,d\eta(v)\\
        &\ge \int_{-\infty}^\infty \Pr[v + X^{(\epsilon C)} > P_{\epsilon C}]\,d\eta(v)\\
        &\ge \int_{-\infty}^\infty \Pr[v + X^{(\epsilon C)} > P]\,d\eta(v).
    \end{align}
    Since $\DD$ is unbounded from above, the final integral approaches $1$ as $C$ grows large. So for $C$ sufficiently large,
    \begin{equation}
        \int_{-\infty}^\infty p_\mu(v, \CC_1)\,d\eta(v) > S,
    \end{equation}
    which contradicts the fact that the total capacity of colleges is $S$.
\end{proof}

We now use \Cref{lem:unbounded-cutoffs} to show that the probability that applicants with values $v_-$ and $v_+$ have approximately equal chances of affording a college in $\FF_2$ when $C$ is sufficiently large.

\begin{lemma}\label{lem:apple}
    For $C$ sufficiently large,
    \begin{equation}
        \frac{\Pr[v_- + X > P_c]}{\Pr[v_+ + X > P_c]} > 1 - \epsilon
    \end{equation}
    for all $c\in \FF_2.$
\end{lemma}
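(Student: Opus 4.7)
}
The plan is to directly invoke the long-tailed property of $\DD$ with the fixed shift $d := v_+ - v_- > 0$, and then use \Cref{lem:unbounded-cutoffs} to ensure that every cutoff in $\FF_2$ is large enough for the limiting behavior to kick in uniformly.

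First I would rewrite the ratio by substituting $x = P_c - v_+$ (note $v_+ > v_-$, so $v_- + X > P_c$ is a strictly \emph{harder} event than $v_+ + X > P_c$, giving a ratio less than $1$):
\begin{equation}
    \frac{\Pr[v_- + X > P_c]}{\Pr[v_+ + X > P_c]}
    = \frac{\Pr[X > P_c - v_-]}{\Pr[X > P_c - v_+]}
    = \frac{\Pr[X > (P_c - v_+) + d]}{\Pr[X > P_c - v_+]}.
\end{equation}
Since $v_-$ and $v_+$ are defined only in terms of $\eta$ and $\epsilon$, the gap $d$ is a fixed positive constant independent of $C$.

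Next I would apply the definition of long-tailedness with this fixed $d$: the limit $\lim_{x\to\infty}\Pr[X>x+d\mid X>x]=1$ yields a threshold $x_0=x_0(\epsilon,d)$ such that for all $x>x_0$,
\begin{equation}
    \frac{\Pr[X > x + d]}{\Pr[X > x]} > 1 - \epsilon.
\end{equation}
It therefore suffices to guarantee $P_c - v_+ > x_0$ for every $c \in \FF_2 = \{\epsilon C+1,\ldots,C\}$.

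Finally, since $P_c \ge P_{\epsilon C+1} \ge P_{\epsilon C}$ for all $c\in \FF_2$, \Cref{lem:unbounded-cutoffs} applied with the constant $P := x_0 + v_+$ gives that $P_{\epsilon C} > x_0 + v_+$ for all $C$ sufficiently large, so $P_c - v_+ > x_0$ uniformly over $\FF_2$. Combining the two displays completes the proof.

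The main obstacle is simply the bookkeeping: ensuring the threshold $x_0$ can be chosen \emph{before} picking $C$ (so that it does not depend on $c$), which works here because $d$ is a fixed constant and the long-tailed limit provides a single $x_0$ that handles every $c \in \FF_2$ at once once the cutoffs are large enough via \Cref{lem:unbounded-cutoffs}.
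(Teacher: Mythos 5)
Your proof is correct and follows essentially the same approach as the paper: invoke the long-tailed property to obtain a threshold beyond which the ratio exceeds $1-\epsilon$, then use \Cref{lem:unbounded-cutoffs} together with $P_c \ge P_{\epsilon C}$ for $c \in \FF_2$ to push every relevant cutoff past that threshold for $C$ large. Your version is slightly more explicit about the substitution $x = P_c - v_+$ and the choice of threshold $x_0$, but the argument is the same.
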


\begin{proof}
    For $P$ sufficiently large,
    \begin{equation}
        \frac{\Pr[v_- + X > P]}{\Pr[v_+ + X > P]} > 1 - \epsilon
    \end{equation}
    since $\DD$ is long-tailed. By \Cref{lem:unbounded-cutoffs}, for any $P\in \RR$, $P_{\epsilon C} > P$ for all $C$ sufficiently large. Observing that $P_c\ge P_{\epsilon C}$ for all $c\in \FF_2,$ the result follows.
\end{proof}

\Cref{lem:apple} shows that the probability of affording a firm in $\FF_2$ is approximately the same for $v_-$ and $v_+$ We require a slightly stronger result: that the probability of affording \textit{at least one} college in $\FF_2$ is approximately the same. To show this, we also require the following bound.

\begin{lemma}
    There exists a constant $\sigma>0$ such that for $C$ sufficiently large,
    \begin{equation}
        \Pr[v_+ + X > P_c] \le \frac{\sigma}{C}
    \end{equation}
    for all $c\in \FF_2.$
\end{lemma}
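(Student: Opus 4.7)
The plan is to argue by contradiction. Suppose that $\Pr[v_+ + X > P_c] > \sigma/C$ for some $c \in \FF_2$. Since the cutoffs are sorted and $\epsilon C + 1$ is the smallest index in $\FF_2$, monotonicity of $\Pr[v_+ + X > \cdot]$ in the cutoff immediately gives $\Pr[v_+ + X > P_{\epsilon C + 1}] > \sigma/C$. Applying Lemma~\ref{lem:apple} at the single boundary college $c = \epsilon C + 1 \in \FF_2$ then yields $\Pr[v_- + X > P_{\epsilon C + 1}] > (1-\epsilon)\sigma/C$; propagating this down the sorted cutoffs gives $\Pr[v_- + X > P_c] > (1-\epsilon)\sigma/C$ for every $c \le \epsilon C + 1$; and monotonicity in the student value extends the bound to all $v \ge v_-$.

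Next, I would use independence of $X_1, \ldots, X_{\epsilon C + 1}$ to bound the probability that a student of value $v \ge v_-$ can afford \emph{no} college in $\{1, \ldots, \epsilon C + 1\}$:
\begin{equation}
    \prod_{c=1}^{\epsilon C + 1} \bigl(1 - \Pr[v + X > P_c]\bigr) \le \left(1 - \frac{(1-\epsilon)\sigma}{C}\right)^{\epsilon C + 1} = e^{-(1-\epsilon)\epsilon\sigma} + o(1)
\end{equation}
as $C\to \infty$. Hence $p_\mu(v) \ge 1 - e^{-(1-\epsilon)\epsilon\sigma} - o(1)$ uniformly over $v \ge v_-$.

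Finally, I would invoke market-clearing, $\int p_\mu(v)\,d\eta(v) = S$, together with $\eta((v_-, \infty)) = 1 - \epsilon/2$:
\begin{equation}
    S \ge \int_{v_-}^\infty p_\mu(v)\,d\eta(v) \ge \bigl(1 - e^{-(1-\epsilon)\epsilon\sigma} - o(1)\bigr)\left(1 - \frac{\epsilon}{2}\right).
\end{equation}
Whenever $\epsilon < 2(1-S)$, picking $\sigma$ as any constant exceeding $-\log\!\bigl(1 - S/(1 - \epsilon/2)\bigr)/(\epsilon(1-\epsilon))$ (which depends only on $\epsilon$ and $S$) makes the right-hand side strictly greater than $S$ for all $C$ sufficiently large, a contradiction.

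The main obstacle is the asymmetry in Lemma~\ref{lem:apple}: its long-tail ratio bound is only guaranteed on colleges in $\FF_2$, so it cannot be invoked directly on the colleges in $\FF_1$ whose cutoffs also enter the product above. The trick that makes the argument go through is to apply Lemma~\ref{lem:apple} only at the single boundary cutoff $P_{\epsilon C + 1}$ and then rely on plain cutoff monotonicity to push the bound down to all lower-indexed colleges uniformly.
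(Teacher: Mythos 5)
Your proof is correct and takes essentially the same route as the paper: assume the bound fails, show that a student of value $v_-$ (and hence all students with value in a large-measure set) can afford at least one low-cutoff college with probability bounded away from $S$, and contradict the capacity constraint $\int p_\mu\,d\eta = S$. One small point in your favor: you apply Lemma~\ref{lem:apple} at the boundary college $\epsilon C + 1 \in \FF_2$, which is strictly within that lemma's stated scope, and then push the bound down the sorted cutoffs; the paper instead applies the same ratio bound at $P_{\epsilon C}$ (the top of $\FF_1$), which technically falls outside the stated scope of Lemma~\ref{lem:apple}, though it is justified by the same unbounded-cutoffs argument. Your constants differ trivially (you integrate over $(v_-,\infty)$ with mass $1-\epsilon/2$ rather than $(v_-,v_+)$ with mass $1-\epsilon$), but the conclusion is the same.
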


\begin{proof}
    Assume otherwise, such that such that for all $\sigma > 0$, there exists $C$ arbitrarily large and $\mu \in M(\DD, C)$ with cutoffs $P_1\le P_2\le \cdots \le P_C$ such that
    \begin{equation}
        \Pr[v_+ + X > P_{\epsilon C + 1}] > \frac{\sigma}{C}.
    \end{equation}

    Then if we take $C$ and $\sigma$ sufficiently large, and choose such a $\mu,$ we have that
    \begin{align}
        \int_{-\infty}^\infty p_\mu(v, \CC_1)\,d\eta(v)
        &= \int_{v_-}^{v_+} \Pr[v + X_c > P_c\text{ for some $c\in \FF_1$}]\,d\eta(v)\\
        &\ge (1 - \epsilon) \Pr[v_- + X_c > P_{\epsilon c}\text{ for some $c\in \FF_1$}] \label{eq:building-1}\\
        &= (1-\epsilon)\left( 1 - \left(1 - \Pr[v_- + X > P_{\epsilon m}]\right)^{|\FF_1|}\right) \label{eq:building-2}\\
        &> (1-\epsilon)\left(1 - \left(1 - (1-\epsilon)\Pr[v_+ + X > P_{\epsilon C}]\right)^{|\FF_1|}\right) \label{eq:building-3}
    \end{align}
    where \eqref{eq:building-1} follows by noting that $\eta(v_-, v_+)=1-\epsilon$ and that $\Pr[v + X_c > P_c\text{ for some $c\in \FF_1$}] \ge \Pr[v_- + X_c > P_c\text{ for some $c\in \FF_1$}]$ for $v\in (v_-,v_+).$ \eqref{eq:building-3} follows from taking $C$ sufficiently large and applying \Cref{lem:apple}. Continuing,
    \begin{align}
        (1-\epsilon)\left(1 - \left(1 - (1-\epsilon)\Pr[v_+ + X > P_{\epsilon C}]\right)^{|\FF_1|}\right)
        &> (1-\epsilon)\left(1 - \left(1 - \frac{(1-\epsilon)\sigma}{C}\right)^{|\FF_1|}\right)\\
        &> (1-\epsilon)\left(1 - \exp\left[- \frac{|\FF_1|(1-\epsilon)\sigma}{C}\right]\right)\\
        &= (1-\epsilon)\left(1 - \exp\left[-\epsilon(1-\epsilon)\sigma\right]\right)\\
        &> S
    \end{align}
    for $\sigma$ sufficiently large when $\epsilon < 1 - S$. This provides the desired contradiction, since the number of applicants who can afford a college in $\FF_2$ cannot exceed the total capacity of colleges.
\end{proof}

We may now show that the probability $v_-$ and $v_+$ can afford at least one college in $\FF_2$ is approximately the same when $C$ is large.
\begin{proposition}\label{prop:lt-approx-F2}
For $C$ sufficiently large,
\begin{equation}
    p_\mu(v_+, \CC_2) - p_\mu(v_-, \CC_2) < 1 - e^{-2\epsilon\sigma}.
\end{equation}
\end{proposition}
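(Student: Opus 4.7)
The plan is to exploit the independence of the noise variables $\{X_c\}_{c \in \FF_2}$ to rewrite the difference of probabilities as a difference of products, and then control that difference via a logarithmic estimate that feeds directly off the two preceding lemmas. Concretely, setting $q_c^\pm := \Pr[v_\pm + X > P_c]$, independence yields
\[
p_\mu(v_+, \CC_2) - p_\mu(v_-, \CC_2) \;=\; \prod_{c \in \FF_2}(1 - q_c^-) \;-\; \prod_{c \in \FF_2}(1 - q_c^+).
\]
Since $\prod_{c}(1 - q_c^-) \le 1$, I would bound this above by $1 - \prod_{c \in \FF_2}\frac{1 - q_c^+}{1 - q_c^-}$, after which the whole task reduces to lower-bounding that single product.

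For that lower bound, I would take logarithms: writing $\delta_c := q_c^+ - q_c^- \ge 0$ and using the inequality $\log(1-y) \ge -y/(1-y)$ valid for $y \in [0,1)$,
\[
\log \prod_{c\in\FF_2}\frac{1-q_c^+}{1-q_c^-} \;=\; \sum_{c\in\FF_2} \log\!\left(1 - \frac{\delta_c}{1-q_c^-}\right) \;\ge\; -\sum_{c\in\FF_2} \frac{\delta_c}{1-q_c^+}.
\]
Now I would plug in the two lemmas. \Cref{lem:apple} gives $q_c^- \ge (1-\epsilon)q_c^+$, hence $\delta_c \le \epsilon q_c^+$; the second lemma gives $q_c^+ \le \sigma/C$, which (for $C$ large) forces $1/(1-q_c^+) \le 2$ and also $\sum_c q_c^+ \le |\FF_2|\cdot \sigma/C \le \sigma$. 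Stringing these together,
\[
\sum_{c\in\FF_2}\frac{\delta_c}{1 - q_c^+} \;\le\; 2 \sum_{c\in\FF_2}\delta_c \;\le\; 2\epsilon \sum_{c\in\FF_2} q_c^+ \;\le\; 2\epsilon\sigma,
\]
so $\prod_{c\in\FF_2}\frac{1-q_c^+}{1-q_c^-} \ge e^{-2\epsilon\sigma}$, which gives the claimed bound after rearrangement.

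The main obstacle is getting the constant in the exponent to come out to exactly $2$ rather than something larger: the naive estimate $\log(1-y)\ge -2y$ valid for $y \le 1/2$ would yield $1 - e^{-4\epsilon\sigma}$, which is strictly weaker. The sharper inequality $\log(1-y) \ge -y/(1-y)$ is essential because it lets the factor $1/(1-q_c^-)$ coming from applying the estimate cancel against the factor $1/(1-q_c^+)$ that one wishes to bound, leaving only a single factor of $2$ from $1/(1-q_c^+) \le 2$. Beyond this bit of careful bookkeeping, the argument is mechanical: everything else consists of combining the already-proved bounds $\delta_c \le \epsilon q_c^+$ and $\sum q_c^+ \le \sigma$.
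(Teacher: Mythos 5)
Your argument is correct and matches the paper's proof in all essential respects: you use the same factorization $p_\mu(v_+,\CC_2)-p_\mu(v_-,\CC_2) \le 1 - \prod_{c\in\FF_2}\frac{1-q_c^+}{1-q_c^-}$ and the same two inputs (Lemma~\ref{lem:apple} giving $q_c^- \ge (1-\epsilon)q_c^+$, and the $q_c^+ \le \sigma/C$ bound). The only cosmetic difference is in the last step: the paper bounds each individual ratio below by $e^{-2\epsilon\sigma/C}$ and raises to the power $|\FF_2|\le C$, whereas you take logarithms, apply $\log(1-y)\ge -y/(1-y)$, and sum, which lands on the same $e^{-2\epsilon\sigma}$ bound.
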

\begin{proof}
We have that
\begin{align}
    p_\mu(v_+, \CC_2) - p_\mu(v_-, \CC_2)
    &= \prod_{c\in \FF_2}\Pr[v_- + X < P_c] - \prod_{c\in \FF_2}\Pr[v_+ + X < P_c]\\
    &= \left(1 - \prod_{c\in \FF_2}\frac{\Pr[v_+ + X < P_c]}{\Pr[v_- + X < P_c]}\right) \prod_{c\in \FF_2}\Pr[v_- + X < P_c]\\
    &< 1 - \prod_{c\in \FF_2}\frac{\Pr[v_+ + X < P_c]}{\Pr[v_- + X < P_c]}
\end{align}
From \Cref{lem:apple},
\begin{equation}
    \Pr[v_- + X > P_c] > (1-\epsilon)\Pr[v_+ + X > P_c]
\end{equation}
for all $c\in \FF_2$. Therefore,
\begin{align}
    \frac{\Pr[v_+ + X < P_c]}{\Pr[v_- + X < P_c]} &= \frac{1 - \Pr[v_+ + X > P_c]}{1 - \Pr[v_- + X > P_c]}\\
    &> \frac{1 - \Pr[v_+ + X > P_c]}{1 - (1-\epsilon)\Pr[v_+ + X > P_c]}.
\end{align}
Continuing,
\begin{align}
    \frac{1 - \Pr[v_+ + X > P_c]}{1 - (1-\epsilon)\Pr[v_+ + X > P_c]} &= 1 - \frac{\epsilon \Pr[v_+ + X > P_c]}{1 - (1-\epsilon) \Pr[v_+ + X > P_c]}\\
    &\ge 1 - \frac{\epsilon\frac{\sigma}{C}}{1 - (1-\epsilon)\frac{\sigma}{C}}\\
    &= 1 - \frac{\epsilon\sigma}{C - (1-\epsilon)\sigma}\\
    &> \exp\left[-\frac{2\epsilon\sigma}{C}\right]
\end{align}
where the last inequality holds when $C$ is sufficiently large. Then we have that
\begin{align}
    1 - \prod_{c\in \FF_2}\frac{\Pr[v_+ + X < P_c]}{\Pr[v_- + X < P_c]}
    &< 1 - \left(\exp\left[-\frac{2\epsilon\sigma}{C}\right]\right)^{|\FF_2|}\\
    &< 1 - e^{-2\epsilon\sigma}.
\end{align}
\end{proof}

We now use \Cref{prop:lt-approx-F2} to show \Cref{prop:lt-large-firms}, where primary remaining observation is that the average probability a student with value in $(v_-,v_+)$ can afford a college in $\FF_2$ must be approximately $S.$ Then, since the probabilities for $v$ in this interval must be about the same (by \Cref{prop:lt-approx-F2}), all the probabilities in this interval must be about $S$.

Formally, we observe that
\begin{align}
    (1 - \epsilon) p_\mu(v_-, \CC_2) \le \int_{v_-}^{v_+} p_\mu(v, \CC_2)\,d\eta(v) \le S,
\end{align}
so
\begin{equation}\label{eq:bbq-1}
    p_\mu(v_-, \CC_2) \le \frac{S}{1-\epsilon} < S + \epsilon,
\end{equation}
where the last inequality holds for $\epsilon$ sufficiently small. We also have that
\begin{align}
    p_\mu(v_+, \CC_2) &\ge \int_{v_-}^{v_+} \Pr[B(v)\cap \FF_2\neq \emptyset]\,d\eta(v)\\
    &\ge \int_{-\infty}^\infty \Pr[B(v)\cap \FF_2\neq \emptyset]\,d\eta(v) - \epsilon \\
    &\ge \int_{-\infty}^\infty \Pr[\mu(v)\cap \FF_2\neq \emptyset]\,d\eta(v) - \epsilon \label{eq:ginger}\\
    &= \left(S - \int_{-\infty}^\infty \Pr[\mu(v)\in \FF_1\neq \emptyset]\,d\eta(v)\right) - \epsilon\\
    &\ge S - \alpha\epsilon - \epsilon,
\end{align}
so
\begin{equation}\label{eq:bbq-2}
    p_\mu(v_+, \CC_2) \ge S - \alpha\epsilon - \epsilon.
\end{equation}
Then, since
\begin{equation}
    p_\mu(v_+, \CC_2) - p_\mu(v_-, \CC_2) < 1 - e^{-2\epsilon\sigma}
\end{equation}
(by \Cref{prop:lt-approx-F2}), we have that
\begin{align}
    p_\mu(v_+, \CC_2) &< p_\mu(v_-, \CC_2) + (1 - e^{-2\epsilon\sigma})\\
    &< S + \epsilon + (1 - e^{-2\epsilon\sigma})\label{eq:cucumber-1},
\end{align}
where the last line follows from \eqref{eq:bbq-1}, and that
\begin{align}
    \Pr[B(v_-)\cap \FF_2\neq \emptyset] &> \Pr[B(v_+)\cap \FF_2\neq \emptyset] - (1 - e^{-2\epsilon\sigma})\\
    &\ge S - \alpha\epsilon - \epsilon - (1 - e^{-2\epsilon\sigma}),\label{eq:cucumber-2}
\end{align}
where the last line follows from \eqref{eq:bbq-2}.
Finally, since 
\begin{equation}
    p_\mu(v_-, \CC_2) < p_\mu(v, \CC_2) < p_\mu(v_+, \CC_2),
\end{equation}
it follows from \eqref{eq:cucumber-1} and \eqref{eq:cucumber-2} that
\begin{equation}
    p_\mu(v, \CC_2) \in (S - \alpha\epsilon - \epsilon - (1 - e^{-2\epsilon\sigma}), S + \epsilon + (1 - e^{-2\epsilon\sigma})),
\end{equation}
as desired.
\\
\\
We now analyze $p_\mu(v, \CC_1)$.
\begin{proposition}\label{prop:lt-small-firms}
\begin{equation}
    p_\mu(v, \CC_1) \le \frac{\alpha\sqrt{\epsilon}}{1 - S - \epsilon - (1 - e^{-2\epsilon\sigma})}
\end{equation}
for all $v < v^*$ where $\eta((v^*, v_+))=\sqrt{\epsilon}.$
\end{proposition}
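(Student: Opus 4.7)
The plan is a capacity-budget argument that leverages independence of noise across the two groups of colleges. Since the noise terms $X_c$ are drawn i.i.d.\ across colleges, for any fixed $v$ the event that a student can afford some college in $\CC_1$ is independent of the event that they can afford some college in $\CC_2$. Moreover, because every student prefers any match over being unmatched, a student who can afford some college in $\CC_1$ but no college in $\CC_2$ necessarily matches to a college in $\CC_1$ (their demand being the most-preferred affordable college, which must then lie in $\CC_1$). Therefore, for a student $\theta=(v,\succ)$,
\begin{equation}
    \Pr[\mu(\theta)\in \CC_1] \ \ge\ p_\mu(v, \CC_1)\,\bigl(1 - p_\mu(v, \CC_2)\bigr),
\end{equation}
a quantity that depends only on $v$.

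Next I would integrate this inequality against $\rho$ (equivalently, against $\eta$, since the right side is a function of $v$ alone) and use the market-clearing identity together with the per-college capacity bound. Specifically, the total matched measure in $\CC_1$ equals $S(\CC_1)$, and since each college satisfies $S_c < \alpha/C$ and $|\CC_1|=\epsilon C$, we have $S(\CC_1)\le \alpha\epsilon$. This yields
\begin{equation}
    \alpha\epsilon \ \ge\ \int p_\mu(v, \CC_1)\,\bigl(1 - p_\mu(v, \CC_2)\bigr)\, d\eta(v).
\end{equation}
I would then restrict the integral to $v\in(v^*,v_+)$, which has $\eta$-measure $\sqrt{\epsilon}$ by definition of $v^*$. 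On this sub-interval, $(v^*,v_+)\subseteq(v_-,v_+)$ for $\epsilon$ small, so \Cref{prop:lt-large-firms} gives $1 - p_\mu(v,\CC_2)\ge 1 - S - \epsilon - (1 - e^{-2\epsilon\sigma})$; and $p_\mu(\cdot,\CC_1)$ is nondecreasing in $v$ (larger $v$ only makes $v+X_c>P_c$ easier for every realization of the noise), so $p_\mu(v,\CC_1)\ge p_\mu(v^*,\CC_1)$ throughout.

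Substituting both bounds into the restricted integral yields
\begin{equation}
    \alpha\epsilon \ \ge\ \sqrt{\epsilon}\cdot p_\mu(v^*, \CC_1)\cdot \bigl(1 - S - \epsilon - (1 - e^{-2\epsilon\sigma})\bigr),
\end{equation}
and rearranging gives precisely $p_\mu(v^*,\CC_1)\le \frac{\alpha\sqrt{\epsilon}}{1 - S - \epsilon - (1 - e^{-2\epsilon\sigma})}$. By monotonicity, the same bound applies to $p_\mu(v,\CC_1)$ for every $v<v^*$, as claimed. The main obstacle is the bookkeeping step of pairing the ``affords $\CC_1$ and not $\CC_2$ implies matches to $\CC_1$'' lower bound with the $p_\mu(v,\CC_2)$ upper bound of \Cref{prop:lt-large-firms} on a sub-interval chosen large enough ($\sqrt\epsilon$ rather than $\epsilon$) so that dividing by its $\eta$-measure still yields a vanishing bound; everything else is a direct budget calculation.
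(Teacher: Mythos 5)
Your proof is correct and follows essentially the same argument as the paper's: observe that students who afford some college in $\CC_1$ but none in $\CC_2$ must match into $\CC_1$ (so the product $p_\mu(v,\CC_1)(1-p_\mu(v,\CC_2))$ lower-bounds the match probability), integrate this over the sub-interval $(v^*,v_+)$ of $\eta$-measure $\sqrt{\epsilon}$, bound the integral by $S(\CC_1)\le\alpha\epsilon$, then use monotonicity of $p_\mu(\cdot,\CC_1)$ together with the upper bound on $p_\mu(\cdot,\CC_2)$ from \Cref{prop:lt-large-firms} to extract the pointwise bound at $v^*$ and extend it leftward. You make explicit the independence of the ``affords $\CC_1$'' and ``affords $\CC_2$'' events, which the paper uses implicitly in writing the product inside the integral; otherwise the two proofs match step for step.
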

\begin{proof}
Note that any student who cannot afford any college in $\CC_2$ but can afford a college in $\CC_1$ will be matched to a college in $\CC_1.$ Therefore, the measure of students with value in $(v_*, v_+)$ who match with a college in $\CC_1$ is at least
\begin{equation}\label{eq:monkey}
    \int_{v^*}^{v_+} (1 - p_\mu(v, \CC_2)) \cdot p_\mu(v, \CC_1)\,d\eta(v).
\end{equation}
We have from \eqref{eq:cucumber-1} that
\begin{equation}
    1 - p_\mu(v, \CC_2) \ge 1 - S - \epsilon - (1 - e^{-2\epsilon\sigma})
\end{equation}
for all $v\in (v^*, v_+)\subseteq (v_-, v_+).$ 

Also $S(\CC_1)\le \epsilon C\cdot \frac{\alpha}{C} = \epsilon\alpha.$ \eqref{eq:monkey} can be at most $S(\CC_1),$ so we have that
\begin{align}
    \epsilon\alpha &\ge \int_{v^*}^{v_+} (1 - p_\mu(v, \CC_2)) \cdot p_\mu(v, \CC_1)\,d\eta(v)\\
    &\ge \int_{v^*}^{v_+} \left(1 - S - \epsilon - (1 - e^{-2\epsilon\sigma})\right) \cdot p_\mu(v, \CC_1)\,d\eta(v)\\
    &\ge \eta((v^*,v_+)) \left(1 - S - \epsilon - (1 - e^{-2\epsilon\sigma})\right) \cdot p_\mu(v^*, \CC_1)\\
    &= \sqrt{\epsilon} \left(1 - S - \epsilon - (1 - e^{-2\epsilon\sigma})\right) \cdot p_\mu(v^*, \CC_1).
\end{align}
\end{proof}
It follows that
\begin{equation}
    p_\mu(v^*, \CC_1) \le \frac{\alpha\sqrt{\epsilon}}{1 - S - \epsilon - (1 - e^{-2\epsilon\sigma})}.
\end{equation}
The result follows since $p_\mu(v^*, \CC_1)$ is monotonically increasing in $v$.

\section{Proofs of Extended Results}\label{sec:proofs-extended}

We now turn to the proofs of \Cref{thm:attenuating-extended} and \Cref{thm:amplifying-extended}. Both proofs follow almost immediately from results we have already shown in the proofs of \Cref{thm:attenuating} and \Cref{thm:amplifying}.

\subsection{Proof of Theorem \ref{thm:attenuating-extended}}
Let $\DD$ be max-concentrating. We would like to show that for any $\epsilon > 0$, there exists $C(\epsilon)$ such that the following holds. Let $C$ be a $(\DD, \eta)$-coalition in an economy $E$ with stable matching $\mu$ such that $|C|>C(\epsilon).$ Then there exists $v'\in \RR$ and $C'\subseteq C$ with $\frac{|C'|}{|C|} > 1 - \epsilon$ such that
\begin{equation}\label{eq:penguin2-1}
    p_\mu(v, C') < \epsilon \qquad \text{for all $v < v'-\epsilon$}
\end{equation}
and
\begin{equation}\label{eq:penguin2-2}
    p_\mu(v, C') > 1 - \epsilon \qquad \text{for all $v > v'+\epsilon$}.
\end{equation}

Suppose, without loss of generality, that $C = \{1, 2, \cdots, C\}$, and as in the proof of \Cref{thm:attenuating}, let $P_1\le P_2\le \cdots \le P_C$ be the market-clearing cutoffs corresponding to $\mu$ at the colleges in the coalition $C$. Then we proceed in the two cases as in the proof of \Cref{thm:attenuating} (specifically, \Cref{thm1v2}). 

In the first case, the result follows directly from \Cref{prop:tomato}, where we set $C'=\CC_2$ and $v' = P^* - \EE[X^{C^{\phi_2}}]$. Indeed, \eqref{eq:penguin2-1} follows directly from \Cref{prop:tomato}(ii) and \eqref{eq:penguin2-2} follows directly from \Cref{prop:tomato}(i), where we note in the latter case that $p_\mu(v, C) > p_\mu(v, \CC_2).$ In the second case, the result follows directly from \Cref{prop:beet}, where we set $C'=\CC_2$ and $v' = P_{C^{\phi_3}} - \EE[X^{(C)}] - C^{\phi_4}$.

\subsection{Proof of Theorem \ref{thm:amplifying-extended}}
Let $\DD$ be long-tailed. We would like to show that for any $\epsilon > 0$, there exists $C(\epsilon)$ such that the following holds. Let $C$ be a $(\DD, \eta)-$coalition in an economy $E$ with stable matching $\mu$ such that $|C|>C(\epsilon).$ Then there exists $S'\in \RR$ and $C'\subseteq C$ with $\frac{|C'|}{|C|} > 1 - \epsilon$, such that
\begin{equation}\label{eq:poppin}
    |p_\mu(v, C') - S'| < \epsilon
\end{equation}
for all $v$ except on a set of $\eta$-measure at most $\epsilon.$ Equivalently, there exists a set $V$ such that $\eta(V) > 1- \epsilon$ such that \eqref{eq:poppin} holds for all $v\in V$.

Suppose, without loss of generality, that $C = \{1, 2, \cdots, C\}$, and as in the proof of \Cref{thm:attenuating}, let $P_1\le P_2\le \cdots \le P_C$ be the market-clearing cutoffs corresponding to $\mu$ at the colleges in the coalition $C$. We then proceed as in the proof of \Cref{thm:amplifying} (specifically, \Cref{thm2v2}). Taking $V = (v_-, v_+)$ and $C'=\CC_2$, the result follows directly from \Cref{prop:lt-approx-F2}, which shows that for $C$ sufficiently large,
\begin{equation}
    p_\mu(v_+, \CC_2) - p_\mu(v_-, \CC_2) < 1 - e^{-2\epsilon \sigma}.
\end{equation}
Then since $p_\mu(v, \CC_2)$ is monotonically increasing on the interval $[v_-, v_+],$ and since $\eta([v_-, v_+])$ can be taken arbitrarily close to $1$, the result follows.

\end{document}